\documentclass[10pt,journal,onecolumn,compsoc]{IEEEtran}
 
%
\ifCLASSOPTIONcompsoc
  \usepackage[nocompress]{cite}
\else
  \usepackage{cite}
\fi

\usepackage{hyperref}
\hypersetup{hidelinks}       
\usepackage{url}            
\usepackage{booktabs}       
\usepackage{amsfonts}       
\usepackage{microtype}      
\usepackage{lipsum}       
\usepackage{graphicx}
\usepackage{doi}
\usepackage{amssymb}
\usepackage{amsmath}
\usepackage{amsthm}
\usepackage[lined,linesnumbered,ruled,norelsize]{algorithm2e}
\usepackage{algorithmicx}
\usepackage{color}
\usepackage{multicol}
\usepackage{subfigure}
\usepackage{setspace}
\usepackage{geometry}
\geometry{left=2.5cm,right=2.5cm,top=2cm,bottom=2cm}

\newtheorem{lemma}{Lemma}

\newtheorem{theorem}{Theorem}

\newtheorem{assumption}{Assumption}



\begin{document}


\title{Harnessing Context for Budget-Limited Crowdsensing with Massive Uncertain Workers}


\author{Feng~Li,
        Jichao~Zhao,
        Dongxiao~Yu,
        Xiuzhen~Cheng,
        and~Weifeng~Lv
\IEEEcompsocitemizethanks{\IEEEcompsocthanksitem F. Li, J. Zhao, D. Yu and X. Cheng are with School of Computer Science and Technology, Shandong University, Qingdao 266237, China.\protect\\
E-mail: \{fli, dxyu, xzcheng\}@sdu.edu.cn, zhaojichao@mail.sdu.edu.cn
\IEEEcompsocthanksitem W. Lv is with School of Computer Science and Engineering, Beihang University, Beijing 100191, China. \protect\\
E-mail: lwf@nlsde.buaa.edu.cn
}
}

\IEEEtitleabstractindextext{%
\begin{abstract}
  Crowdsensing is an emerging paradigm of ubiquitous sensing, through which a crowd of workers are recruited to perform sensing tasks collaboratively. Although it has stimulated many applications, an open fundamental problem is how to select among a massive number of workers to perform a given sensing task under a limited budget. Nevertheless, due to the proliferation of smart devices equipped with various sensors, it is very difficult to profile the workers in terms of sensing ability. Although the uncertainties of the workers can be addressed by conventional \textit{Combinatorial Multi-Armed Bandit} (CMAB) framework through a trade-off between exploration and exploitation, we do not have sufficient allowance to directly explore and exploit the workers under the limited budget. Furthermore, since the sensor devices usually have quite limited resources, the workers may have bounded capabilities to perform the sensing task only few times, which further restricts our opportunities to learn the uncertainty. To address the above issues, we propose a \textit{Context-Aware Worker Selection} (CAWS) algorithm in this paper. By leveraging the correlation between the context information of the workers and their sensing abilities, CAWS aims at maximizing the expected cumulative sensing revenue efficiently with both budget constraint and capacity constraints respected, even when the number of the uncertain workers are massive. The efficacy of CAWS can be verified by rigorous theoretical analysis and extensive experiments.

\end{abstract}

\begin{IEEEkeywords}
Multi-Armed Bandits, worker selection, crowdsensing
\end{IEEEkeywords}}

\maketitle

\IEEEdisplaynontitleabstractindextext
\IEEEpeerreviewmaketitle

\section{Introduction} \label{sec:intro}
  Due to the proliferation of hand-held smart devices (e.g., smart phones, smart glasses, smart watches, etc) which are usually equipped with various sensors, the concept of crowdsensing has become a new paradigm for ubiquitous sensing \cite{KhanXAA-SURV13,CapponiFKFKB-COMST19}. Thousands or even millions of human crowds (a.k.a., \textit{workers}) can be engaged in a sensing task (e.g., traffic information collection, air quality surveillance, urban business survey, urban WiFi characterization, etc) with their sensor devices, and their collective contributions can be utilized to considerably improve the sensing quality across a wide spectrum of applications~\cite{CherianLGHW-MDM16,WangTLP-INFOCOM17,LiuLZMZ-IMWUT18,JiangZZLSMFWL-TMC21}.

  Although we have no need to deploy specialized sensor devices to conduct a sensing task by applying the crowdsensing paradigm and the overhead of data acquisition is thus considerably reduced, the requester of the sensing task is still constrained by a budget such that the requester only affords to recruit a limited number of workers. Therefore, how to select a subset of high-qualified workers from a large crowd is a very crucial issue for guaranteeing the accomplishment of the sensing task. There have been many existing studies exploring the combinatorial nature of the worker selection problem by assuming the workers' sensing abilities are known in advance \cite{SongLWMW-TVT14,LiLW-MASS15,PuCXF-INFOCOM16,YangLWW-TMC19}.

  Due to the diversities of sensor devices and human behaviors, workers may have distinct sensing abilities to provide data with different qualities even for the same sensing task, while it is usually very difficult to pre-profile the heterogeneous workers in terms of sensing ability, especially when the number of the workers may be huge. To address the uncertainties of the workers, one popular choice is to apply the \textit{Combinatorial Multi-Armed Bandits} (CMAB) framework such that the workers are sequentially selected to perform the sensing task under a budget and the performance of the workers in conducting the sensing task can be observed to estimate their sensing abilities. Existing proposals usually leverage a trade-off between exploration and exploitation for each of the workers~\cite{HanZL-TON16,RangiF-AAMAS18,ZhaoXWXHZ-TMC21,SongJ-INFOCOM21} (or each of the worker-task combinations~\cite{GaoWYXC-ICPADS19,GaoWXC-INFOCOM20}); therefore, such conventional CMAB-based approaches result in significant overhead and thus are of low scalability, especially when the budget is heavily limited whereas the number of the unknown workers is huge. For an extreme example, if we do not have a sufficient budget to select each of the workers once, these approaches based on the conventional CMAB framework even cannot be initialized if directly exploring and exploiting the individual workers. In addition, the workers may have bounded sensing capabilities due to the resource limit of their sensor devices; hence, each of the workers conducts the sensing tasks only for a few times, such that we may not have sufficient opportunities to explore and exploit them individually. In a nutshell, we focus on addressing the following open problem in this paper: \textit{given a budget-limited sensing task, how to fully utilize the budget to efficiently select among a massive number of unknown workers with bounded capacities through an exploration-exploitation trade-off?}

  In this paper, we propose a \textit{Context-Aware Worker Selection} (CAWS) algorithm. Specifically, inspired by the fact that workers with similar contexts usually have similar sensing abilities, we innovate in adapting the CMAB framework to learn the correlation between the workers' contexts and their sensing abilities rather than the sensing abilities of the individual workers. Through partitioning the context space into a set of sub-spaces (i.e., the so-called ``hypercubes'') with a fine-tuned granularity, we can learn the sensing ability distribution upon the hypercubes and thus efficiently estimate the sensing abilities of the workers with similar contexts in each hypercube. According to the estimates, we enable an efficient exploration-exploitation trade-off to select among massive unknown workers with bounded capacities under a limited budget in the context space. By our CAWS algorithm, the expected cumulative sensing revenue can be maximized with both the budget constraint and the capacity constraints respected. We conduct a solid theoretical analysis to quantify the performance gap (a.k.a. \textit{regret}) between our algorithm and the (nearly) optimal off-line algorithm where the workers' sensing abilities are known as prior. We also perform extensive experiments on both synthetic and real datasets to verify the efficacy of our CAWS algorithm. The main contribution of this paper is summarized as follows.
  \begin{itemize}
    \item To the best of our knowledge, this is the first work considering the scalability in efficiently selecting among a massive number of unknown workers under a significantly limited budget.
    \item We propose a context-aware worker selection algorithm to maximize the expected cumulative sensing revenue with both the budget constraint and the capacity constraints respected.
    \item We conduct a rigorous theoretical analysis to quantify the regret between our CAWS algorithm and the approximately optimal one, and perform extensive experiments on both synthetic data and real data to verify the advantages of CAWS over other state-of-the-art methods.
  \end{itemize}

  The remaining of our paper is organized as follows. We first introduce our system model and describe our problem in Sec.~\ref{sec:sys}. We then present the details of our CAWS algorithm in Sec.~\ref{sec:algo}. The analysis of our CAWS algorithm is given in Sec.~\ref{sec:analysis}. We report our experiment results in Sec.~\ref{sec:exp}. We finally survey related literature and conclude this paper in Sec.~\ref{sec:relwork} and Sec.~\ref{sec:conclusion}, respectively.

\section{System Model and Problem Description} \label{sec:sys}
  \subsection{System Model} \label{ssec:model}
    We consider a crowdsensing process assigning a sensing task to a set of workers $\mathcal N=\{1,2,\cdots, N\}$ under budget $B$. For each worker $i \in \mathcal N$, let $c_i$ denote the cost to recruit (or select) worker $i$ for one time to collect a data sample. For example, a worker should be paid when it is selected to report a data sample. Moreover, the workers may carry various sensor devices with different configurations (e.g., in communication modules, sampling resolutions, etc.); hence the cost parameters for the different workers are distinct. We assume $c_{min} = \min_{i\in\mathcal N}c_i$ and $c_{max} = \max_{i\in\mathcal N}c_i$. We also define a capacity attribute $\tau_i$ for each worker $i\in \mathcal N$, which represents the maximum number of data samples worker $i$ can contribute (or the maximum number of times worker $i$ can be selected) due to the resource limit of its sensor devices. Let $\tau_{max}= \max_{i\in \mathcal N} \tau_i$ be the maximum capacity.

    Each worker $i\in\mathcal N$ is associated with context information denoted by $\phi_i$ which is closely related to the worker's sensing ability. We assume that, for $\forall i \in \mathcal N$, $\phi_i \in \mathcal S$ is an $M$-dimensional vector, where $\mathcal S = [0,1]^M$ is the so-called ``\textit{context space}''. The context dimensions could include the proficiency of the workers in some required skills, the personal backgrounds of the workers or the performance parameters of the sensor devices. We can normalize each of the dimensions into a range of $[0,1]$. We define a \textit{stochastic} reward function $r: \mathcal S \rightarrow \{0,1\}$ upon the context space $\mathcal S$. For $\forall i \in \mathcal N$, the binary random variable $r(\phi_i) \in \{0,1\}$ indicates if a data sample provided by worker $i$ is qualified and thus represents the random \textit{reward} obtained by selecting worker $i$ to collect a data sample. We assume $r(\phi_i)$ for each selection (and thus for each data sample) is identically and independently drawn from an \textit{unknown} Bernoulli distribution and let $\mu_i = \mathbb E[r(\phi_i)]$ denote the unknown expectation of $r(\phi_i)$ \footnote{Although we hereby assume $r(\phi_i)$ is an i.i.d. random variable obeying an unknown Bernoulli distribution parameterized by $\mu_i = \mathbb P(r(\phi_i)= 1)$, our algorithm readily works with arbitrary probability distributions with normalized supports in $[0,1]$.}. In fact, $\mu_i$ is a measure of worker $i$'s sensing ability. To facilitate our presentation, we suppose $r_i = r(\phi_i)$ and thus $\mu_i = \mathbb E[r_i]$ throughout the remaining of this paper.

  \subsection{Problem Description} \label{ssec:form}
    Assuming $x_i\in \{ 0,1,\cdots,\tau_i \}$ is the number of times we select worker $i$ (i.e., the number of data we collect through recruiting worker $i$), our problem can be formulated as
    \begin{eqnarray} 
      &\max& ~~ f(\{x_i\}^N_{i=1}) = \sum^N_{i=1} \mu_i x_i \label{eq:obj}\\
      &\mathrm{s.t.}& ~~ \sum^N_{i=1} x_i c_i \leq B, \label{eq:budget}\\
      && ~~ x_i \in \left\{ 0,1,2,\cdots,\tau_i \right\}, ~\forall i\in\mathcal N \label{eq:cap}
    \end{eqnarray}
    Our objective (\ref{eq:obj}) is to maximize the expected cumulative revenue induced by our task assignment $\{x_i\}^N_{i=1}$, subject to budget constraint (\ref{eq:budget}) and capacity constraints (\ref{eq:cap}). In particular, the total cost of our task assignment cannot exceed the budget and each worker cannot be selected more than $\tau_i$ times. The problem formulation actually characterizes a general crowdsensing scenario, as illustrated in many existing studies, e.g., \cite{HanZL-TON16,RangiF-AAMAS18,ZhaoXWXHZ-TMC21,SongJ-INFOCOM21}.

    It is apparent that, if $\mu_i$ (or $r(\cdot)$) was known as prior knowledge, our problem could be cast to a \textit{Bounded Knapsack Problem} (BKP). Although the BKP is of NP-hardness, it can be addressed by many approximation algorithms efficiently \cite{KohliKM-EJOR14}. For example, in the $2$-approximation density-order greedy algorithm, we first sort the workers in decreasing order according to their densities $\rho_i = {\mu_i}/{c_i}$, and then greedily select the workers in the order until we do not have sufficient residual budget to select any available worker with non-zero residual capacity. As will be shown in Sec.~\ref{sec:algo}, we adapt this algorithm as a subroutine in our CAWS algorithm, where we sort the workers according to the estimates on their densities.

    Unfortunately, it is usually very difficult to pre-profile the workers due to the huge number of workers as well as the diversity of the sensor devices carried by the workers. Consequently, $\{\mu_i\}^N_{i=1}$ may not always be available as prior, which makes our problem much more difficult than the BKP. To address such uncertainties, one choice is to apply the CMAB framework. For example, in \cite{RangiF-AAMAS18}, the workers (corresponding to the arms) are explored and exploited through UCB indexing. Nevertheless, when there are a huge number of workers (and thus arms), leveraging the trade-off between exploration and exploitation directly among the workers results in considerable overhead. For example, in an extreme case where $\sum^N_{i=1}c_i > B$, we even do not have sufficient budget to select each of the workers for one time to initialize the workers' UCB indices. 
    %
    %
    Therefore, the problem is, \textit{given a massive number of workers with unknown sensing abilities, how to efficiently select among them to maximize the expected cumulative sensing revenue with both the budget constraint and the capacity constraints respected?} In this paper, we propose to utilize the correlation between context information and sensing ability, for the purpose of balancing exploration and exploitation among the workers in the context space.

\section{Algorithm} \label{sec:algo}
  Our CAWS algorithm is motivated by a common sense that workers with similar context may have similar sensing abilities for a certain type of sensing tasks (which is the main basis for our later theoretical analysis). We divide the context space $\mathcal S$ into $d^M$ disjoint cubic sub-spaces (which are called ``\textit{hypercubes}'' in the following). Each of the $M$-dimensional hypercubes is of identical size $\frac{1}{d} \times \frac{1}{d} \times \cdots \times \frac{1}{d}$ \footnote{We will introduce how to partition the context space by choosing a proper value for $d$ later in Sec.~\ref{sec:analysis}.}. We denote by $\Omega$ the set of all hypercubes and by $Q_i \in \Omega$ the one such that $\phi_i \in Q_i$. As mentioned above, the workers in the same hypercube may have similar sensing abilities. Therefore, the essence of our CAWS algorithm is to leverage the trade-off between exploration and exploitation among the hypercubes in the context space rather than the individual workers. By learning the ``sensing abilities'' of the hypercubes, we can estimate the ones of the workers according to their contexts.

  The pseudo-code of our CAWS algorithm is described in \textbf{Algorithm}~\ref{alg:contextalgo}. Our algorithm proceeds in iterations. We denote by $i(t) \in \mathcal N$ the worker selected in the $t$-th iteration and by $r_{i(t)}$ the random reward yielded by this selection. For $\forall Q \in \Omega$, it is said that we choose $Q$ in the $t$-th iteration if $\phi_{i(t)} \in Q$. We then denote by 
  \begin{equation} \label{eq:chtimes}
    \lambda_Q(t) = \sum^t_{t'=1} \mathbb I(\phi_{i(t')} \in Q) = \lambda_Q(t-1) + \mathbb I(\phi_{i(t)} \in Q)
  \end{equation}
  the number of times $Q$ is chosen up to the $t$-th iteration, where $\mathbb I: \{\mathrm{True}, \mathrm{False}\} \rightarrow \{1,0\}$ is an indicator function. We also denote by 
  \begin{align} \label{eq:avgreward}
    \bar r_{Q}(t) &= \frac{\sum^t_{t'=1} \mathbb I(\phi_{i(t')}\in Q) r_{i(t')}}{\lambda_Q(t)} \nonumber\\
    &= \frac{\bar r_{Q}(t-1)\lambda_Q(t-1) + \mathbb I(\phi_{i(t)}\in Q) r_{i(t)}}{\lambda_Q(t)}
  \end{align}
  the average reward obtained up to the $t$-th iteration by choosing $Q$. At the beginning of the $t$-th iteration, we let $B(t)$ be the residual budget and $\tau_i(t)$ be the residual capacity of worker $i \in \mathcal N$, which are initialized by $B(1)=B$ and by $\tau_i(1) = \tau_i$, respectively, as shown in Line~\ref{ln:init0}. Worker $i\in\mathcal N$ is said to be \textit{available} in the $t$-th iteration if $\tau_i(t) \geq 1$. Our algorithm proceeds only if there exists sufficient budget to select at least one available worker (see Line~\ref{ln:available}). In the first $d^M$ iterations, we randomly choose a worker from each of the hypercubes, so as to initialize $\lambda_Q(t)$ and $\bar r_Q(t)$ for $\forall Q$ (see Lines \ref{ln:first} and \ref{ln:init1}). In the following, we use a density-ordered greedy subroutine (see \textbf{Algorithm}~\ref{alg:sub}) to calculate a non-negative integral weight $x_i(t)$ for $\forall i \in \mathcal N$, which represents how many times we could (virtually) select worker $i$ using residual budget $B(t)$ in a greedy manner (see Line~\ref{ln:sub}). We then choose worker $i(t)$ with probability $\frac{x_i(t)}{\sum^N_{i'=1}x_{i'}(t)}$ (see Line~\ref{ln:worker}) and increase $x_{i(t)}$ by one accordingly (see Line~\ref{ln:setx}). Next, we update $\bar r_{Q_{i(t)}}(t)$ and $\lambda_{Q_{i(t)}}(t)$ for the hypercube $Q_{i(t)}$ (see Line~\ref{ln:updatereward}). We finally renew the residual capacity of $i(t)$ and the residual budget (as shown in Lines~\ref{ln:updatecap} and \ref{ln:updatebudget}, respectively) and proceed to the next iteration (see Line~\ref{ln:updateit}). 
    \begin{algorithm}[htb!]
        \KwIn{$\{\tau_i, c_i, \phi_i\}^N_{i=1}$, $B$}
        \KwOut{$\mathbf{x} = \{x_i\}^N_{i=N}$}
        $t=1$; $B(t) = B$; $\tau_i(t) = \tau_i$ and $x_i = 0$ for $\forall i \in \mathcal N$; \label{ln:init0}\\
        \While{$B(t) \geq \min\{c_i \mid i\in \mathcal N, \tau_i(t) \geq 1\}$}{ \label{ln:available}
        \eIf{$t \leq d^M$ }{ \label{ln:first}
            Randomly choose worker $i(t)$ in the $t$-th hypercube; \label{ln:init1}\\
        }
        {
            Call the density-ordered greedy subroutine (see \textbf{Algorithm}~\ref{alg:sub}) to calculate $\{x_i(t)\}^N_{i=1}$; \label{ln:sub}\\
            Choose worker $i(t) \in \mathcal N$ with probability $\frac{x_{i}(t)}{\sum_{i'\in\mathcal N} x_{i'}(t)}$; \label{ln:worker}\\
            $x_{i(t)} = x_{i(t)} + 1$; \label{ln:setx}\\
        }
        Observe $r_{i(t)}$; \label{ln:reward}\\
        Update $\lambda_{Q_{i(t)}}(t)$ and $\bar r_{Q_{i(t)}}(t)$ according to (\ref{eq:chtimes}) and (\ref{eq:avgreward}), respectively; \label{ln:updatereward}\\
        $\tau_{i(t)}(t+1) = \tau_{i(t)}(t) - 1$; \label{ln:updatecap}\\
        $B(t+1) = B(t) - c_{i(t)}$; \label{ln:updatebudget}\\
        $t = t+1$; \label{ln:updateit}\\
      }
      \caption{Our context-aware worker selection algorithm.} 
    \label{alg:contextalgo}
    \end{algorithm}

  As demonstrated in \textbf{Algorithm}~\ref{alg:contextalgo}, a density-ordered greedy subroutine is called in each iteration to calculate $x_i(t)$. The pseudo-code of the subroutine is given in \textbf{Algorithm}~\ref{alg:sub}. Specifically, in the $t$-th iteration, we first calculate UCB index
  \begin{equation} \label{eq:ucb}
    U_i(t) = \bar r_{Q_i}(t-1) + \sqrt{\frac{2\log t}{\lambda_{Q_i}(t-1)}}
  \end{equation}
  for each worker $i$ (see Line~\ref{ln:ucb}), and the workers are then sorted in decreasing order with respect to $\rho_i(t) = {U_i(t)}/{c_i}$. The UCB index $U_i(t)$ actually can be thought as an estimate on worker $i$'s sensing ability. We greedily choose among the workers with budget $B(t)$ in the order until there is no available worker or the residual budget is not sufficient for us to select any available workers (see Lines \ref{ln:initb}$\sim$\ref{ln:end}). 
    \begin{algorithm}[htb!]
    \KwIn{$\{\tau_i(t), c_i\}^N_{i=1}$, $B(t)$, $\{\bar r_Q(t-1), \lambda_Q(t-1)\}_{Q\in \Omega}$}
    \KwOut{$\mathbf{x}(t)=\{x_i(t)\}_{i\in\mathcal N}$}
      Calculate $U_i(t)$ for $\forall i \in \mathcal N$ according to (\ref{eq:ucb}); \label{ln:ucb}\\
      Sort the workers $\mathcal N$ in decreasing order with respect to $\rho_i(t) = {U_i(t)}/{c_i}$; \label{ln:order}\\
      $b=0$; \label{ln:initb}\\
      \For{$i=1, 2, \cdots, N$}{
        \eIf{$b+c_{i} \leq B(t)$}{
          $x_{i}(t) = \min \left\{\tau_{i}(t), \left\lfloor \frac{B(t)-b}{c_{i}} \right\rfloor\right\}$; \label{ln:calx1}\\
          $b = b + c_{i} \cdot x_{i}(t)$; \label{ln:updateb}\\
        }
        {
          $x_{i}(t) = 0$; \label{ln:calx2}\\
        }
      }\label{ln:end}
    \caption{Density-ordered greedy subroutine in the $t$-th iteration.}
    \label{alg:sub}
    \end{algorithm}

\section{Analysis} \label{sec:analysis}
  As mentioned in Sec.~\ref{ssec:form}, the BKP (\ref{eq:obj})$\sim$(\ref{eq:cap}) is NP-hard even when $\{\mu_i\}^N_{i=1}$ are known as prior. We now introduce a \textit{rounding}-based approximation algorithm which can serve as a baseline to theoretically evaluate our CAWS algorithm. In particular, we first fractionalize the (integral) BKP as follows
  \begin{align}
    &\max ~~ f(\{x_i\}^N_{i=1}) = \sum^N_{i=1} \mu_i x_i  \label{eq:fkobj}\\
    &\mathrm{s.t.} ~~ \sum^N_{i=1} x_i c_i \leq B, ~0 \leq x_i \leq \tau_i, ~\forall i\in\mathcal N \label{eq:fkcon}
  \end{align}
  and then round the fractional solution to an integral one. Compared with the (integral) BKP (\ref{eq:obj})$\sim$(\ref{eq:cap}), the only difference between them is that the variable $x_i$ is a fractional non-negative number in the \textit{Fractional BKP} (FBKP) rather than an integral non-negative number in the BKP. The FBKP can be addressed by a density-ordered greedy approach. Specifically, we first sort the workers in decreasing order with respect to their densities $\rho_i = \mu_i / c_i$ such that $\rho_1 \geq \rho_2 \geq \cdots \geq \rho_N$. Then, the optimal solution to the FBKP can be calculated as
    \begin{equation} \label{eq:fbkpsolu}
      x^*_i = \begin{cases}
        \tau_i, ~~\forall i=1,2,\cdots,k-1 \\
        \frac{B-\sum^{k-1}_{j=1}c_{j} \tau_j}{c_i}, ~~i=k \\
        0, ~~\forall i=k+1, k+2, \cdots, N
      \end{cases}
    \end{equation}
    where the $k$-th worker is continuously ``split'' such that $\sum^{k-1}_{j=1} c_{j}\tau_{j} \leq B$ and $\sum^{k}_{j=1} c_{j}\tau_{j} > B$. We finally round downward $x^*_i$ for $\forall i\in \mathcal N$, and denote by $\lfloor\mathbf x^*\rfloor = \{\lfloor x^*_i \rfloor\}^N_{i=1}$ the resulting integral solution to the BKP. Letting $f^*_{BKP}$ and $f^*_{FBKP}$ be the optimal objective value of the BKP and the one of the FBKP, respectively, we have
    \begin{equation} \label{eq:optsoluinq}
      \sum^N_{i=1} \mu_i \lfloor x^*_i \rfloor \leq f^*_{BKP} \leq f^*_{FBKP} \leq \sum^N_{i=1} \mu_i \lfloor x^*_i \rfloor+\mu_k
    \end{equation}
    It is shown that the gap between the lower-bound and the upper-bound of $f^*_{BKP}$ is bounded; hence, it is rational to use the lower-bound $\sum^N_{i=1} \mu_i \lfloor x^*_i \rfloor$ as the baseline to evaluate the performance of our algorithm. Supposing the budget is exhausted in $T$ iterations by our CAWS algorithm and $\{i(t)\}^T_{t=1}$ are the selected workers within the $T$ iterations, we are interested in investigating the upper-bound of the following regret function
    \begin{align} \label{eq:reg}
      \mathsf{Regret}(T, \{i(t)\}^T_{t=1})= \sum^N_{j=1} \mu_j \lfloor x^*_j \rfloor -  \sum^N_{j=1} \mu_j \mathbb E_{T, \{i(t)\}^T_{t=1}} \left[ x_i \right]
    \end{align}
    which indicates the gap between the expected cumulative revenue yielded by the (nearly) optimal solution $\lfloor \mathbf{x}^* \rfloor$ and the one produced by our solution $\mathbf{x}$.
    
    As mentioned in Sec.~\ref{sec:algo}, our CAWS algorithm is based on the natural assumption that workers with similar context could have similar sensing abilities. This assumption can be formalized by the following H$\ddot{\text{o}}$lder condition.
    \begin{assumption}[H$\ddot{\text{o}}$lder Condition] \label{ap:holder}
      When there exist $L > 0$ and $\alpha>0$ such that for any contexts $s, s' \in \mathcal S$, it holds that
      \begin{equation}
       \left|\mathbb E[r(s)] - \mathbb E[[r(s')] \right| \leq L \|s - s'\|^\alpha
      \end{equation}
      where $\|\cdot\|$ denotes the Euclidean norm in $\mathbb R^M$.
    \end{assumption}
    \noindent It should be noted that our CAWS algorithm still works if the assumption does not strictly hold. However, the regret might not be bounded if the assumption was violated.
    \begin{lemma} \label{le:errincube}
      For $\forall i, i' \in \mathcal N$ such that $Q_i = Q_{i'}$, we have 
      \begin{eqnarray} \label{eq:errincube}
        |\mu_{i} - \mu_{i'}| \leq \Delta =L \left( M^{\frac{1}{2}}d^{-1} \right)^\alpha
      \end{eqnarray}
    \end{lemma}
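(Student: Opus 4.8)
The plan is to apply the H\"older condition (Assumption~\ref{ap:holder}) directly to the two contexts $\phi_i$ and $\phi_{i'}$ and then control the Euclidean distance between them using the geometry of a single hypercube. Since $\mu_i = \mathbb{E}[r(\phi_i)]$ and $\mu_{i'} = \mathbb{E}[r(\phi_{i'})]$ by definition, the H\"older condition immediately gives
\begin{equation}
  |\mu_i - \mu_{i'}| = \left| \mathbb{E}[r(\phi_i)] - \mathbb{E}[r(\phi_{i'})] \right| \leq L \|\phi_i - \phi_{i'}\|^\alpha .
\end{equation}
Thus the entire task reduces to upper-bounding $\|\phi_i - \phi_{i'}\|$.

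First I would exploit the fact that $Q_i = Q_{i'}$ means both contexts lie in the \emph{same} axis-aligned cube of side length $\frac{1}{d}$ in each of the $M$ dimensions. Writing $\phi_i = (\phi_{i,1}, \ldots, \phi_{i,M})$ and similarly for $\phi_{i'}$, each coordinate difference satisfies $|\phi_{i,m} - \phi_{i',m}| \leq \frac{1}{d}$ because both points share the same hypercube along dimension $m$. Squaring and summing over the $M$ dimensions yields
\begin{equation}
  \|\phi_i - \phi_{i'}\|^2 = \sum_{m=1}^M (\phi_{i,m} - \phi_{i',m})^2 \leq \sum_{m=1}^M \frac{1}{d^2} = \frac{M}{d^2},
\end{equation}
so that $\|\phi_i - \phi_{i'}\| \leq M^{\frac{1}{2}} d^{-1}$; that is, the Euclidean diameter of a hypercube is at most its main diagonal.

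Finally, since $\alpha > 0$ the map $x \mapsto x^\alpha$ is nondecreasing on $[0,\infty)$, and $L > 0$; substituting the diameter bound into the H\"older inequality therefore gives
\begin{equation}
  |\mu_i - \mu_{i'}| \leq L \|\phi_i - \phi_{i'}\|^\alpha \leq L \left( M^{\frac{1}{2}} d^{-1} \right)^\alpha = \Delta,
\end{equation}
which is exactly the claimed bound.

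I do not anticipate a genuine obstacle here, as the result is essentially a one-line consequence of the H\"older condition once the hypercube diameter has been computed. The only step requiring minor care is the geometric bound on $\|\phi_i - \phi_{i'}\|$: one must recognize that the worst case is the main diagonal of the cube rather than a single edge, which is precisely what produces the factor $M^{\frac{1}{2}}$ instead of a bare $d^{-1}$.
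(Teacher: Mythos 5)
Your proposal is correct and follows essentially the same route as the paper's own proof: bound $\|\phi_i - \phi_{i'}\|$ by the hypercube diagonal $M^{\frac{1}{2}}d^{-1}$ and apply the H\"older condition. Your version is in fact slightly more careful, since you spell out the coordinate-wise diameter computation and the monotonicity of $x \mapsto x^\alpha$, both of which the paper leaves implicit.
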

    \begin{proof}
    Since the workers $i$ and $i'$ have their contexts in the same hypercube, we have $\|\phi_{i} - \phi_{i'}\| \leq M^{\frac{1}{2}}d^{-1}$ according to our strategy of evenly partitioning the context space (as depicted in Sec.~\ref{sec:algo}). Then, considering the H$\ddot{\text{o}}$lder condition shown in \textbf{Assumption}~\ref{ap:holder}, we have $|\mu_i - \mu_{i'}| = |\mathbb E[r(\phi_i)] - \mathbb E[r(\phi_{i'})]| \leq L\|\phi_{i}-\phi_{i'}\|^\alpha = L \left( M^{\frac{1}{2}}d^{-1} \right)^\alpha$
    \end{proof}
    \noindent For each hypercube $Q \in \Omega$, we denote by $\mu_Q$ the expected reward yielded by randomly selecting any worker with its context in $Q$ (i.e., the ``sensing ability'' of the hypercube $Q$). It is apparent that $|\mu_i - \mu_{Q}| \leq \Delta$ for $\forall i \in \mathcal N$ such that $\phi_i \in Q$, which implies $\mu_{Q_i}$ can be used as an estimate on $\mu_i$. As will be shown later, \textbf{Lemma}~\ref{le:errincube} is one of the bases for the decomposition of our regret function (\ref{eq:reg}).

    The definition of the regret function suggests the key of our analysis should be to quantify the impact of mischoosing the workers on the sensing revenue. The main reasons for the regret are two-fold: on one hand, we use the qualities of the contextual hypercubes to estimate the ones of the workers such that we may not be able to make ``right'' selection decisions even we learn $\mu_Q$ exactly according to \textbf{Lemma}~\ref{le:errincube}; on the other hand, according to MAB theory, we learn the qualities of the contextual hypercubes through an exploration-exploitation trade-off, while making ``wrong'' selection decisions is the price we have to pay in the learning process. Therefore, supposing $\lfloor \tilde{\mathbf{x}}^* \rfloor = \{\lfloor \tilde{x}^*_i \rfloor\}^N_{i=1}$ is the solution obtained by applying the rounding-based density-ordered greedy algorithm to BKP instance $(\{i, \mu_{Q_i}, c_i, \tau_i\}^N_{i=1}, B)$ (where we use $\mu_{Q_i}$ as an estimate on $\mu_i$), we decompose the regret function as follows
    \begin{align} 
      & \mathsf{Regret}(T, \{i(t)\}^T_{t=1}) \nonumber\\
      =& \sum^N_{j=1} \mu_j \lfloor x^*_j \rfloor - \sum^N_{j=1}\mu_j \lfloor \tilde{x}^*_j \rfloor  + \sum^N_{j=1}\mu_j \lfloor \tilde{x}^*_j \rfloor - \sum^N_{j=1} \mu_j \mathbb E_{T, \{i(t)\}^T_{t=1}} \left[ x_j \right] \nonumber\\
      \leq& \sum^N_{j=1} \mu_j \lfloor x^*_j \rfloor - \sum^N_{j=1}\mu_j \lfloor \tilde{x}^*_j \rfloor  + \sum^N_{j=1} (\mu_{Q_j}+\Delta) \lfloor \tilde{x}^*_j \rfloor - \sum^N_{j=1} (\mu_{Q_j}-\Delta) \mathbb E_{T, \{i(t)\}^T_{t=1}} \left[ x_j \right] \nonumber\\
      \leq& \sum^N_{j=1} \mu_j \lfloor x^*_j \rfloor - \sum^N_{j=1}\mu_j \lfloor \tilde{x}^*_j \rfloor + \sum^N_{j=1} \mu_{Q_j} \lfloor \tilde{x}^*_j \rfloor - \sum^N_{j=1} \mu_{Q_j} \mathbb E_{T, \{i(t)\}^T_{t=1}} \left[ x_j \right] + \frac{2B\Delta}{c_{min}}
    \end{align}
    where we have the second inequality since $|\mu_j - \mu_{Q_j}| \leq \Delta$ holds for $\forall j \in \mathcal N$ as shown in \textbf{Lemma}~\ref{le:errincube} and the third one due to the fact that $\sum^N_{j=1} \lfloor \tilde{x}^*_j \rfloor \leq \frac{B}{c_{min}}$ and $\sum^N_{j=1} \mathbb E_{T, \{i(t)\}^T_{t=1}} \left[ x_j \right] \leq \frac{B}{c_{min}}$. By defining
    %
    %
    \begin{equation} \label{eq:reg1}
      \mathsf{Regret}(\lfloor \mathbf{x}^* \rfloor, \lfloor \tilde{\mathbf x}^* \rfloor, \{\mu_i\}^N_{i=1}) = \sum^N_{j=1} \mu_j \lfloor x^*_j \rfloor - \sum^N_{j=1}\mu_j \lfloor \tilde{x}^*_j \rfloor
    \end{equation}
    and
    %
    \begin{align} \label{eq:reg2}
      & \mathsf{Regret}(\lfloor \tilde{\mathbf x}^* \rfloor, \mathbf x, \{\mu_{Q_i}\}^N_{i=1}) = \sum^N_{j=1} \mu_{Q_j} \lfloor \tilde{x}^*_i \rfloor - \sum^N_{j=1} \mu_{Q_j} \mathbb E_{T, \{i(t)\}^T_{t=1}} \left[ x_j \right]
    \end{align}
    the regret function $\mathsf{Regret}(T, \{i(t)\}^T_{t=1})$ can be re-written as
    %
    \begin{align} \label{eq:regdecom}
      & \mathsf{Regret}(T, \{i(t)\}^T_{t=1}) \leq \mathsf{Regret}(\lfloor \mathbf{x^*} \rfloor, \lfloor \tilde{\mathbf x}^* \rfloor, \{\mu_i\}^N_{i=1}) + \mathsf{Regret}(\lfloor \tilde{\mathbf x}^* \rfloor, \mathbf x, \{\mu_{Q_i}\}^N_{i=1}) + \frac{2B\Delta}{c_{min}}
    \end{align}
    $\mathsf{Regret}(\lfloor \mathbf{x^*} \rfloor, \lfloor \tilde{\mathbf x}^* \rfloor, \{\mu_i\}^N_{i=1})$ represents the loss due to our estimation on the workers' sensing abilities through partitioning the context space, while $\mathsf{Regret}(\lfloor \tilde{\mathbf x}^* \rfloor, \mathbf x, \{\mu_{Q_i}\}^N_{i=1})$ indicates the one resulting from our exploration-exploitation trade-off to learn the qualities of the hypercubes. In the following, we first present the main result showing the upper-bound of the regret function (\ref{eq:regdecom}) in \textbf{Theorem}~\ref{thm:main} (see Sec.~\ref{ssec:main}) and then report the details of the proof in Sec.~\ref{ssec:proof}, where the two sub-regret functions (\ref{eq:reg1}) and (\ref{eq:reg2}) are bounded in \textbf{Theorem}~\ref{thm:reg1bd} and \textbf{Theorem}~\ref{thm:reg2upbd}, respectively.

    To facilitate our analysis, we reuse the notion $\in$ when doing so will not induce any  ambiguity. In particular, for each worker $i$, it is said that $i\in \lfloor \mathbf{x}^* \rfloor$ (resp. $i\in \lfloor \tilde{\mathbf{x}}^* \rfloor$) if $\lfloor {x}^*_i \rfloor \geq 1$ (resp. $\lfloor \tilde{x}^*_i \rfloor \geq 1$). We also give some notions as follows which will be useful to our later analysis.
    \begin{eqnarray}
      && i^* = \arg\max_{i\in\mathcal N}\frac{\mu_{Q_i}}{c_i} \\
      && \mathcal{N}_Q = \{i \in \mathcal N \mid \phi_i \in Q\} \\
      && \mathcal{N}^+_Q = \{i \in \mathcal N \mid \phi_i \in Q, i \in \lfloor \tilde{\mathbf{x}}^* \rfloor\} \\
      && \mathcal{N}^-_Q=\{i \in \mathcal N \mid \phi_i \in Q, i\notin \lfloor \tilde{\mathbf{x}}^* \rfloor\} \\
      && c_{max}(\mathcal{N}^+_Q) = \max_{i \in \mathcal{N}^+_Q} c_i, ~c_{min}(\mathcal{N}^+_Q) = \min_{i \in \mathcal{N}^+_Q} c_i \\
      && c_{max}(\mathcal{N}^-_Q) = \max_{i \in \mathcal{N}^+_Q} c_i, ~c_{min}(\mathcal{N}^-_Q) = \min_{i \in \mathcal{N}^+_Q} c_i \\
      %
      %
      %
      && \delta_{min} = \min_{Q, Q' \in \Omega} \left| \frac{\mu_{Q}}{c_{min}(\mathcal{N}^-_Q)} - \frac{\mu_{Q'}}{c_{max}(\mathcal{N}^+_Q)}   \right| \\
      && \xi = \frac{8}{c^2_{min}\delta^2_{min}} + \left( \frac{c_{max}}{c_{min}} \right)^2
    \end{eqnarray}

  \subsection{Main Result} \label{ssec:main}
    As shown in Sec.~\ref{sec:sys}, the context space $\mathcal S$ is partitioned according to the granularity $d$. Increasing $d$ results in more fine-grained hypercubes such that we can estimate the workers' sensing abilities more accurately (as shown in \textbf{Lemma}~\ref{le:errincube}). Nevertheless, increasing the granularity also implies we have more hypercubes to explore and exploit, while the exploration and exploitation are restricted by the limited total budget. To this end, as shown in the following \textbf{Theorem}~\ref{thm:main}, we fine tune the granularity of the partitioning such that the regret function can be properly bounded.
    \begin{theorem}  \label{thm:main}
      Assuming $d = \left\lceil B^{\frac{1}{\alpha+M}} \right\rceil$, the regret function of our CAWS algorithm (\ref{eq:reg}) is upper-bounded by
      %
      %
      %
      \begin{equation}
        \left( \tau_{max} + 2^M B^{\frac{M}{\alpha+M}}h(\ln B) +1 \right)\frac{c_{max}}{c_{min}} + \frac{4L M^{\frac{\alpha}{2}} B^{\frac{M}{\alpha+M}}}{c_{min}} + 1
      \end{equation}
      where
      \begin{equation} \label{eq:hlnb}
        h(\ln B) = \xi \ln \left( \frac{B}{c_{min}} \right) + \frac{\pi^2}{3} + 1
      \end{equation}
      which implies that the regret of our CAWS algorithm is $\mathcal{O} \left( B^{\frac{M}{\alpha+M}} \ln B \right)$.
    %

    \end{theorem}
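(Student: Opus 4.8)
The plan is to build directly on the regret decomposition already derived in~(\ref{eq:regdecom}), which splits the total regret into three pieces: the estimation-error regret $\mathsf{Regret}(\lfloor \mathbf{x}^* \rfloor, \lfloor \tilde{\mathbf{x}}^* \rfloor, \{\mu_i\}^N_{i=1})$, the learning regret $\mathsf{Regret}(\lfloor \tilde{\mathbf{x}}^* \rfloor, \mathbf{x}, \{\mu_{Q_i}\}^N_{i=1})$, and the additive term $\frac{2B\Delta}{c_{min}}$. I would bound each term separately (the first two being exactly what \textbf{Theorem}~\ref{thm:reg1bd} and \textbf{Theorem}~\ref{thm:reg2upbd} isolate), then specialize to $d = \lceil B^{1/(\alpha+M)} \rceil$ so that the three contributions line up at the same order and sum to the claimed bound.

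For the additive term I would substitute $\Delta = L(M^{1/2}d^{-1})^\alpha$ from \textbf{Lemma}~\ref{le:errincube}. Using $d \geq B^{1/(\alpha+M)}$ gives $d^{-\alpha} \leq B^{-\alpha/(\alpha+M)}$, so $\frac{2B\Delta}{c_{min}} \leq \frac{2L M^{\alpha/2}}{c_{min}} B^{1-\alpha/(\alpha+M)} = \frac{2L M^{\alpha/2}}{c_{min}} B^{M/(\alpha+M)}$, since $1 - \frac{\alpha}{\alpha+M} = \frac{M}{\alpha+M}$. I expect the estimation-error term (\textbf{Theorem}~\ref{thm:reg1bd}) to have the same shape: swapping the true densities $\mu_i/c_i$ for the hypercube densities $\mu_{Q_i}/c_i$ perturbs each item's value by at most $\Delta$, and at most $B/c_{min}$ items are ever selected, so a robustness-of-greedy-to-perturbation argument also yields $\frac{2L M^{\alpha/2}}{c_{min}} B^{M/(\alpha+M)}$. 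The two together produce the $\frac{4L M^{\alpha/2} B^{M/(\alpha+M)}}{c_{min}}$ appearing in the statement.

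The heart of the argument, and the main obstacle, is the learning regret of \textbf{Theorem}~\ref{thm:reg2upbd}. Here I would run the standard UCB analysis but \emph{per hypercube} rather than per worker. Since each iteration spends at least $c_{min}$, the number of iterations satisfies $T \leq B/c_{min}$, the source of the $\ln(B/c_{min})$ inside $h(\ln B)$. For each $Q$, a Chernoff–Hoeffding bound on $\bar r_Q(t)$ shows that once $\lambda_Q$ exceeds roughly $\frac{8 \log T}{c_{min}^2 \delta_{min}^2}$ selections (the first summand of $\xi$), the index $U_i(t)$ orders the workers in $Q$ correctly relative to the threshold separating $\mathcal N^+_Q$ from $\mathcal N^-_Q$ with high probability; the residual tail probabilities, summed over two directions via $\sum t^{-2} = \pi^2/6$, give the $\pi^2/3$ and the constant in $h(\ln B)$. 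The delicate point is that the algorithm does not deterministically pick the top-density worker but samples $i(t)$ with probability proportional to the greedy weights $x_i(t)$; I would therefore bound the \emph{expected} number of ``wrong'' selections charged to each hypercube by $h(\ln B)$, multiply by the hypercube count $d^M$ and by the cost ratio $c_{max}/c_{min}$ that converts mis-selections into lost revenue, with the $\tau_{max}$ and additive constants absorbing the boundary ``split'' worker and the $d^M$-iteration initialization phase.

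Finally I would collect the three bounds. Using $d^M = \lceil B^{1/(\alpha+M)} \rceil^M \leq 2^M B^{M/(\alpha+M)}$ (valid since $\lceil y \rceil \leq 2y$ for $y \geq 1$) turns the hypercube count into the factor $2^M B^{M/(\alpha+M)}$, and assembling everything reproduces $\big(\tau_{max} + 2^M B^{M/(\alpha+M)} h(\ln B) + 1\big)\frac{c_{max}}{c_{min}} + \frac{4L M^{\alpha/2} B^{M/(\alpha+M)}}{c_{min}} + 1$. Since $h(\ln B) = \Theta(\ln B)$, the dominant contribution is $2^M B^{M/(\alpha+M)} \ln B$, whence the stated $\mathcal O\big(B^{M/(\alpha+M)} \ln B\big)$ rate follows. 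The choice $d = \lceil B^{1/(\alpha+M)} \rceil$ is precisely what balances the estimation term $B \cdot d^{-\alpha}$ against the learning term $d^M \ln B$, so that neither dominates.
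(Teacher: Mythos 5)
Your proposal is correct and follows essentially the same route as the paper: the decomposition (\ref{eq:regdecom}) into the two sub-regrets of \textbf{Theorem}~\ref{thm:reg1bd} and \textbf{Theorem}~\ref{thm:reg2upbd} plus the $\frac{2B\Delta}{c_{min}}$ term, the per-hypercube UCB/Chernoff--Hoeffding analysis with threshold $\ell = \left\lceil \frac{8\ln T}{c^2_{min}\delta^2_{min}} \right\rceil$ and $T \leq \frac{B}{c_{min}}$ yielding $h(\ln B)$, and the final substitutions $d^M \leq 2^M B^{\frac{M}{\alpha+M}}$ and $\Delta \leq L M^{\frac{\alpha}{2}} B^{-\frac{\alpha}{\alpha+M}}$ under $d = \left\lceil B^{\frac{1}{\alpha+M}} \right\rceil$. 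Your sketch of the learning-regret bound compresses what the paper spreads over \textbf{Lemma}~\ref{le:iterations} and the appendix lemmas (in particular the handling of the randomized selection via the $\left({c_{max}}/{c_{min}}\right)^2/(T-t+1)$ correction), but the ingredients and the final assembly match the paper's proof.
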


  \subsection{Detailed Proof} \label{ssec:proof}
  %
    %
    \begin{theorem} \label{thm:reg1bd}
      Recall that $\lfloor \mathbf{x}^* \rfloor = \{\lfloor x^*_i \rfloor\}^N_{i=1}$ and $\lfloor \tilde{\mathbf{x}}^* \rfloor = \{\lfloor \tilde x^*_i \rfloor\}^N_{i=1}$ are the results we obtain by applying the rounding-based density-ordered greedy algorithm to the two BKP instances $\mathsf{Instance1}=(\{i, \mu_i, c_i, \tau_i\}^N_{i=1}, B)$ and $\mathsf{Instance2}=(\{i, \mu_{Q_i}, c_i, \tau_i\}^N_{i=1}, B)$, respectively. Considering $\mu_{Q_i}$ is an estimate on $\mu_i$ for $\forall i \in \mathcal N$, we have
      \begin{equation} \label{eq:reg1bd}
        \mathsf{Regret}(\lfloor \mathbf{x}^* \rfloor, \lfloor \tilde{\mathbf x}^* \rfloor, \{\mu_i\}^N_{i=1}) \leq \frac{2 \Delta B}{c_{min}}+1
      \end{equation}
    \end{theorem}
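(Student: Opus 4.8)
The plan is to compare the two rounded solutions by routing through the ``estimated'' objective and exploiting the fact that $\mathsf{Instance1}$ and $\mathsf{Instance2}$ share an \emph{identical} feasible region (the same $c_i$, $\tau_i$ and $B$); only the objective coefficients differ, and by at most $\Delta$ per worker thanks to \textbf{Lemma}~\ref{le:errincube}. I would write $f_1(\mathbf x)=\sum_i \mu_i x_i$ and $f_2(\mathbf x)=\sum_i \mu_{Q_i} x_i$ for the objectives of the two instances, so that the quantity to bound in (\ref{eq:reg1bd}) is exactly $f_1(\lfloor \mathbf x^*\rfloor)-f_1(\lfloor \tilde{\mathbf x}^*\rfloor)$.

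First I would insert the estimated objective and split the target into three pieces:
\begin{align*}
 & f_1(\lfloor \mathbf x^*\rfloor)-f_1(\lfloor \tilde{\mathbf x}^*\rfloor) \\
 =& \big[f_1(\lfloor \mathbf x^*\rfloor)-f_2(\lfloor \mathbf x^*\rfloor)\big] + \big[f_2(\lfloor \mathbf x^*\rfloor)-f_2(\lfloor \tilde{\mathbf x}^*\rfloor)\big] + \big[f_2(\lfloor \tilde{\mathbf x}^*\rfloor)-f_1(\lfloor \tilde{\mathbf x}^*\rfloor)\big].
\end{align*}
The first and third brackets each measure the gap between the true and estimated objectives at a \emph{single fixed} feasible point, which is routine to control: $|f_1(\mathbf x)-f_2(\mathbf x)|=\left|\sum_i(\mu_i-\mu_{Q_i})x_i\right|\le \Delta\sum_i x_i$ by \textbf{Lemma}~\ref{le:errincube}, and any feasible $\mathbf x$ obeys $\sum_i x_i\le \tfrac{1}{c_{min}}\sum_i x_i c_i\le B/c_{min}$. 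Hence each of these two brackets is at most $\Delta B/c_{min}$.

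The crux is the middle bracket $f_2(\lfloor \mathbf x^*\rfloor)-f_2(\lfloor \tilde{\mathbf x}^*\rfloor)$, where both arguments are now scored under the \emph{same} objective $f_2$ of $\mathsf{Instance2}$. Here I would observe that $\lfloor \mathbf x^*\rfloor$, being a coordinatewise rounding-down of $\mathbf x^*$, still satisfies the common constraints ($\lfloor x^*_i\rfloor\le x^*_i\le \tau_i$ and $\sum_i c_i\lfloor x^*_i\rfloor\le\sum_i c_i x^*_i\le B$), so it is a legitimate feasible integral solution of $\mathsf{Instance2}$ and therefore $f_2(\lfloor \mathbf x^*\rfloor)\le f^*_{BKP}(\mathsf{Instance2})$. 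On the other side, $\lfloor \tilde{\mathbf x}^*\rfloor$ is precisely the rounded density-ordered greedy solution of $\mathsf{Instance2}$, so the rounding guarantee (\ref{eq:optsoluinq}) applied to $\mathsf{Instance2}$ yields $f^*_{BKP}(\mathsf{Instance2})\le f_2(\lfloor \tilde{\mathbf x}^*\rfloor)+\mu_{Q_{\tilde k}}$, with $\tilde k$ the split worker for $\mathsf{Instance2}$. Chaining these gives $f_2(\lfloor \mathbf x^*\rfloor)-f_2(\lfloor \tilde{\mathbf x}^*\rfloor)\le \mu_{Q_{\tilde k}}\le 1$, since every expected reward lies in $[0,1]$.

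Summing the three bounds delivers $f_1(\lfloor \mathbf x^*\rfloor)-f_1(\lfloor \tilde{\mathbf x}^*\rfloor)\le 2\Delta B/c_{min}+1$, which is exactly (\ref{eq:reg1bd}). The only delicate point is the middle step: it rests on recognizing that the two instances are optimized over the identical feasible set, so the optimizer of one is a valid competitor in the other, which lets the near-optimality of $\lfloor \tilde{\mathbf x}^*\rfloor$ for $\mathsf{Instance2}$ absorb the comparison down to a single split-worker term; everything else reduces to the per-worker error $\Delta$ and the counting bound $\sum_i x_i\le B/c_{min}$.
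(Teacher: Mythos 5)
Your proof is correct, but it takes a genuinely different route from the paper's. The paper works at the level of the \emph{fractional} solutions under the \emph{true} objective: it first reduces the rounded comparison to the fractional one at the price of a single split-worker term (via (\ref{eq:optsoluinq})), and then bounds $\sum_i \mu_i x^*_i - \sum_i \mu_i \tilde x^*_i$ by an exchange argument --- whenever $x^*_i > \tilde x^*_i$ there is a replacement worker $i'$ with $x^*_{i'} < \tilde x^*_{i'}$ whose true density is no larger but whose estimated density is at least as large, and these two orderings together with $|\mu_j - \mu_{Q_j}| \leq \Delta$ force the per-unit-cost loss $\frac{\mu_i}{c_i} - \frac{\mu_{i'}}{c_{i'}} \leq \frac{2\Delta}{c_{min}}$, which multiplied by the displaced cost (at most $B$) yields $\frac{2\Delta B}{c_{min}}$. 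You instead telescope through the surrogate objective $f_2$: the two outer brackets are pure objective-perturbation terms, each controlled by $\Delta \sum_i x_i \leq \Delta B/c_{min}$, while the middle bracket is a same-objective comparison within $\mathsf{Instance2}$, where you use that $\lfloor \mathbf x^* \rfloor$ is integral and feasible for $\mathsf{Instance2}$ (the two instances share the identical feasible region) and that $\lfloor \tilde{\mathbf x}^* \rfloor$ is within $\mu_{Q_{\tilde k}} \leq 1$ of the $\mathsf{Instance2}$ optimum by (\ref{eq:optsoluinq}); both routes land on $\frac{2\Delta B}{c_{min}}+1$. Your version is arguably cleaner and more robust: it sidesteps the paper's exchange argument, whose existence claim for the replacement workers and the bookkeeping around the set $\widetilde{\mathcal N}_i$ are stated rather informally, and it relies only on feasibility, the per-coefficient error $\Delta$, and the generic rounding guarantee --- so it would apply verbatim to any solver for $\mathsf{Instance2}$ with additive suboptimality at most $1$, not just density-ordered greedy. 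What the paper's argument buys in return is the structural insight that the loss arises precisely from density \emph{misorderings} between true and estimated densities, a viewpoint that recurs later in the analysis (e.g., in Lemma~\ref{le:swap}). One small bookkeeping note: the bound $|\mu_i - \mu_{Q_i}| \leq \Delta$ that you attribute to Lemma~\ref{le:errincube} is strictly the remark following that lemma (since $\mu_{Q_i}$ is an average over the hypercube rather than the mean of a single worker), but this is exactly how the paper itself invokes it, so nothing is lost.
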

    \begin{proof}
      We denote by $\mathbf{x}^*=\{x^*_i\}^N_{i=1}$ and $\tilde{\mathbf x}^*=\{\tilde x^*_i\}^N_{i=1}$ the fractional solutions to the FBKP versions of $\mathsf{Instance1}$ and $\mathsf{Instance2}$, respectively. Considering the inequality (\ref{eq:optsoluinq}),
      \begin{align} \label{eq:reg1ineq}
        \mathsf{Regret}(\lfloor \mathbf{x^*} \rfloor, \lfloor \tilde{\mathbf x}^* \rfloor, \{\mu_i\}^N_{i=1}) \leq \sum^N_{i=1} \mu_i x^*_i - \left( \sum^N_{i=1}\mu_i \tilde{x}^*_i - \mu_{Q_{\tilde k}} \right) \leq \sum^N_{i=1} \mu_i x^*_i - \sum^N_{i=1}\mu_i \tilde{x}^*_i + 1
      \end{align}
      where $\tilde k$ is the split worker in $\mathsf{Instance2}$ and $\mu_{\tilde k} \leq 1$. According to the procedure of our rounding-based density-ordered greedy algorithm shown in Sec.~\ref{sec:analysis}, if there is a worker $i$ such that $x^*_i > \tilde{x}^*_i$, there must be at least one another worker $i'$ with $ x^*_{i'}  <  \tilde{x}^*_{i'} $ such that $\frac{\mu_{i'}}{c_{i'}} \leq \frac{\mu_{i}}{c_{i}}$ and $\frac{\mu_{Q_{i'}}}{c_{i'}} \geq \frac{\mu_{Q_{i}}}{c_{i}}$. Therefore,
      \begin{align} \label{reg1ineq2}
        \frac{\mu_i}{c_i} - \frac{\mu_{i'}}{c_{i'}} \leq \frac{\mu_{Q_i}+\Delta}{c_i} - \frac{\mu_{Q_{i'}}-\Delta}{c_{i'}} = \frac{\mu_{Q_i}}{c_i} - \frac{\mu_{Q_{i'}}}{c_{i'}} + \Delta\left(\frac{1}{c_i}+\frac{1}{c_{i'}}\right) \leq \frac{2\Delta}{c_{min}}
      \end{align}
      where we have the first inequality due to $|\mu_i - \mu_{Q_i}| \leq \Delta$ for $\forall i\in \mathcal N$ (see \textbf{Lemma}~\ref{le:errincube}) and the third one by considering the facts that $\frac{\mu_{Q_{i'}}}{c_{i'}} \geq \frac{\mu_{Q_{i}}}{c_{i}}$ and $c_i \geq c_{min}$ for $\forall i \in \mathcal N$. 
      In other words, if worker $i$ is not (fractionally) selected in $\mathsf{Instance2}$, our algorithm will select some another workers to replace worker $i$ in $\widetilde{\mathbf{x}}^*$. Nevertheless, since the workers (selected to replace worker $i$) have smaller densities than $i$, these replacements may result in reward loss, which can be indicated by the difference between the first two terms in (\ref{eq:reg1ineq}). Hence, assuming $\widetilde{\mathcal N}_i$ denote the set of those workers and $\tilde i = \arg\min_{j \in \widetilde{\mathcal N}_i} {\mu_{Q_j}}/{c_j}$, we have
      \begin{eqnarray}
        && \sum^N_{i=1} \mu_i x^*_i - \sum^N_{i=1}\mu_i \tilde{x}^*_i \nonumber\\
        &\leq& \sum_{i: x^*_i > \tilde{x}^*_i}\left( \left( x^*_i - \tilde{x}^*_i \right)\mu_i - \frac{(x^*_i - \tilde{x}^*_i)c_i}{c_{\tilde i}}\cdot \mu_{\tilde i} \right) \nonumber\\
        &=& \sum_{i: x^*_i > \tilde{x}^*_i}\left( c_i(x^*_i - \tilde x^*_i)\left( \frac{\mu_i}{c_i} - \frac{\mu_{\tilde i}}{c_{\tilde i}} \right) \right)
      \end{eqnarray}
      Considering $\frac{\mu_i}{c_i} - \frac{\mu_{\tilde i}}{c_{\tilde i}} \leq \frac{2\Delta}{c_{min}}$ (see (\ref{reg1ineq2})), we have
      \begin{eqnarray*}
        \sum^N_{i=1} \mu_i x^*_i - \sum^N_{i=1}\mu_i \tilde{x}^*_i &\leq& \frac{2\Delta}{c_{min}} \sum_{i: x^*_i > \tilde{x}^*_i} c_i(x^*_i - \tilde x^*_i) \leq \frac{2 \Delta B}{c_{min}}
      \end{eqnarray*}
      by substituting which into (\ref{eq:reg1ineq}), we complete the proof.
    \end{proof}

    %
    %
    \begin{theorem} \label{thm:reg2upbd}
      Letting $\lfloor \tilde{\mathbf{x}}^* \rfloor$ be the results obtained by applying the rounding-based density-ordered greedy algorithm to the BKP instance $(\{i, \mu_{Q_i}, c_i, \tau_i\}^N_{i=1}, B)$ and $\mathbf{x}$ be the output of our CAWS algorithm, we have
      \begin{align} \label{eq:reg2upbd}
        \mathsf{Regret}\left( \lfloor \tilde{\mathbf x}^* \rfloor, \mathbf x, \{\mu_{Q_i}\}^N_{i=1} \right) \leq \frac{c_{max}}{c_{min}} \left( \tau_{max} + d^M h(\ln B)+1 \right)
      \end{align}
    \end{theorem}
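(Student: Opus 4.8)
The plan is to recognize $\mathsf{Regret}(\lfloor \tilde{\mathbf x}^* \rfloor, \mathbf x, \{\mu_{Q_i}\}^N_{i=1})$ as a combinatorial-bandit regret in which the ``arms'' are the hypercubes $Q \in \Omega$ rather than the individual workers, and to bound it by counting, for each hypercube, how often the algorithm deviates from the reference allocation $\lfloor \tilde{\mathbf x}^* \rfloor$. First I would regroup both sums by hypercube, using $\sum_{j} \mu_{Q_j} y_j = \sum_{Q \in \Omega} \mu_Q \sum_{j \in \mathcal N_Q} y_j$, so that the regret becomes a sum over $Q$ of $\mu_Q$ times the discrepancy between the reference count $\sum_{j \in \mathcal N_Q}\lfloor \tilde x^*_j\rfloor$ and the expected algorithmic count $\sum_{j \in \mathcal N_Q}\mathbb E[x_j]$. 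Since every $\mu_Q \in [0,1]$, each excess or deficient selection contributes at most one unit of regret, and the $\frac{c_{max}}{c_{min}}$ prefactor arises when converting between ``number of selections'' and ``budget consumed'': under the shared budget $B$, a deficit of selections in a high-density hypercube is compensated by at most $c_{max}/c_{min}$ extra selections in a low-density one.

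The core is the UCB argument controlling how often the density-ordered greedy subroutine (\textbf{Algorithm}~\ref{alg:sub}) misorders the hypercubes. I would fix a \emph{clean event} at iteration $t$ under which $|\bar r_Q(t-1) - \mu_Q| \leq \sqrt{2\log t / \lambda_Q(t-1)}$ holds for every chosen hypercube; by the Chernoff--Hoeffding inequality the complementary \emph{bad event} has probability at most $2 t^{-2}$, and summing $\sum_{t\ge 1} 2t^{-2} \le \pi^2/3$ yields exactly the $\frac{\pi^2}{3}$ constant in $h(\ln B)$. On the clean event the index $U_i(t)$ sandwiches $\mu_{Q_i}$, so the ordering by $\rho_i(t) = U_i(t)/c_i$ agrees with the true density ordering on every decision-critical pair once the confidence half-width drops below $c_{min}\delta_{min}/2$, where $\delta_{min}$ is precisely the density margin defined in the preamble between the worst selected worker in $\mathcal N^+_{Q'}$ and the best unselected worker in $\mathcal N^-_Q$. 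Requiring $\sqrt{2\log t/\lambda_Q(t-1)} \le c_{min}\delta_{min}/2$ forces $\lambda_Q(t-1) \ge \frac{8\log t}{c_{min}^2\delta_{min}^2}$, the first summand of $\xi$; the second summand $(c_{max}/c_{min})^2$ accounts for the cost-conversion when translating these exploration pulls of $Q$ back into selection-count regret. Since the horizon satisfies $T \le B/c_{min}$, this gives at most $\xi \ln(B/c_{min})$ regret-inducing selections per hypercube before its estimate becomes reliable.

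I would then assemble the pieces: summing the per-hypercube bound over all $|\Omega| = d^M$ hypercubes yields the $d^M h(\ln B)$ term; the $\tau_{max}$ term absorbs a single residual discrepancy at the boundary (split) worker of the greedy solution, whose capacity is at most $\tau_{max}$, so that rounding and capacity-exhaustion effects there shift the allocation by at most $\tau_{max}$ selections; and the additive $+1$ accounts for the final, budget-exhausting selection. Multiplying the total count of excess selections by the per-selection reward bound and the cost-conversion factor $c_{max}/c_{min}$ delivers the claimed $\frac{c_{max}}{c_{min}}(\tau_{max} + d^M h(\ln B) + 1)$.

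The hard part will be the coupling between the randomized, knapsack-structured selection and the hypercube estimates. Unlike textbook UCB, where one arm is pulled per round, here \textbf{Algorithm}~\ref{alg:contextalgo} runs the greedy subroutine to produce virtual weights $\{x_i(t)\}$ and then samples $i(t)$ with probability $x_i(t)/\sum_{i'} x_{i'}(t)$; I must therefore argue that $\mathbb E[x_j]$ tracks $\lfloor \tilde x^*_j\rfloor$ and that, on the clean event, the sampling step concentrates its deviations on the boundary hypercube so that the excess is charged only once (via the $\tau_{max}$ and $+1$ terms) rather than accumulating across iterations. Making this martingale-style accounting rigorous --- in particular showing that a correctly ordered subroutine reproduces $\lfloor \tilde x^*\rfloor$ up to the split worker, and that taking expectation over the proportional sampling does not inflate the $O(\log T)$ count --- is the principal technical obstacle.
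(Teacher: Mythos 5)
Your middle paragraph reconstructs, almost exactly, the paper's core counting lemma: the paper bounds $\mathbb E[Y_Q(T)\mid T]$, the number of selections of workers in $\mathcal N^-_Q$, by $\xi\ln T+\frac{\pi^2}{3}+1$ via the standard three-event Chernoff--Hoeffding decomposition with exploration threshold $\ell=\left\lceil \frac{8\ln T}{c^2_{min}\delta^2_{min}}\right\rceil$ and then uses $T\leq B/c_{min}$ to pass to $h(\ln B)$, just as you propose (one small repair: on your ``clean event'' the UCB ordering does \emph{not} agree with the true density ordering, since the indices are deliberately inflated; the paper's formulation --- overestimate of $\bar r_Q$, underestimate of $\bar r_{Q'}$, or $\lambda_Q<\ell$ --- is the correct version of your sandwiching claim). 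Likewise your split-worker/$\tau_{max}$ observation matches the paper: for $j\in\lfloor\tilde{\mathbf x}^*\rfloor$ other than the split worker $\lfloor\tilde x^*_j\rfloor=\tau_j$, so only the split worker can be over-selected, by at most $\tau_{max}$.

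The genuine gap is precisely the step you flag and defer as ``the principal technical obstacle'': the coupling between the proportional sampling over the greedy virtual weights and the random, budget-determined horizon. The paper closes it with two devices absent from your plan. First, instead of your direct per-hypercube comparison of $\mathbb E[x_j]$ with $\lfloor\tilde x^*_j\rfloor$ (where surpluses and deficits do not cancel one-for-one because costs differ and the total number of selections is random), it inserts the intermediate benchmark $\mu_{Q_{i^*}}\mathbb E_T[T]$ and proves a budget-accounting lower bound $\mathbb E_T[T]\geq \frac{B-c_{max}}{c_{i^*}}-\cdots$ on the stopping time; the terms involving $j\in\lfloor\tilde{\mathbf x}^*\rfloor$ then cancel exactly between the two halves of the decomposition, which is what makes your heuristic ``a deficit is compensated by at most $c_{max}/c_{min}$ surpluses'' rigorous. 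Second --- and this is where your sketch would actually fail --- the sampler can pick a worker outside the current UCB-greedy support $\lfloor\hat{\mathbf x}^*(t)\rfloor$ at \emph{every} iteration (the split-worker remainder always carries weight up to $c_{max}/c_{min}$ out of total weight at least $B(t)/c_{max}$), so without damping these deviations accumulate as $O(T)$, not $O(1)$ as your ``charged only once'' accounting assumes. The paper damps them with the residual-budget inequality $\frac{c_{min}}{B(t)}\leq\frac{1}{T-t+1}$, giving $\mathbb P(i(t)\notin\lfloor\hat{\mathbf x}^*(t)\rfloor\mid T)\leq\left(\frac{c_{max}}{c_{min}}\right)^2\frac{1}{T-t+1}$, whose harmonic sum is $\left(\frac{c_{max}}{c_{min}}\right)^2\ln T$ --- this, not a ``cost conversion of exploration pulls,'' is the origin of the second summand of $\xi$, and your misattribution of that summand is the concrete symptom that your plan contains no mechanism for this step.
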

    \begin{proof}
      We introduce a redundant term $\mathbb E_T \left[ T \right] \mu_{Q_{i^*}}$ such that
      \begin{align} \label{eq:reg2decom}
        \mathsf{Regret}\left( \lfloor \tilde{\mathbf x}^* \rfloor, \mathbf x, \{\mu_{Q_i}\}^N_{i=1} \right) = \sum^N_{j=1} \mu_{Q_j} \lfloor \tilde{x}^*_i \rfloor - \mu_{Q_{i^*}} \mathbb E_T \left[ T \right] + \mathbb E_T \left[ T\mu_{Q_{i^*}} - \sum_{Q \in \Omega} \sum_{i \in \mathcal N_{Q}} \mu_Q \mathbb E_{\{i(t)\}^T_{t=1}}[x_i \mid T] \right]
      \end{align}
      where the lower-bound of the second term and the upper-bound of the third one are given in \textbf{Lemma}~\ref{le:iterations}. The proof of \textbf{Lemma}~\ref{le:iterations} can be found in \textbf{Appendix A}
      \begin{lemma} \label{le:iterations}
        Supposing $T$ denotes the total number of the iterations our CAWS proceeds with budget $B$, we have the following two inequalities holds
        %
        \begin{align} \label{eq:bdt}
          \mathbb E_T[T] \geq& \frac{B-c_{max}}{c_{i^*}} - \mathbb E_T \left[ \sum_{j \in \lfloor \tilde{\mathbf x}^* \rfloor} \frac{c_j - c_{i^*}}{c_{i^*}} \mathbb E_{\{i(t)\}^T_{t=1}} [x_j \mid T]  \right] - \sum_{Q : c_{max}(\mathcal N^-_Q) > c_{i^*}} \hspace{-5ex} \frac{c_{max}(\mathcal N^-_Q) - c_{i^*}}{c_{i^*}} h(\ln B)
        \end{align}
        and
        \begin{align} \label{eq:reg22}
          &\mathbb E_T \left[ T\mu_{Q_{i^*}} - \sum_{Q \in \Omega} \sum_{i \in \mathcal N_{Q}} \mu_Q \mathbb E_{\{i(t)\}^T_{t=1}}[x_i\mid T] \right]  \nonumber\\
          \leq& \sum_{Q: \mu_{Q_{i^*}} > \mu_Q} (\mu_{Q_{i^*}} - \mu_Q) h(\ln B)   -  \mathbb E_T \left[ \sum_{j \in \lfloor \tilde{\mathbf x}^* \rfloor} (\mu_{Q_{i^*}}-\mu_{Q_j}) \mathbb E_{\{i(t)\}^T_{t=1}}[x_j \mid T] \right]  
        \end{align}
      \end{lemma}

      By substituting the above two inequalities (\ref{eq:bdt}) and (\ref{eq:reg22}) into (\ref{eq:reg2decom}), we have
      \begin{align} \label{eq:reg2upbd1}
        & \mathsf{Regret}\left( \lfloor \tilde{\mathbf x}^* \rfloor, \mathbf x, \{\mu_{Q_i}\}^N_{i=1} \right) \nonumber\\
        %
        %
        %
        =& \sum_{Q \in \Omega}\sum_{j \in Q} \mu_Q \lfloor \tilde{x}^*_j \rfloor - \mathbb E_T [ T ]\mu_{Q_{i^*}} + \mathbb E_T \left[ T\mu_{Q_{i^*}} - \sum_{Q \in \Omega} \sum_{i \in \mathcal N_{Q}} \mu_Q \mathbb E_{\{i(t)\}^T_{t=1}}[x_i \mid T] \right] \nonumber\\
        \leq& \sum_{Q \in \Omega}\sum_{j \in Q} \mu_Q \lfloor \tilde{x}^*_j \rfloor - \frac{\mu_{Q_{i^*}}(B-c_{max})}{c_{i^*}} + \mu_{Q_{i^*}}\mathbb E_T \left[ \sum_{j \in \lfloor \tilde{\mathbf x}^* \rfloor} \frac{c_j - c_{i^*}}{c_{i^*}} \mathbb E_{\{i(t)\}^T_{t=1}} [x_j \mid T]  \right] \nonumber\\
        & + \mu_{Q_{i^*}} \sum_{Q: c_{max}(\mathcal N^-_Q) > c_{i^*}} \hspace{-4ex} \frac{c_{max}(\mathcal N^-_Q) - c_{i^*}}{c_{i^*}} h(\ln B)  + \sum_{Q: \mu_{Q_{i^*}} > \mu_Q} (\mu_{Q_{i^*}} - \mu_Q) h(\ln B) \nonumber\\
        & +  \mathbb E_T \left[ \sum_{j \in \lfloor \tilde{\mathbf x}^* \rfloor} (\mu_{Q_{i^*}}-\mu_{Q_j}) \mathbb E_{\{i(t)\}^T_{t=1}}[x_j \mid T] \right]  \nonumber\\
        \leq& \sum_{Q \in \Omega}\sum_{j \in Q} \mu_Q \lfloor \tilde{x}^*_j \rfloor - \frac{B\mu_{Q_{i^*}}}{c_{i^*}} + \frac{c_{max}\mu_{Q_{i^*}}}{c_{i^*}} \nonumber\\
        &+  \mathbb E_T \Bigg[ \sum_{j \in \lfloor \tilde{\mathbf x}^* \rfloor} \left( \frac{\mu_{Q_{i^*}}c_j}{c_{i^*}} - \mu_{Q_j} \right) E_{\{i(t)\}^T_{t=1}}[x_j \mid T] \Bigg] + \sum_{Q\in \Omega} g \cdot h(\ln B)
      \end{align}
      where 
      \begin{eqnarray}
        g = \mathbb I(c_{max}(\mathcal N^-_Q) > c_{i^*}) \cdot \frac{\mu_{Q_{i^*}}(c_{max}(\mathcal N^-_Q) - c_{i^*})}{c_{i^*}} + \mathbb I(\mu_{Q_{i^*}} - \mu_Q>0) \cdot (\mu_{Q_{i^*}} - \mu_Q)
      \end{eqnarray}
      Since $B \geq \sum_{j\in \lfloor \tilde{\mathbf{x}}^* \rfloor} c_j \lfloor \tilde{x}^*_j \rfloor$, we have
      %
      \begin{eqnarray} \label{eq:reg2upbd2}
        &&\sum_{Q \in \Omega}\sum_{j \in \Omega} \mu_Q \lfloor \tilde{x}^*_j \rfloor - \frac{B\mu_{Q_{i^*}}}{c_{i^*}} \nonumber\\
        &\leq& \sum_{j \in \lfloor \tilde{\mathbf x}^* \rfloor } \mu_{Q_j} \lfloor \tilde{x}^*_j \rfloor - \frac{\mu_{Q_{i^*}}\sum_{j\in \lfloor \tilde{\mathbf{x}}^* \rfloor} c_j \lfloor \tilde{x}^*_j \rfloor}{c_{i^*}} \nonumber\\
        &=& \sum_{j \in \lfloor \tilde{\mathbf x}^* \rfloor } \left( \mu_{Q_j} - \frac{\mu_{Q_{i^*}}c_j}{c_{i^*}} \right) \lfloor \tilde{x}^*_j \rfloor
      \end{eqnarray}
      In addition, since $0 \leq \mu_{Q_{i^*}} \leq 1$, $c_{max}(\mathcal N^-_Q) - c_{i^*} \leq c_{max} - c_{min}$ and $\mu_{Q_{i^*}} - \mu_{Q} \leq 1$, we have
      \begin{eqnarray} \label{eq:reg2upbd3}
        g \leq \frac{c_{max}-c_{min}}{c_{min}} + 1 = \frac{c_{max}}{c_{min}}
      \end{eqnarray}
      Substituting (\ref{eq:reg2upbd2}) and (\ref{eq:reg2upbd3}) into (\ref{eq:reg2upbd1}), we have
      %
      %
      \begin{align} \label{eq:reg2upbd4}
        &\mathsf{Regret}\left( \lfloor \tilde{\mathbf x}^* \rfloor, \mathbf x, \{\mu_{Q_i}\}^N_{i=1} \right) \nonumber\\
        \leq& \sum_{j \in \lfloor \tilde{\mathbf x}^* \rfloor } \left( \mu_{Q_j} - \frac{\mu_{Q_{i^*}}c_j}{c_{i^*}} \right) \lfloor \tilde{x}^*_j \rfloor + \frac{c_{max}\mu_{Q_{i^*}}}{c_{i^*}} + d^M \frac{c_{max}}{c_{min}} \left( \xi \ln T + \frac{\pi^2}{3} + 1 \right) \nonumber\\
        & + \mathbb E_T \Bigg[ \scriptstyle{ \sum_{j \in \lfloor \tilde{\mathbf x}^* \rfloor} \left( \frac{\mu_{Q_{i^*}}(c_j - c_{i^*})}{c_{i^*}}+(\mu_{Q_{i^*}}-\mu_{Q_j}) \right) \mathbb E_{\{i(t)\}^T_{t=1}}[x_j \mid T] } \Bigg] \nonumber\\
        =& \sum_{j \in \lfloor \tilde{\mathbf x}^* \rfloor } \left( \mu_{Q_j} - \frac{\mu_{Q_{i^*}}c_j}{c_{i^*}} \right) \lfloor \tilde{x}^*_j \rfloor + \frac{c_{max}\mu_{Q_{i^*}}}{c_{i^*}} + d^M \frac{c_{max}}{c_{min}} \left( \xi \ln \left( \frac{B}{c_{min}} \right) + \frac{\pi^2}{3} + 1 \right) \nonumber\\
        & -  \mathbb E_T \left[ \sum_{j \in \lfloor \tilde{\mathbf x}^* \rfloor} \left( \mu_{Q_j} - \frac{\mu_{Q_{i^*}}c_j}{c_{i^*}} \right) \mathbb E_{\{i(t)\}^T_{t=1}}[x_j \mid T] \right] \nonumber\\
        \leq& \mathbb E_T \Bigg[ \sum_{j \in \lfloor \tilde{\mathbf x}^* \rfloor} \frac{\mu_{Q_{i^*}}c_j}{c_{i^*}}  \left( \mathbb E_{\{i(t)\}^T_{t=1}}[x_j \mid T] - \lfloor \tilde{x}^*_j \rfloor  \right) \Bigg]   d^M \frac{c_{max}}{c_{min}} \left( \xi \ln \left( \frac{B}{c_{min}} \right) + \frac{\pi^2}{3} + 1 \right) + \frac{c_{max}\mu_{Q_{i^*}}}{c_{i^*}} 
      \end{align}
      where we have the last inequality holds by considering $\frac{\mu_{Q_{i^*}}}{c_{i^*}} \geq \frac{\mu_{Q_{j}}}{c_{j}}$ for $\forall j\in \mathcal N$.

      The first term at the right side of the above inequality can be written as
      \begin{align} \label{eq:reg2upbd5}
        & \quad\quad \mathbb E_T \left[ \sum_{j \in \lfloor \tilde{\mathbf x}^* \rfloor}  \frac{\mu_{Q_{i^*}}c_j}{c_{i^*}} \left( \mathbb E_{\{i(t)\}^T_{t=1}}[x_j \mid T] - \lfloor \tilde{x}^*_j \rfloor  \right) \right] \nonumber\\
        &= \sum_{j \in \lfloor \tilde{\mathbf x}^* \rfloor} \frac{\mu_{Q_{i^*}}c_j}{c_{i^*}}  \left( \mathbb E_T [\mathbb E_{\{i(t)\}^T_{t=1}}[x_j \mid T]] - \lfloor \tilde{x}^*_j \rfloor  \right) \nonumber\\
        &= \sum_{j \in \lfloor \tilde{\mathbf x}^* \rfloor} \frac{\mu_{Q_{i^*}}c_j}{c_{i^*}}  \left( \mathbb E_T [\mathbb E_{\{i(t)\}^T_{t=1}}[x_j - \lfloor \tilde{x}^*_j \rfloor \mid T]]   \right) \nonumber\\
        &\leq \hspace{-2ex} \sum_{j \in \lfloor \tilde{\mathbf x}^* \rfloor : x_j > \lfloor \tilde{x}^*_j \rfloor} \frac{\mu_{Q_{i^*}}c_j}{c_{i^*}} \left( \mathbb E_T [\mathbb E_{\{i(t)\}^T_{t=1}}[x_j - \lfloor \tilde{x}^*_j \rfloor \mid T]]   \right)
      \end{align} 
      Suppose $\tilde k$ denotes the split worker in $\lfloor \tilde{\mathbf x}^* \rfloor$. For any worker $j$ such that $j\in \lfloor \tilde{\mathbf x}^* \rfloor$ and $j \neq \tilde k$, we have $\lfloor \tilde{x}^*_j \rfloor = \tau_j$, while $\lfloor \tilde{x}^*_{\tilde k} \rfloor \leq \tau_{\tilde k}$. Therefore, for $\forall j\in \lfloor \tilde{\mathbf x}^* \rfloor$, $x_j - \lfloor \tilde{x}^*_j \rfloor \leq 0$, and the split worker $\tilde k$ is the only possible one such that $x_{\tilde k} - \lfloor \tilde{x}^*_{\tilde k} \rfloor \geq 0$ may hold. Also, since $c_j \leq c_{max}$ for $\forall j\in \mathcal N$, continuing the above equation (\ref{eq:reg2upbd5}), we have
      \begin{align} \label{eq:reg2upbd6}
        \mathbb E_T \left[ \sum_{j \in \lfloor \tilde{\mathbf x}^* \rfloor}  \frac{\mu_{Q_{i^*}}c_j}{c_{i^*}}  \left( \mathbb E_{\{i(t)\}^T_{t=1}}[x_j \mid T] - \lfloor \tilde{x}^*_j \rfloor  \right) \right] \leq \frac{\tau_{max}\mu_{Q_{i^*}}c_{max}}{c_{i^*}}
      \end{align} 
      We complete the proof by substituting (\ref{eq:reg2upbd6}) into (\ref{eq:reg2upbd4}) as follows
      \begin{align} \label{eq:reg2upbd7}
        & \mathsf{Regret}\left( \lfloor \tilde{\mathbf x}^* \rfloor, \mathbf x, \{\mu_{Q_i}\}^N_{i=1} \right) \nonumber\\
        \leq& \frac{\tau_{max}\mu_{Q_{i^*}}c_{max}}{c_{i^*}}  + \frac{c_{max}\mu_{Q_{i^*}}}{c_{i^*}} + d^M \frac{c_{max}}{c_{min}} h(\ln B) \nonumber\\
        \leq& (\tau_{max}+1)\frac{c_{max}}{c_{min}} + d^M \frac{c_{max}}{c_{min}}h(\ln B)
      \end{align}
    \end{proof}

    Now, we are ready to prove our main result shown in \textbf{Theorem}~\ref{thm:main}. Combining \textbf{Theorem}~\ref{thm:reg1bd} and \textbf{Theorem}~\ref{thm:reg2upbd} into (\ref{eq:regdecom}), we have
    \begin{align} \label{eq:main2}
      \mathsf{Regret}(T, \{i(t)\}^T_{t=1}) \leq \left( \tau_{max} + d^M h(\ln B) + 1 \right)\frac{c_{max}}{c_{min}} + \frac{4 \Delta B}{c_{min}} + 1 
      %
    \end{align}
    Letting $d = \left\lceil B^{\frac{1}{\alpha+M}} \right\rceil$, it follows
    \begin{equation} \label{eq:dmupbd}
      d^M = {\left\lceil B^{\frac{1}{\alpha+M}} \right\rceil}^M \leq 2^M B^{\frac{M}{\alpha+M}}
    \end{equation}
    and
    \begin{equation} \label{eq:deltaupbd}
      \Delta =L \left( M^{\frac{1}{2}}d^{-1} \right)^\alpha \leq L M^{\frac{\alpha}{2}} B^{-\frac{\alpha}{\alpha+M}}
    \end{equation}
    when $\alpha > 0$ as shown in \textbf{Assumption}~\ref{ap:holder}. We finally complete the proof by substituting (\ref{eq:dmupbd}) and (\ref{eq:deltaupbd}) into (\ref{eq:main2}).

    \subsection{Discussion} \label{ssec:dis}
      As shown above, the upper-bound on the regret can be written as $\mathcal O(d^M \ln B)$, which can be further represented as $\mathcal O(B^{\frac{M}{\alpha+M}}\ln B)$ by substituting $d = \left\lceil B^{\frac{1}{\alpha+M}} \right\rceil$. It is illustrated that the granularity of partitioning, i.e., $d$, is one of the key factors dominating the upper-bound. Although there have been many existing proposals investigating the budget-limited worker selection problem, they usually exploit and explore the individual workers directly (see Sec.~\ref{sec:relwork} later). For example, in \cite{TranCRJ-AAAI12,HanZL-TON16,RangiF-AAMAS18}, if taking into account the number of workers (i.e., $K$), individually learning the workers' sensing abilities results in an upper-bound $\mathcal O(K \ln B)$ on the regret. Fortunately, the exploration-exploitation trade-off in our CAWS algorithm is made in the context space, and we fine tune the granularity of partitioning such that $d^M \ll K$ especially when budget $B$ is considerably limited while $K$ is rather huge. Hence, our CAWS algorithm is of significant scalability in selecting among massive unknown workers with limited budget. Furthermore, due to the budget constraint, the number of iterations $T$ is $\Theta(B)$, since $\frac{B}{c_{max}} \leq T \leq \frac{B}{c_{min}}$. Therefore, the upper-bound on the regret can be re-written as $\mathcal O(d^M \ln T)$ where $d^M$ is the number of the hypercubes we explore and exploit in our CAWS algorithm. Considering the upper-bound is $\mathcal O(K\ln T)$ for the regret of the standard MAB problem where no constraint on the total budget is considered and $K$ represents the number of arms (corresponding to workers in our case), the advantage of our CAWS algorithm can be further confirmed. We will also verify the efficacy of our algorithm by extensive experiments on both synthetic and real datasets later in Sec.~\ref{sec:exp}.
      
      Similar to most of existing proposals (e.g., \cite{HanZL-TON16,RangiF-AAMAS18,GaoWYXC-ICPADS19,GaoWXC-INFOCOM20}), we currently assume the workers have static sensing abilities, since each worker has fixed context in our case; nevertheless, the context for a worker may be time-varying in some application scenarios such that the workers may have their sensing abilities changed over time. For example, a mobile worker usually have distinct sensing abilities at different locations which are far apart from each other, when conducting a location-based sensing task. Since our CAWS algorithm aims at learning the correlation between context and sensing ability rather than the sensing abilities of individual workers, it can estimate the sensing abilities of the workers according to their instant contexts, instead of re-learning once their sensing abilities are changed. Specifically, we let each worker $i$ first report its instant context $\phi_i(t)$ to the crowdsensing platform in each iteration $t$, and the platform then can make a proper selection decision by taking $\{\phi_i(t)\}^N_{i=1}$ as the input parameters of \textbf{Algorithm}~\ref{alg:contextalgo}. We will verify the efficacy of this adaptation by extensive experiments on a dataset of vehicular trajectories later in Sec.~\ref{sec:exp}.

\section{Experiments} \label{sec:exp}
  In this section, we evaluate the performance of our CAWS algorithm in terms of selecting among massive unknown workers under a limited budget through extensive experiments. We first introduce the reference algorithms in Sec.~\ref{ssec:baseline} and then compare them with our CAWS algorithm using a synthetic dataset and two real datasets in Sec.~\ref{ssec:synthetic} and Sec.~\ref{ssec:real}, respectively. We finally report the experiments in adapting our CAWS algorithm to select among massive unknown mobile workers with time-varying contexts in Sec.~\ref{ssec:ext}.

  \subsection{Reference Algorithms} \label{ssec:baseline}
    We mainly compare our CAWS algorithm with the following ones which can be applied to our problem.
    \begin{itemize}
      \item \textbf{Oracle}: Oracle is aware of the sensing ability of each worker; therefore, it applies the density-ordered greedy algorithm to output a nearly optimal solution.
      \item \textbf{Bounded $\epsilon$-first}: The bounded $\epsilon$-first algorithm is with decoupled exploitation and exploration \cite{ThanhSRJ-AI14}. Under a $\epsilon$-fraction of the budget, it explores the workers uniformly to estimate their sensing abilities; while with the remaining budget, it assigns the task to the workers according to their estimated sensing abilities through the density-ordered greedy algorithm. 
      \item \textbf{B-KUBE}: B-KUBE is a CMAB-based algorithm to handle BKPs, where workers with unknown sensing abilities and bounded sensing capacities are selected under a given budget \cite{RangiF-AAMAS18}. It can be considered to be a degeneration of CAWS where the context space is sufficiently partitioned such that each hypercube contains only one worker. Our CAWS algorithm is then degraded to that we estimate the workers' sensing abilities directly by their UCB indices which are calculated according to their historical performances. We have conducted theoretical comparison between it and our CAWS algorithm in Sec.~\ref{ssec:dis}, and we now continue the comparison between the two algorithms through numerical experiments.
      \item \textbf{Random}: The (purely) random algorithm selects an available worker (whose residential capacity is non-zero) uniformly in each iteration until the budget is exhausted or none of the workers have non-zero residual capacities.
    \end{itemize}

  \subsection{Evaluation with Synthetic Data} \label{ssec:synthetic}
    We hereby quantitatively evaluate the above algorithms based on synthetic data. We conduct our experiments by assuming there are $10^5$ workers whose capacities and costs are distributed uniformly in $[20,40]$ and $[1, 1.5]$, respectively. We suppose the context space $\mathcal S$ has $M=2$ dimensions and each dimension is normalized to $[0,1]$ as mentioned in Sec.~\ref{ssec:model}. The workers have their contexts uniformly distributed in the context space $\mathcal S$. We also randomly set the workers' sensing abilities such that the H$\ddot{\text{o}}$lder condition holds for $\alpha=1$ for the purpose of quantitative analysis (e.g., one choice is to let each worker have its sensing ability being the numerical average of its context dimensions).

    We first study the performance of the different algorithms in terms of cumulative sensing revenue (i.e., the total actual reward obtained in practice) and report the results in Fig.~\ref{fig:synbudget}~(a). We vary the budget from $4 \times 10^4$ to $4 \times 10^5$ with a step size $4 \times 10^4$. Note that the budget is only at most $4$ times higher (or even smaller) than the number of the workers in this setting. In a nutshell, compared with $N$, $B$ is quite limited. It is shown in Fig.~\ref{fig:synbudget}~(a) that our CAWS algorithm yields higher cumulative revenue than the others, since the context information of the workers can be effectively utilized in CAWS algorithm to estimate the workers' sensing abilities even we do not have sufficient budget to enable a direct estimation for each worker. Furthermore, the performance of our algorithm is very close to the one of the oracle algorithm, especially when the budget is more limited. We also plot the regrets of the different algorithms in Fig.~\ref{fig:synbudget}~(b). Since the regret of the oracle algorithm is always almost zero, we do not show it in Fig.~\ref{fig:synbudget}~(b). Consistent with what has been shown in Fig.~\ref{fig:synbudget}~(a), our CAWS algorithm has a much lower regret than the other three alternatives. When the budget is increased, our algorithm proceeds more iterations such that the regret is increased but at a very low rate, which is consistent with our theoretical result in \textbf{Theorem}~\ref{thm:main}. 
    \begin{figure}[htb!]
    \begin{center}
      \parbox{.49\textwidth}{\center\includegraphics[width=.46\textwidth]{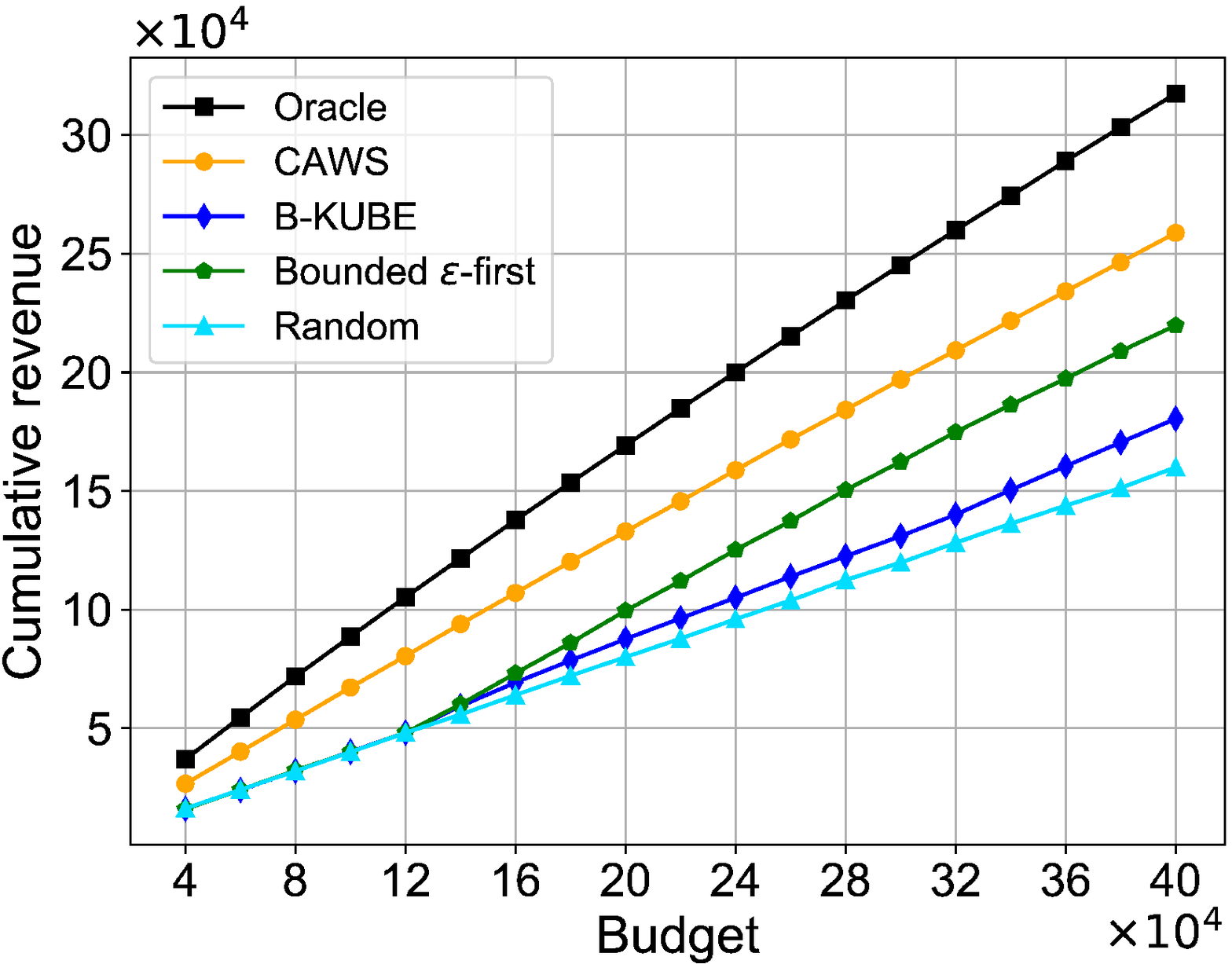}}
      \parbox{.49\textwidth}{\center\includegraphics[width=.48\textwidth]{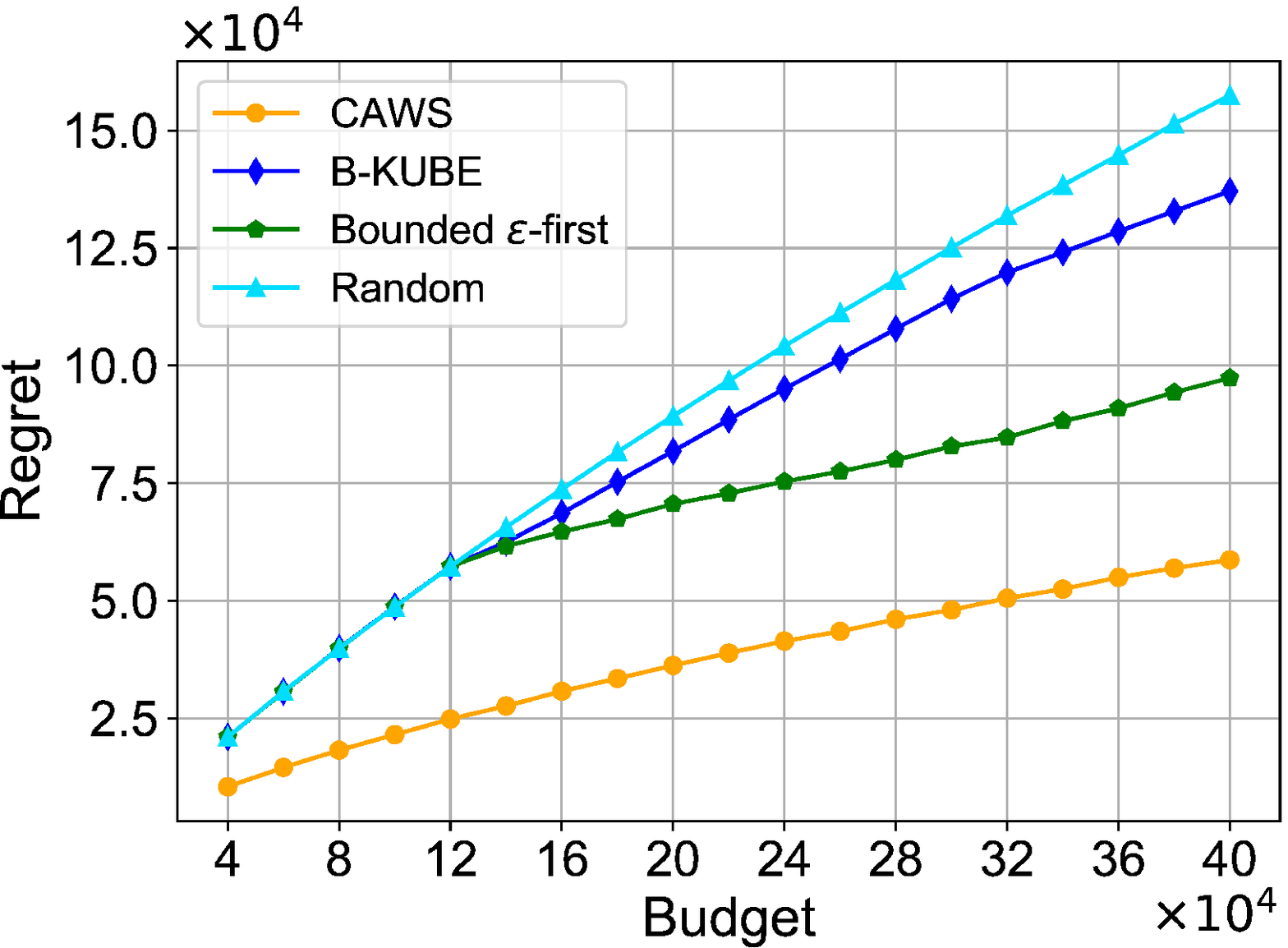}}
      \parbox{.49\textwidth}{\center\scriptsize(a) Cumulative revenue}
      \parbox{.49\textwidth}{\center\scriptsize(b) Regret}
      \caption{Comparisons with different budgets on synthetic dataset.}
      \label{fig:synbudget}
    \end{center}
    \end{figure}

    We then fix budget $B=1 \times 10^5$ and vary the number of workers $N=4,6,8,10 \times 10^4$ to show the scalabilities of the different algorithms in face of an increasing number of workers. The results in terms of cumulative revenue and regret are presented in Fig.~\ref{fig:syndiffn}. It is demonstrated that, regardless of how many workers are given, our CAWS algorithm can yield more cumulative revenue and result in much smaller regret than the other reference algorithms. Especially when there are $10^5$ workers, for each of the reference algorithms, its regret is even two times higher than the one of our CAWS algorithm. Moreover, even when the number of workers is increased, the cumulative revenue and the regret of our CAWS algorithm are stable, since the performance of our algorithm mainly depends on the budget as shown in \textbf{Theorem}~\ref{thm:main} while the budget is fixed in our setting. As there may be more elite workers participating in the sensing task when the total number of the workers is increased, both the oracle algorithm and our CAWS algorithm yield a little more cumulative revenue. Additionally, due to the exploration-exploitation trade-off in our CAWS algorithm, there is a very slight increase in the regret of our algorithm. In contrast, the performances of the other algorithms are obviously degraded in face of a large number of unknown workers, as they have no sufficient budget to explore and exploit the workers individually.
    \begin{figure}[htb!]
    \begin{center}
      \parbox{.49\textwidth}{\center\includegraphics[width=.48\textwidth]{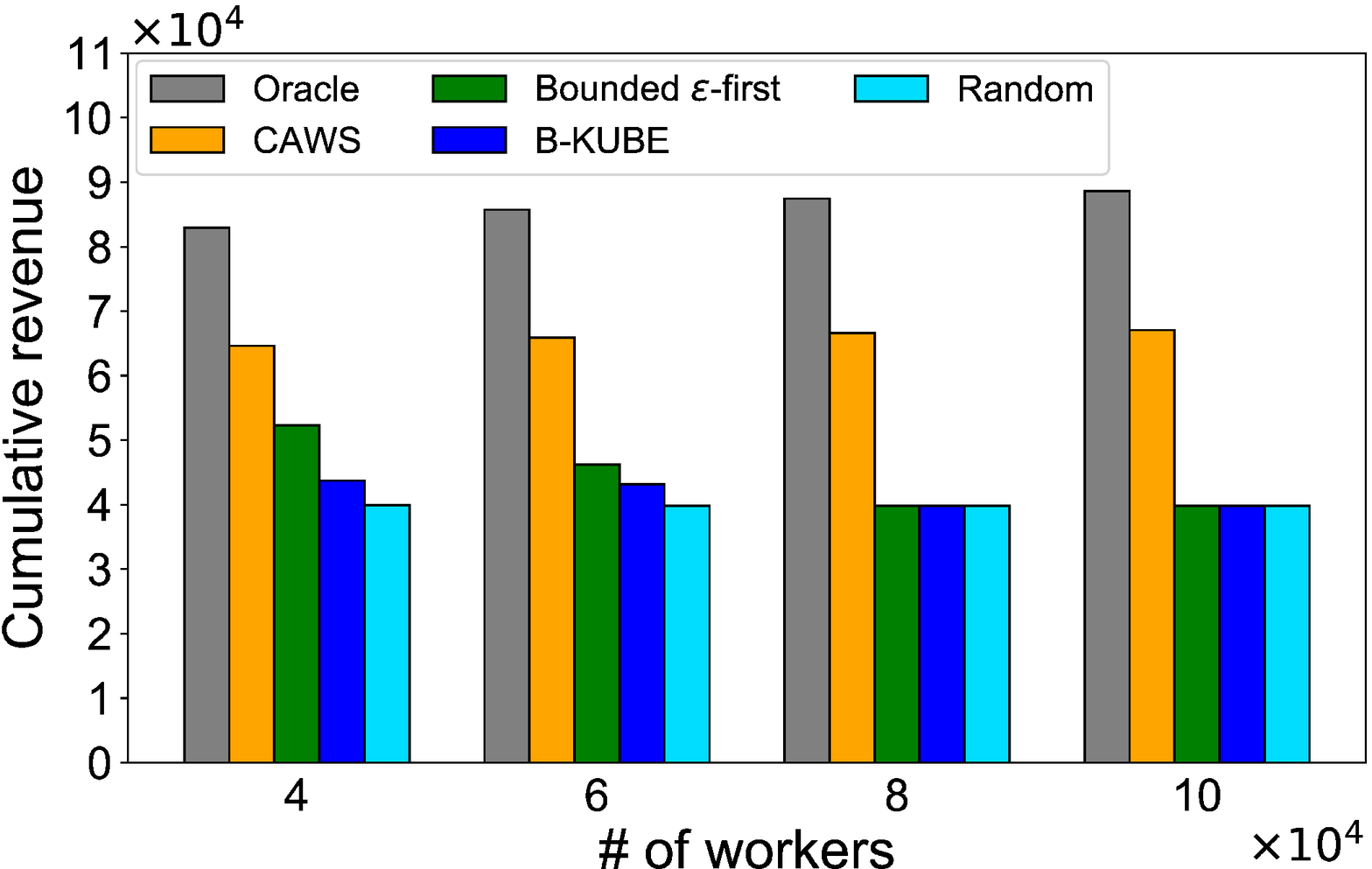}}
      \parbox{.49\textwidth}{\center\includegraphics[width=.48\textwidth]{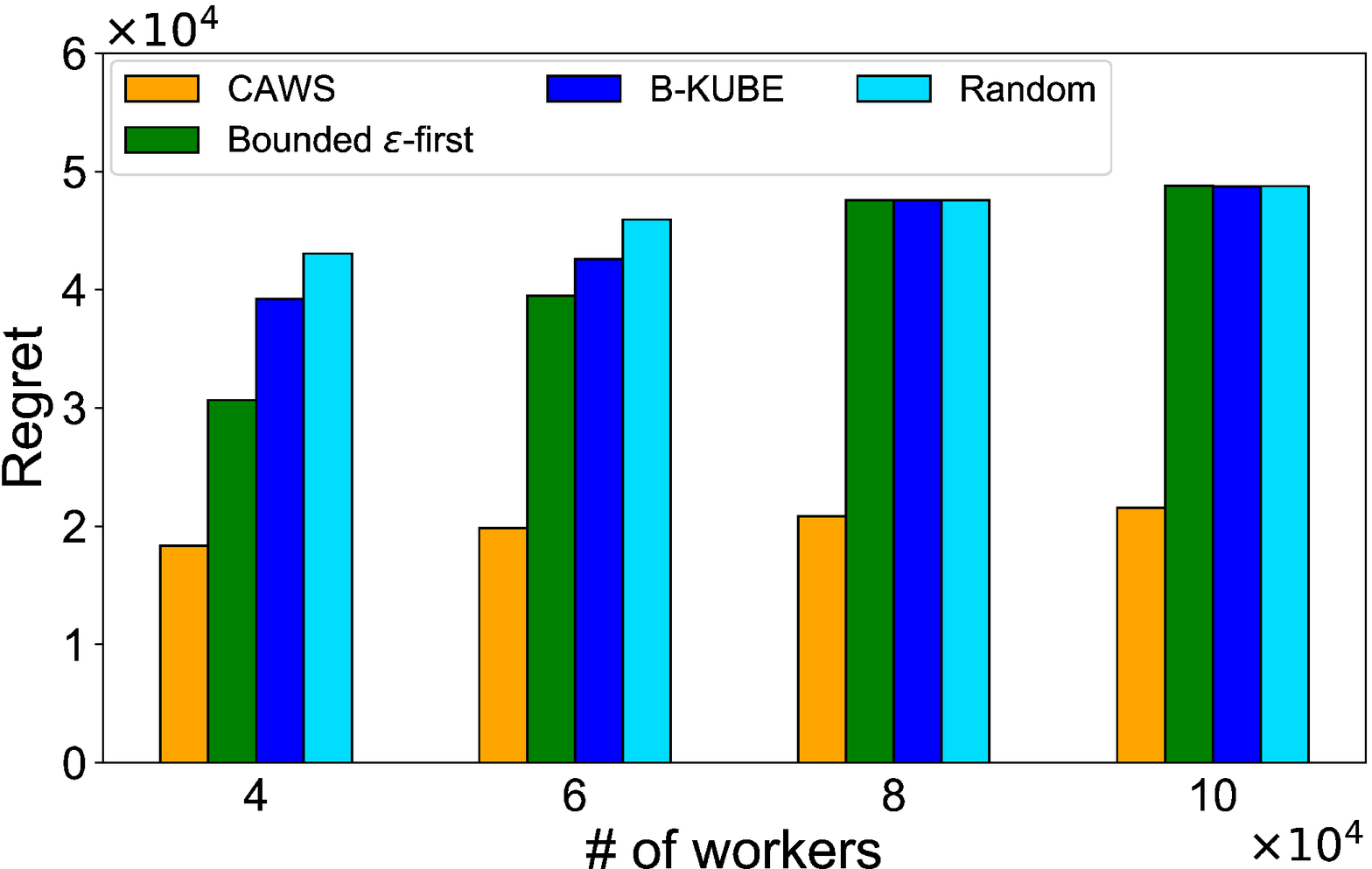}}
      \parbox{.49\textwidth}{\center\scriptsize(a) Cumulative revenue}
      \parbox{.49\textwidth}{\center\scriptsize(b) Regret}
      \caption{Comparisons with different numbers of workers on synthetic dataset.}
      \label{fig:syndiffn}
    \end{center}
    \end{figure}
  
  \subsection{Evaluation with Real Data} \label{ssec:real}
    \subsubsection{Experiments on Vehicular Trajectory Dataset} \label{sssec:szstatic}
      We first adopt a vehicular trajectory dataset consisting of $13,798$ taxicabs' GPS coordinates collected over $8$ days in Shenzhen, China~\cite{YangWYJ-TITS19,LiKWJ-TITS21}. Each data sample in the dataset contains a vehicle’s license plate number, longitude location, latitude location, etc. We randomly choose a spot (e.g., the center of the whole area) as the venue to conduct sensing tasks. We also randomly choose a time window of five minutes, within which, there are $7,365$ vehicles reporting $54,395$ GPS coordinates. Considering distance is usually one of the main concerns for location-based sensing tasks (e.g., air pollution surveillance or noise monitoring) while drivers (or workers) carrying abundantly powered sensor devices (e.g., mobile phones) may more prefer to conduct crowdsensing tasks~\cite{GaoWXC-INFOCOM20,MullerTSK-TON18}, we choose \textit{distance to task spot} and \textit{battery state} as dimensions to construct a two-dimensional context space. Specifically, each worker could estimate its prospective trajectory in our time window and report the center of the trajectory to the crowdsensing platform. We assume the battery state for each worker obeys a uniform distribution in $[0,1]$. Given a worker with context $\phi=(\phi^{[0]}, \phi^{[1]})$ where $\phi^{[0]}$ and $\phi^{[1]}$ denote the distance to the task spot and the battery state, we define its sensing ability as $\mathbb E[r(\phi)] = \frac{1}{\sigma\sqrt{2\pi}} \exp \left( -\frac{({\phi^{[0]}})^2}{2\sigma^2} \right) \cdot \sqrt{\phi^{[1]}}$, by borrowing the idea from \cite{MullerTSK-TON18}. We let $\sigma=1$ such that the H$\ddot{\text{o}}$lder condition holds for $\alpha=2$ to facilitate our quantitative analysis. We then normalize the workers' sensing abilities into $[0,1]$. Note that our algorithm is compatible with an arbitrary mapping from context to sensing ability and the theoretical result shown in \textbf{Theorem}~\ref{thm:main} holds if the correlation between context and sensing ability respects the H$\ddot{\text{o}}$lder condition. Additionally, we adopt the same settings in terms of costs and capacities as our previous experiments on the synthetic dataset.

      Likewise, we first report in Fig.~\ref{fig:szbudget} the cumulative revenues and the regrets of the different algorithms with the budget varying from $2 \times 10^3$ to $3 \times 10^4$. It is illustrated that our algorithm yields higher cumulative revenue and smaller regret, compared with the other reference algorithms. We also evaluate the scalability of the different algorithms in handling an increasing number of unknown workers with fixed budget $B=1\times 10^4$. The results are reported in Fig.~\ref{fig:szworker}. Unsurprisingly, compared with other reference algorithms, our CAWS algorithm always has a much better performance in terms of both cumulative revenue and regret, no matter how many unknown workers are given. Furthermore, similar to what we have shown in Sec.~\ref{ssec:synthetic}, when the number of unknown workers is increased, our CAWS algorithm has almost the same performance such that only a slight increase can be observed in terms of cumulative revenue and regret respectively, whereas the reference algorithms obviously yield less cumulative revenue and have higher regret.
      \begin{figure}[htb!]
      \begin{center}
        \parbox{.49\textwidth}{\center\includegraphics[width=.48\textwidth]{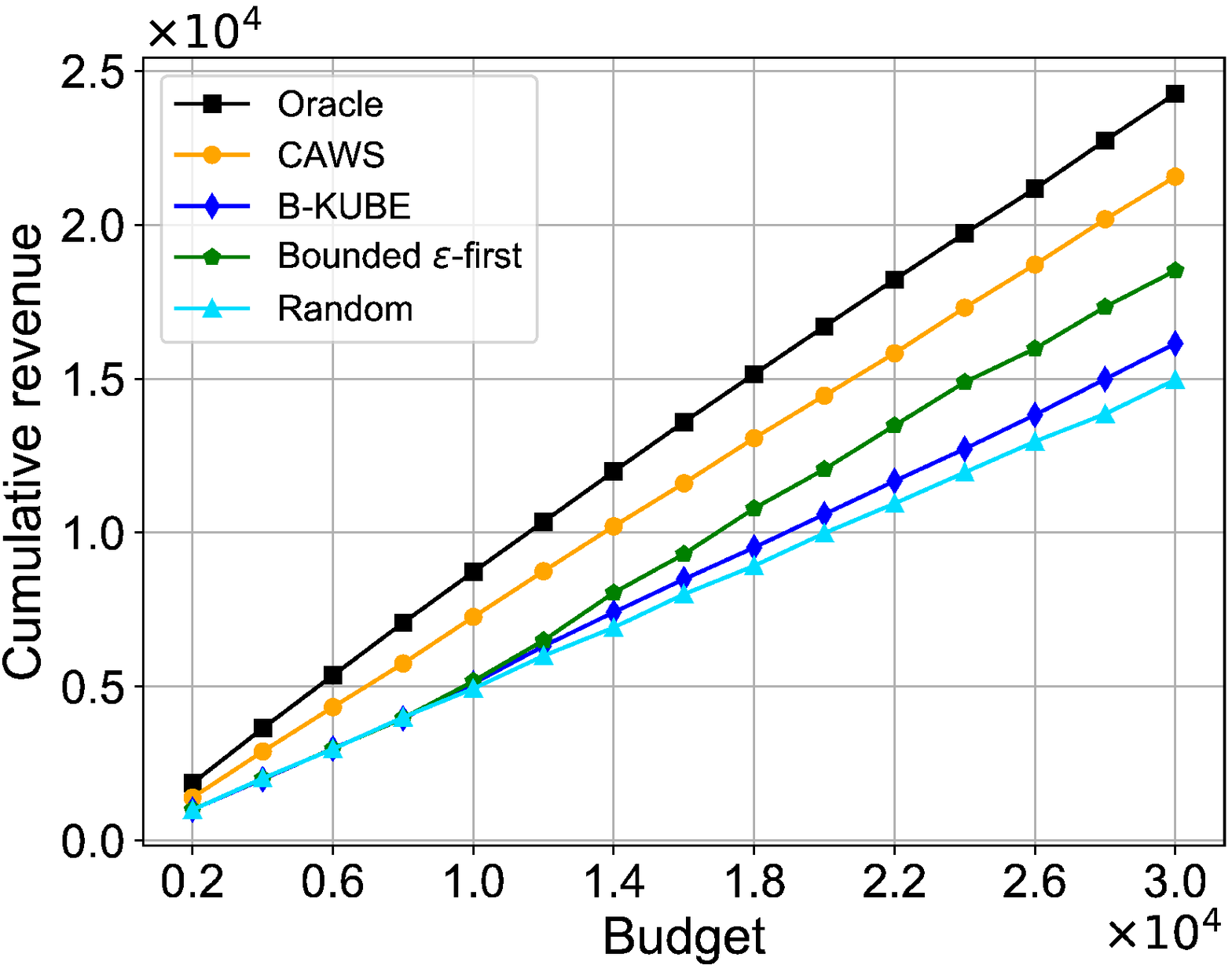}}
        \parbox{.49\textwidth}{\center\includegraphics[width=.48\textwidth]{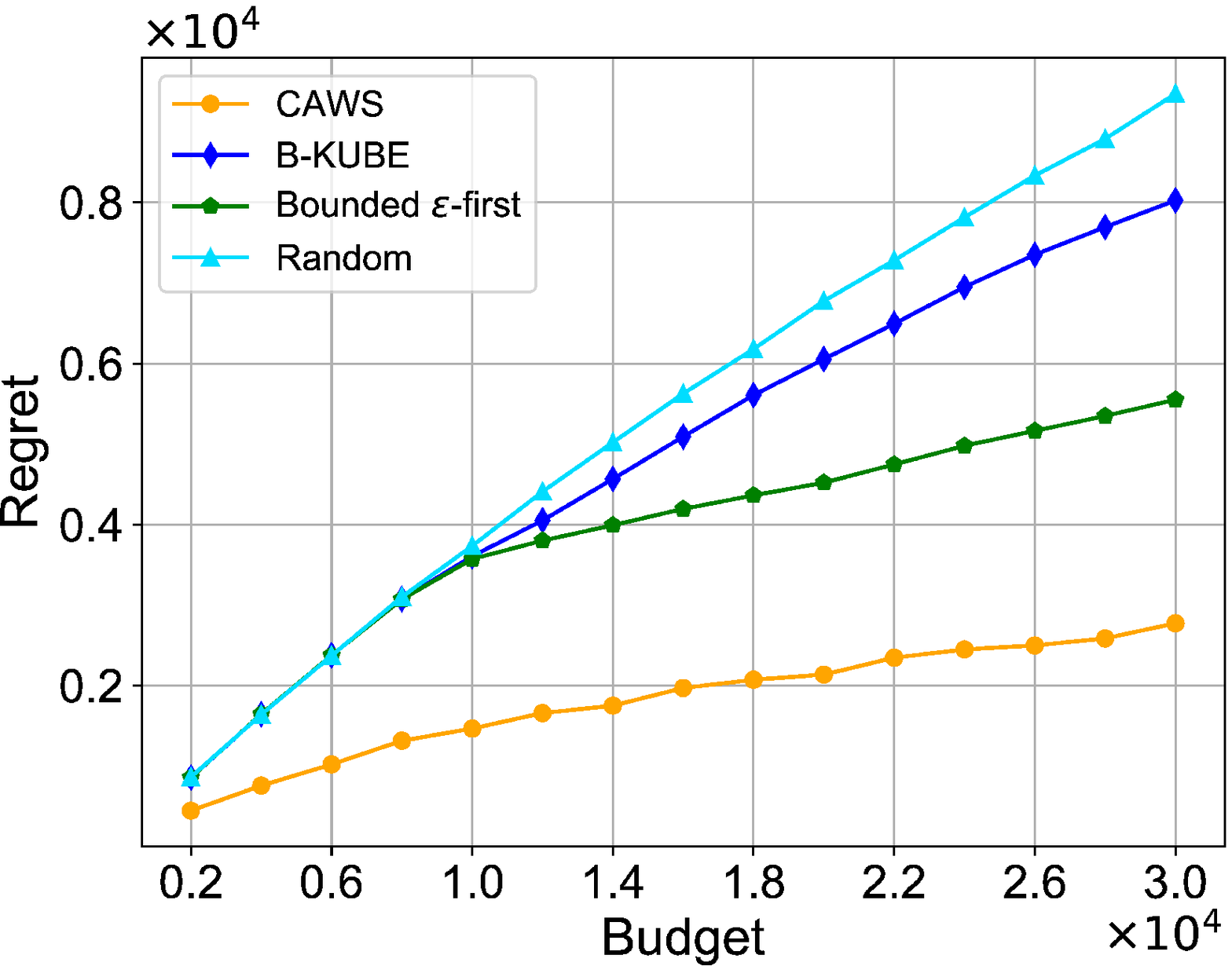}}
        \parbox{.49\textwidth}{\center\scriptsize(a) Cumulative revenue}
        \parbox{.49\textwidth}{\center\scriptsize(b) Regret}
        \caption{Comparisons with different budgets on vehicular trajectory dataset.}
      \label{fig:szbudget}
      \end{center}
      \end{figure}
      \begin{figure}[htb!]
      \begin{center}
        \parbox{.49\textwidth}{\center\includegraphics[width=.48\textwidth]{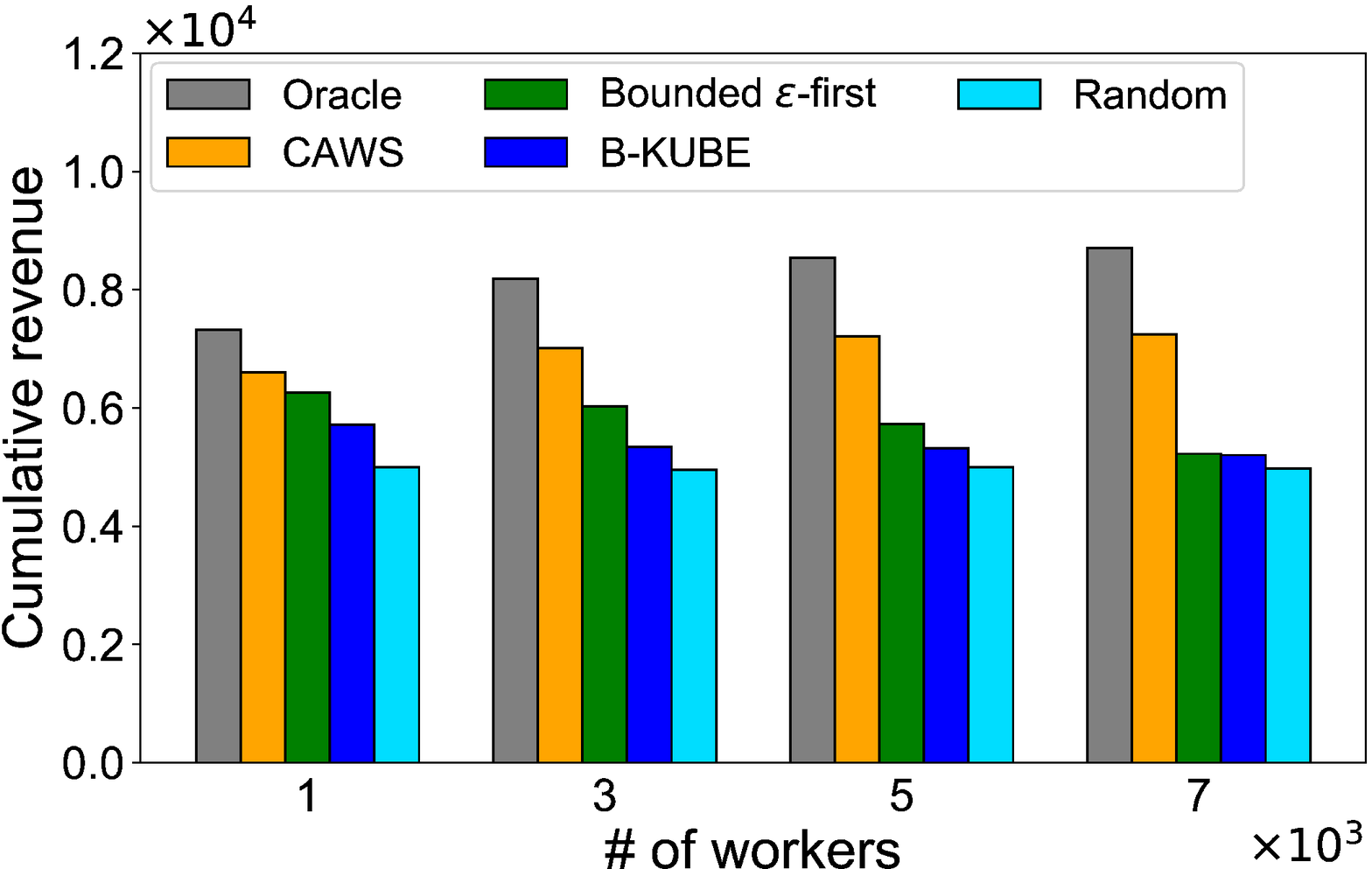}}
        \parbox{.49\textwidth}{\center\includegraphics[width=.48\textwidth]{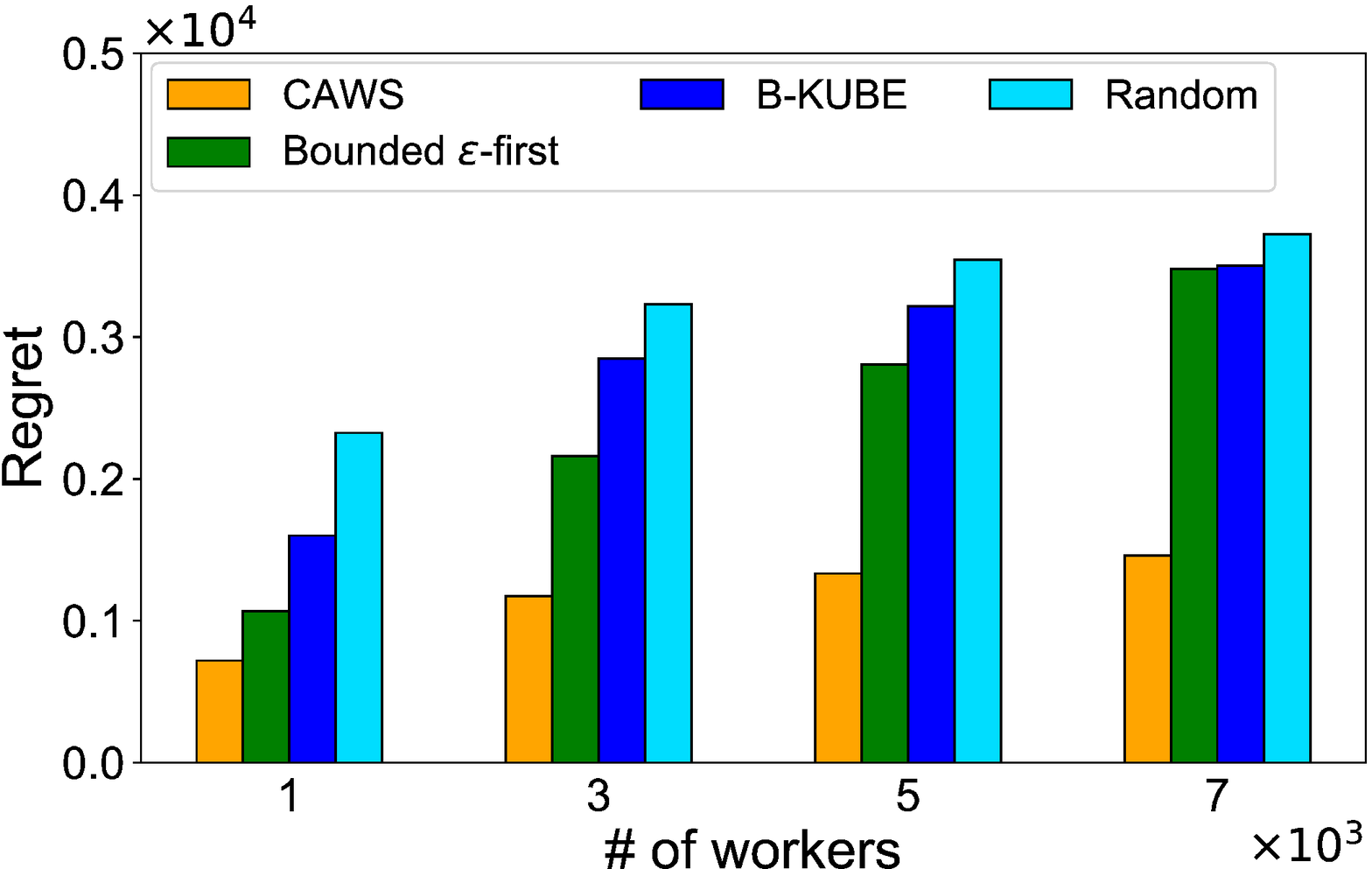}}
        \parbox{.49\textwidth}{\center\scriptsize(a) Cumulative revenue}
        \parbox{.49\textwidth}{\center\scriptsize(b) Regret}
        \caption{Comparisons with different numbers of workers on vehicular trajectory dataset.}
      \label{fig:szworker}
      \end{center}
      \end{figure}

    \subsubsection{Experiments on Yelp Dataset} \label{sssec:yep}
      In this section, we evaluate the performance of our CAWS algorithm in a crowdsensing application based on the dataset published by Yelp~\cite{yelp}. In fact, crowdsensing is a general paradigm for ubiquitous sensing, and the dataset includes abundant real-world traces for emulating spatial crowdsensing where Yelp workers are employed to review (or ``sense'') local business.
    
      We randomly choose $10^5$ workers from the dataset. For each worker, we set the number of his/her reviews as his/her capacities. Since there is no cost parameters for the workers in the dataset, we choose the cost parameters uniformly in the range $[1, 1.5]$ at random. In the Yelp dataset, the sensed data (i.e., the reviews of the workers on business) is voted by reviewers. For each of the sensed data, we assume that we get a unit of reward if it receives at least three positive votes
      %
      %
      In another word, we have $r_i=1$ if the data reported by worker $i$ receives at least three positive votes and thus is qualified; otherwise, $r_i=0$. We choose \textit{number of fans}, \textit{number of friends} and \textit{number of years as elite} as the context dimensions, considering their strong correlations to the data quality. Due to the space limit, we take a two-dimensional context space as an example where we adopt \textit{number of fans} and \textit{number of friends} as the dimensions, and illustrate the data quality distribution in the context space in Fig.~\ref{fig:2features}. It is apparently observed that the two context dimensions are closely related to the data quality. In our experiments, we gradually increase the dimensionality of the context space, to evaluate our algorithm.
      \begin{figure}[htb] 
        \centering
          \includegraphics[width=.5\columnwidth]{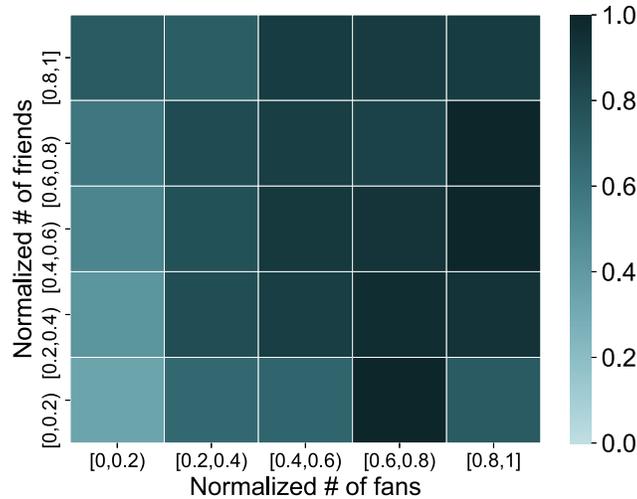}
        \caption{Data quality distribution in context space.}
        \label{fig:2features}
      \end{figure}
      Different from the synthetic dataset where $\alpha$ is controllable, we have to figure out an appropriate value for $\alpha$ when using the Yelp dataset to construct the context space, since $\alpha$ is an intrinsic parameter for real data. To quantitatively evaluate our algorithm, we first illustrate in Fig.~\ref{fig:alpha} the impact of different values of $\alpha$ on the performance of our algorithm. According to the results shown in Fig.~\ref{fig:alpha}, we set $\alpha = 2, 0.75, 0.25$ for $M=1,2,3$ respectively in the following. It is worthy to note that our algorithm still work with arbitrary $\alpha$ and we hereby seek for an appropriate value for $\alpha$ only for the purpose of quantitative evaluation.
      \begin{figure*}[htb!] 
        \begin{center}
          \parbox{.32\textwidth}{\center\includegraphics[width=.31\textwidth]{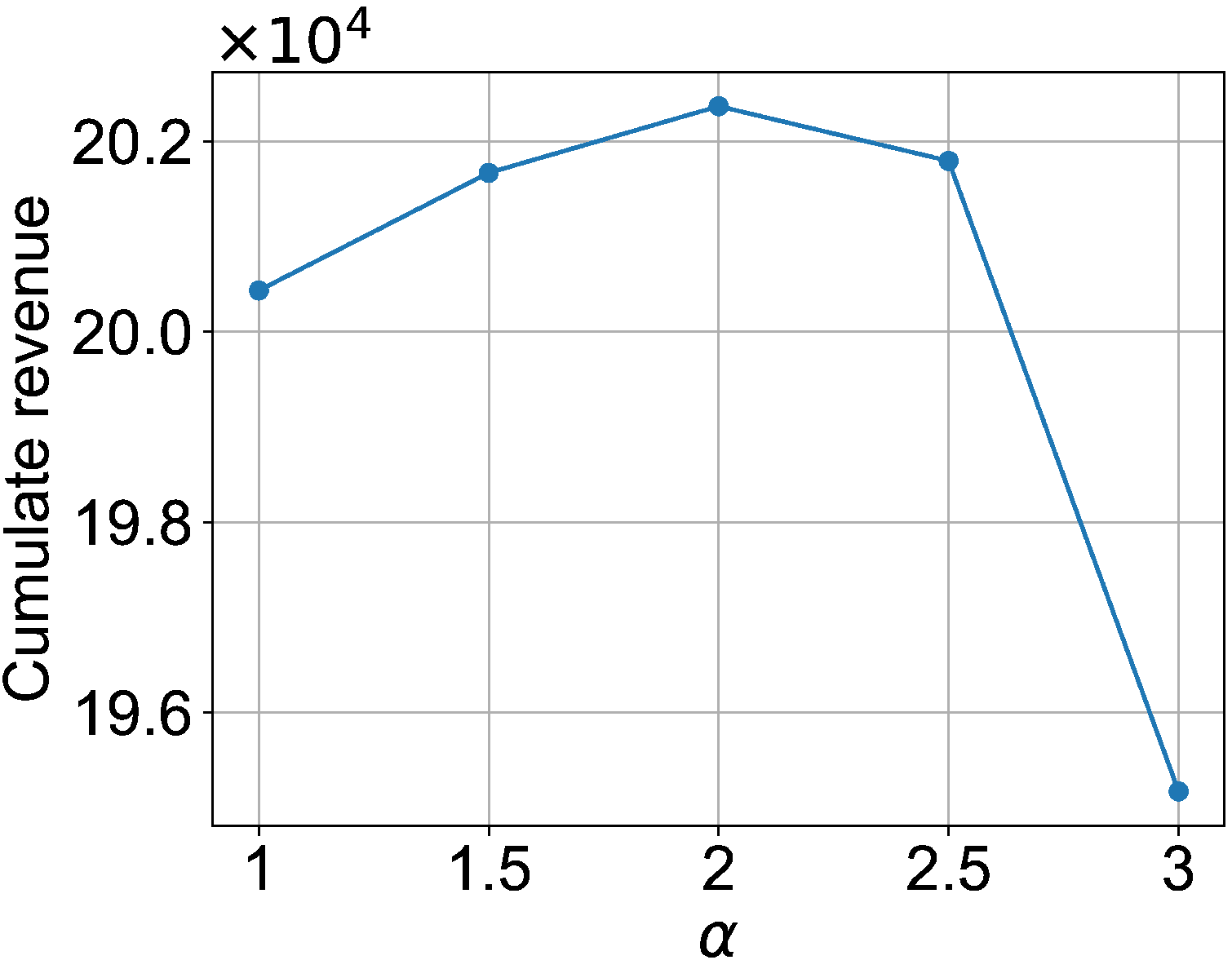}}
          \parbox{.32\textwidth}{\center\includegraphics[width=.31\textwidth]{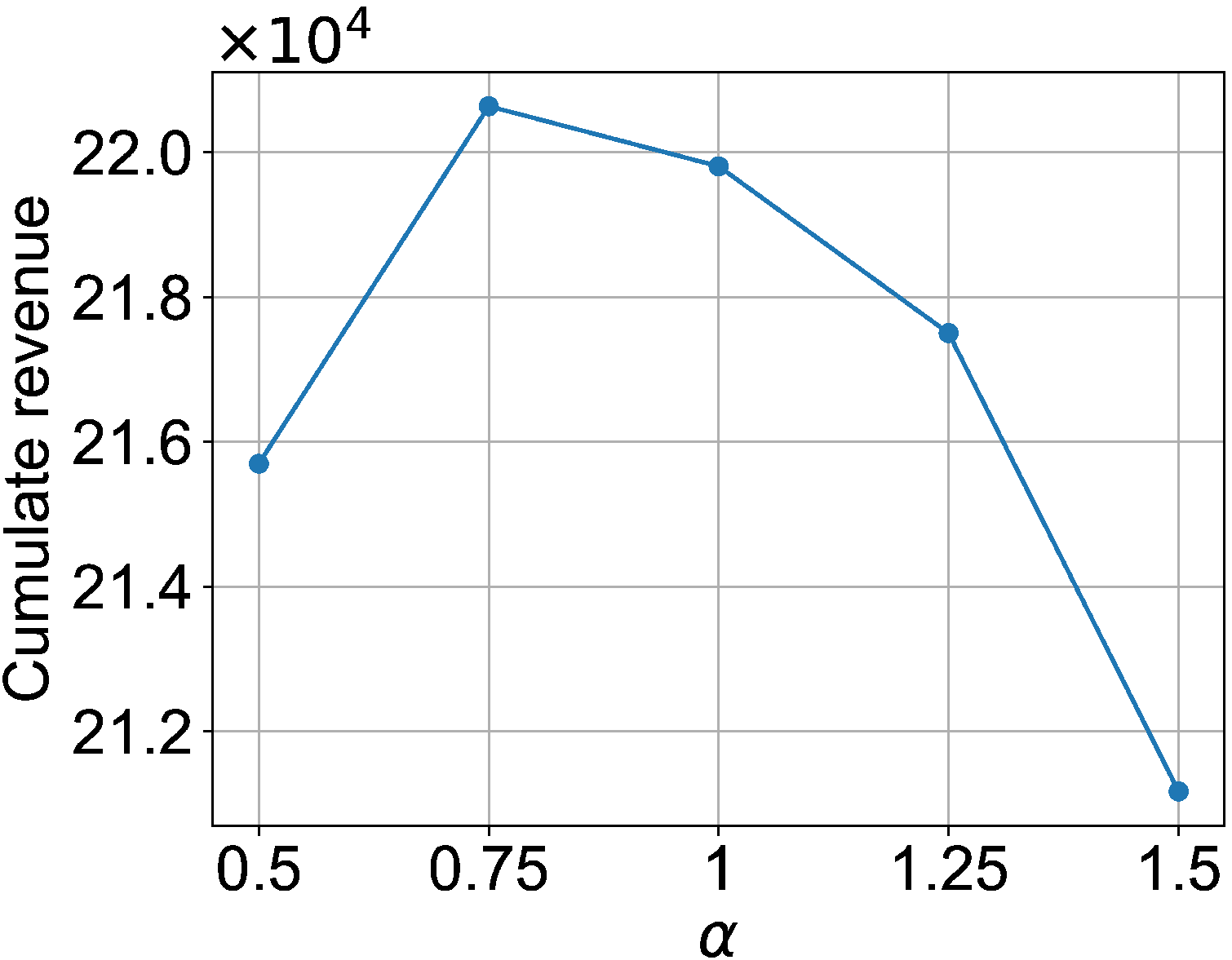}}
          \parbox{.32\textwidth}{\center\includegraphics[width=.31\textwidth]{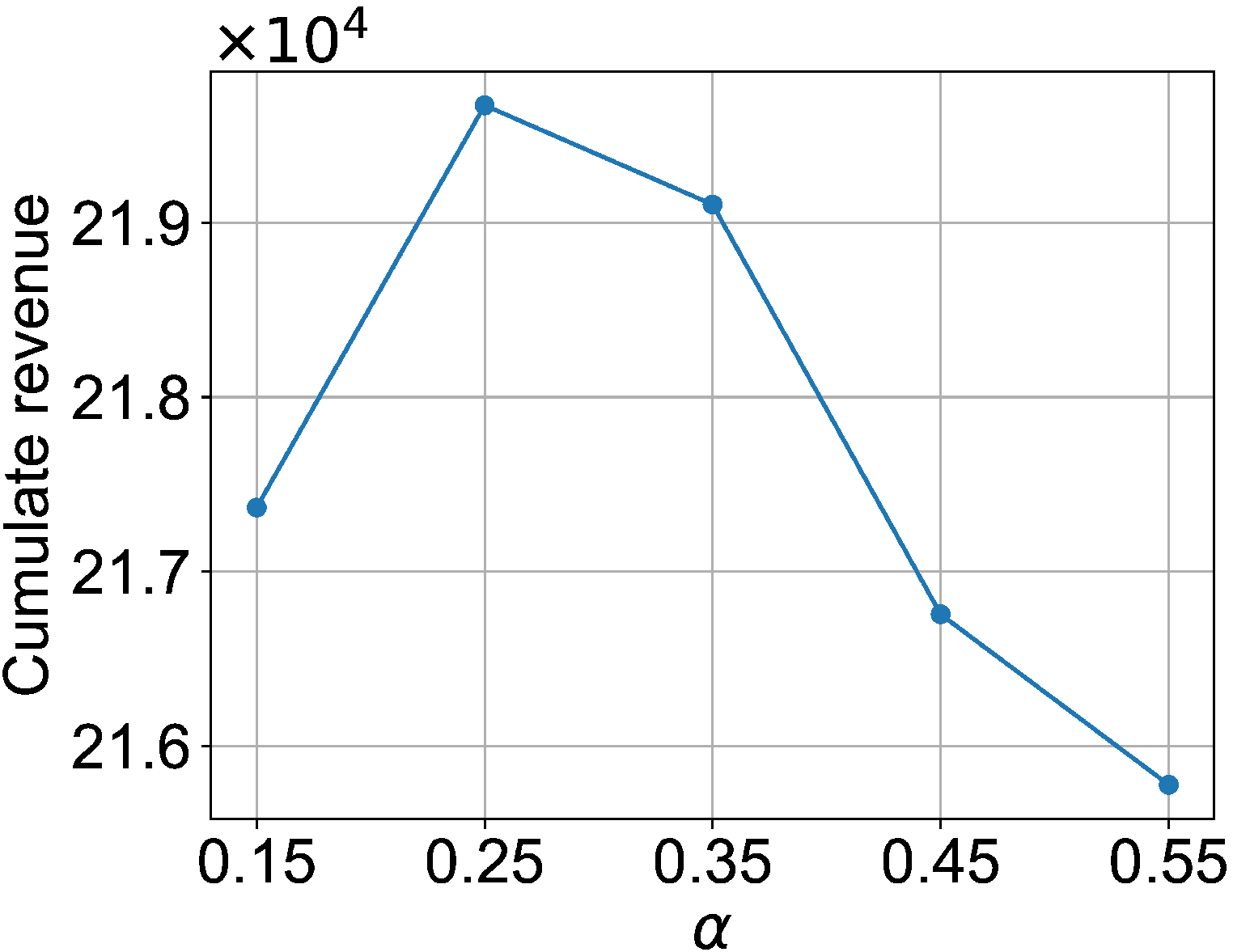}}
          \parbox{.32\textwidth}{\center\scriptsize(a) $M=1$}
          \parbox{.32\textwidth}{\center\scriptsize(b) $M=2$}
          \parbox{.32\textwidth}{\center\scriptsize(c) $M=3$}
          \caption{Cumulative revenues under different values of $\alpha$.}
          \label{fig:alpha}
        \end{center}
      \end{figure*}
    
      Since the dataset does not include the (expected) sensing abilities of the workers, we focus on investigating the performances of the algorithms in terms of cumulative revenue only. For each worker, when it is selected, we randomly choose one from its data samples without replacement to calculate the cumulative revenue. In addition, we vary budget $B$ from $2 \times 10^4$ to $4 \times 10^5$ with a step size $2 \times 10^4$ to show the performance of the algorithms under different budgets. It is shown by the results in Fig.~\ref{fig:yelpcrevenue} that, our CAWS algorithm outperforms the others and its performance is very close to the ones of the oracle (for all $M=1,2,3$), especially under less budget. Furthermore, since our CAWS algorithm adaptively tunes the granularity to partition the context space mainly according to the number of dimensions and the budget, it results in similar cumulative revenues in all the three context spaces. By taking into account more relevant dimensions (e.g., by increasing $M$ from $1$ to $2$), our algorithm yields more cumulative revenue. Nevertheless, a higher-dimensional context space does not always imply much higher cumulative revenue. For example, the resulting cumulative revenue in the three-dimensional context space is very close to the one in the two-dimensional context space.
      \begin{figure}[htb!]
        \centering
        \includegraphics[width=.5\columnwidth]{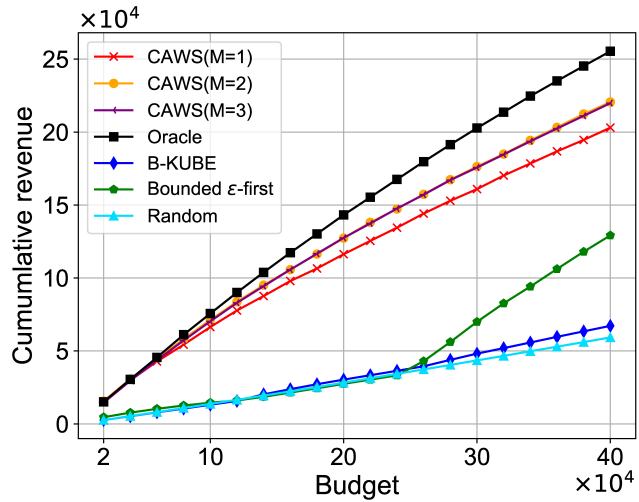}
        \caption{Comparisons with different budgets on Yelp dataset.}
        \label{fig:yelpcrevenue}
      \end{figure}

      We also evaluate the algorithms in terms of scalability to the different numbers of workers. We fix budget $B=1 \times 10^5$ while varying the number of workers from $4 \times 10^4$ to $1\times 10^5$ with a step size $2 \times 10^4$. As illustrated by the results in Fig.~\ref{fig:yelpdiffworker}, our CAWS algorithm yields much higher cumulative revenue than the others in all settings. Furthermore, as the number of workers is increased, the performance of our algorithm is always close to the one of oracle, while the others yield decreased cumulative revenues. Especially, when the number of worker is huge, e.g., $N=10^5$, our algorithm can yield six times higher sensing revenue than the other ones. Similar to our observations in Fig.~\ref{fig:yelpcrevenue}, the cumulative revenue obtained by applying our algorithm in the two-dimensional context space is very close to the one yielded by our algorithm in the three-dimensional context space. Additionally, since we partition the context space with a carefully tuned granularity, the performance of our algorithm always can be ensured when we introduce much more workers with limited budget. 
      \begin{figure}[htb!]
        \centering
        \includegraphics[width=.5\columnwidth]{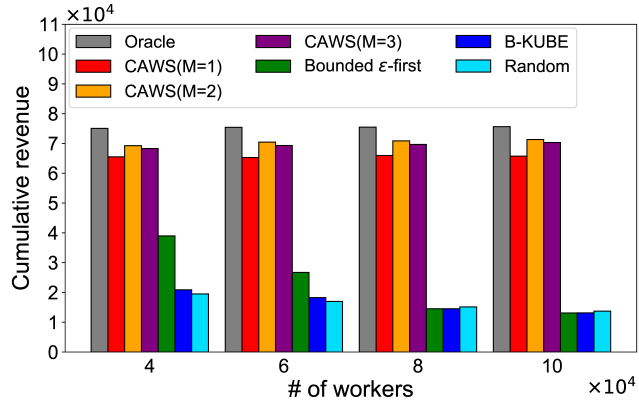}
        \caption{Comparisons with different numbers of workers on Yelp dataset.}
        \label{fig:yelpdiffworker}
      \end{figure}

  \subsection{An Extension to Time-Varying Context} \label{ssec:ext}
    We use the vehicular trajectory dataset again to evaluate the performance of our CAWS algorithm in assigning a sensing task to a large number of unknown workers with time-varying contexts. We let the center of the whole area be the spot of interest and consider a time span consisting of $1,300$ time windows. We randomly choose $500$ vehicles as workers, each of which reports at least $5$ GPS coordinates regularly. We additionally assume that, for each worker, the battery state of its mobile phone is decreased by $5\%$ in every time window and the mobile phone is recharged when its power is exhausted. The sensing ability for each worker is defined as the same as shown in Sec.~\ref{sssec:szstatic}. We adopt the same settings in terms of costs and capacities as before. 

    Likewise, we adopt cumulative revenue and regret as the metrics for the purpose of evaluation. It is worthy to note that, when calculating the regret, the (off-line) rounding-based density-ordered greedy algorithm takes the vehicles' trajectories (and thus their time-varying sensing abilities) across the time span as input according to the definition of the regret. We first evaluate the different algorithms with the budget varying from $300$ to $1500$. Since we investigate the correlation between context and sensing ability in our CAWS algorithm such that the workers' sensing abilities can be estimated according to their instant contexts, while the others have to re-learn a worker's sensing ability once its context (and thus its sensing ability) is changed, our CAWS algorithm has obvious advantages over the others in terms of both cumulative revenue and regret, as shown in Fig.~\ref{fig:szbudgetdy}. We also demonstrate the performance of the algorithms with different numbers of workers in Fig.~\ref{fig:szworkertdy}. The number of workers is varied from $200$ to $500$, while the budget is fixed to $1,500$. Similar to the experiment results where the workers have static contexts (see Sec.~\ref{ssec:synthetic} and Sec.~\ref{ssec:real}), increasing the number of workers cannot let our algorithm loose its advantages over the other three alternative algorithms. Moreover, when the number of workers is increased, the performance of our CAWS algorithm is stable under the given budget, while the ones of the other three opponents are degraded, especially in terms of regret.
    \begin{figure}[htb!]
    \begin{center}
      \parbox{.48\textwidth}{\center\includegraphics[width=.48\textwidth]{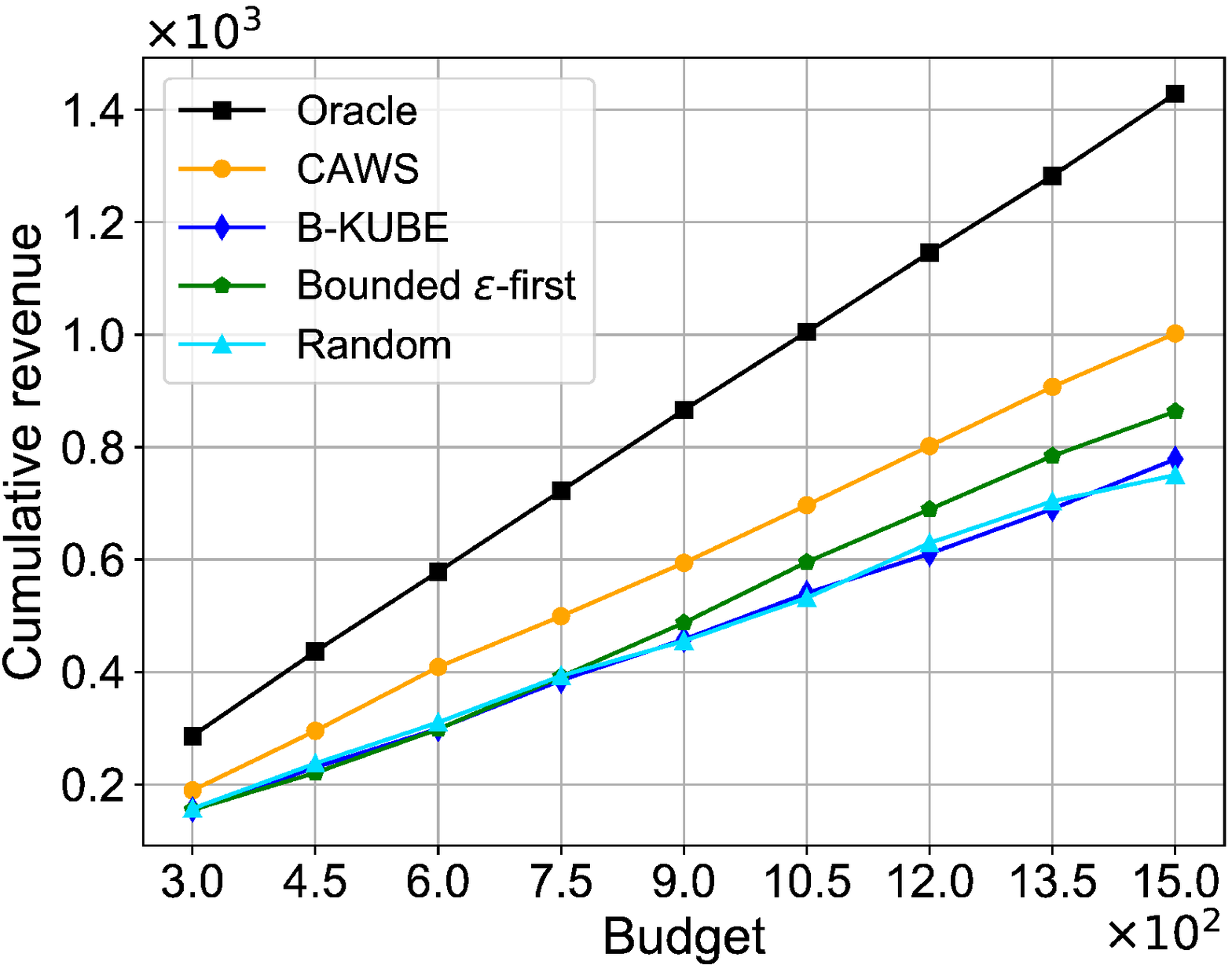}}
      \parbox{.48\textwidth}{\center\includegraphics[width=.48\textwidth]{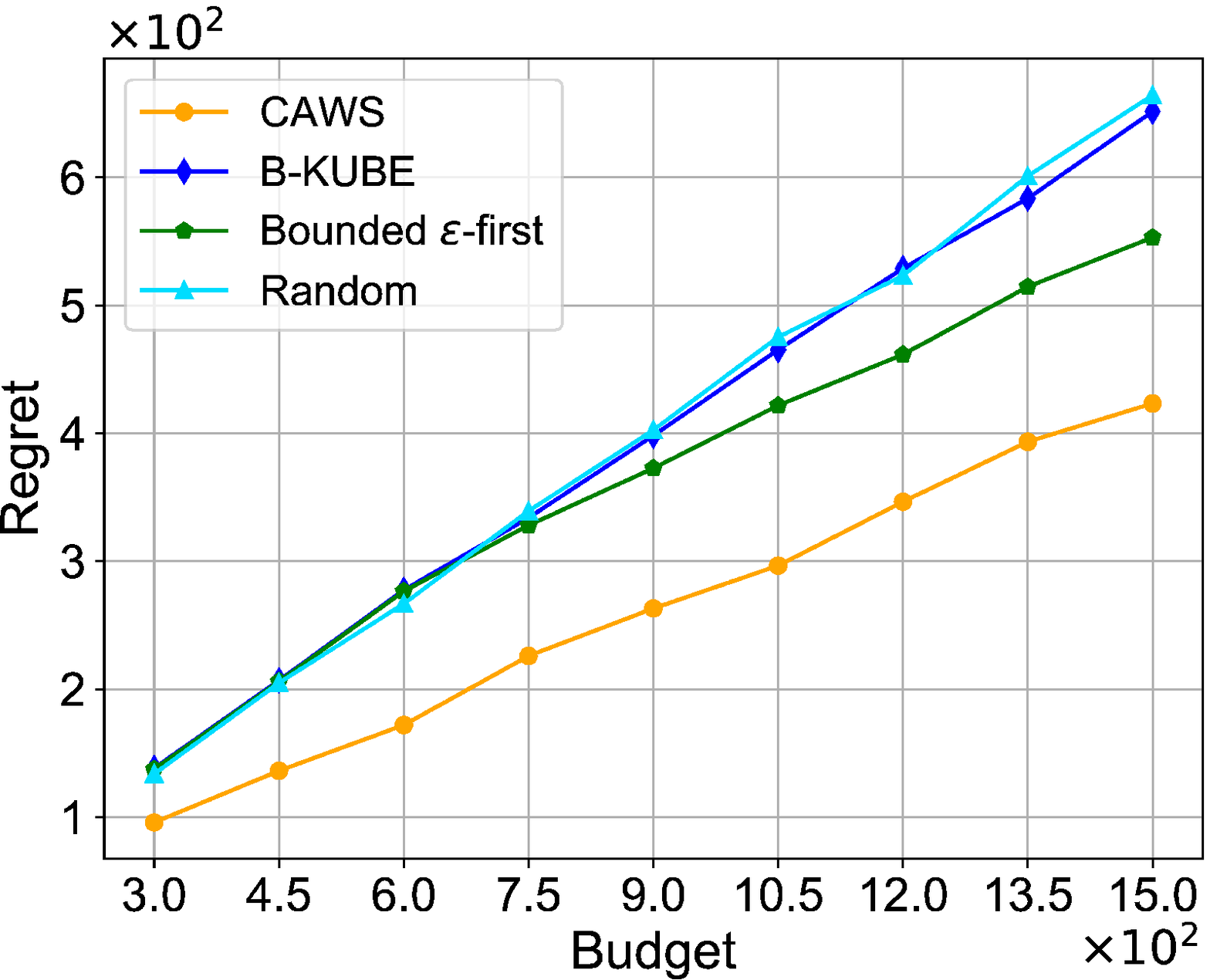}}
      \parbox{.48\textwidth}{\center\scriptsize(a) Cumulative revenue}
      \parbox{.48\textwidth}{\center\scriptsize(b) Regret}
    \end{center}
    \caption{Comparisons under different budgets with time-varying context.}
    \label{fig:szbudgetdy}
    \end{figure}
    \begin{figure}[htb!]
    \begin{center}
      \parbox{.48\columnwidth}{\center\includegraphics[width=.48\columnwidth]{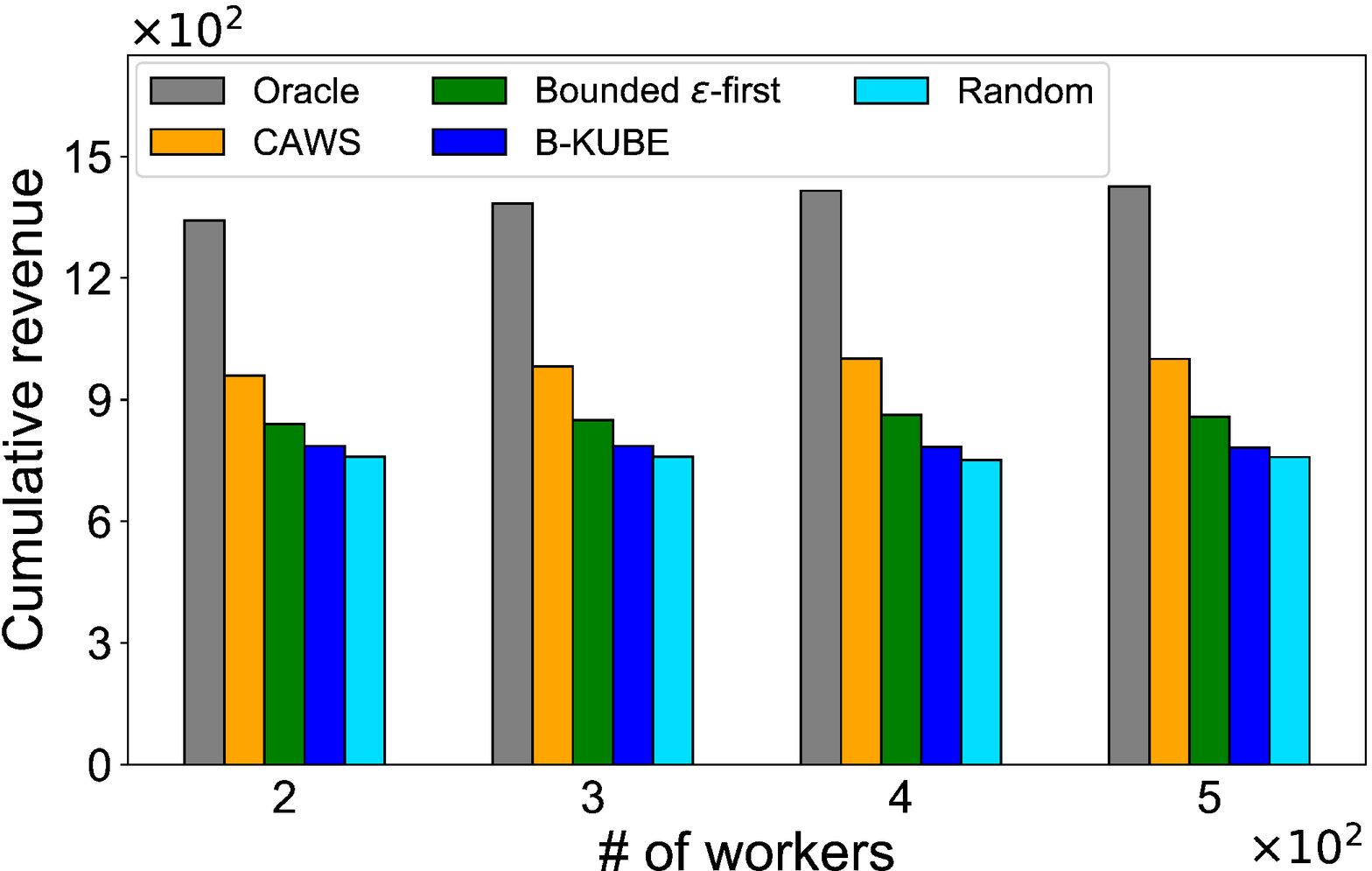}}
      \parbox{.48\columnwidth}{\center\includegraphics[width=.48\columnwidth]{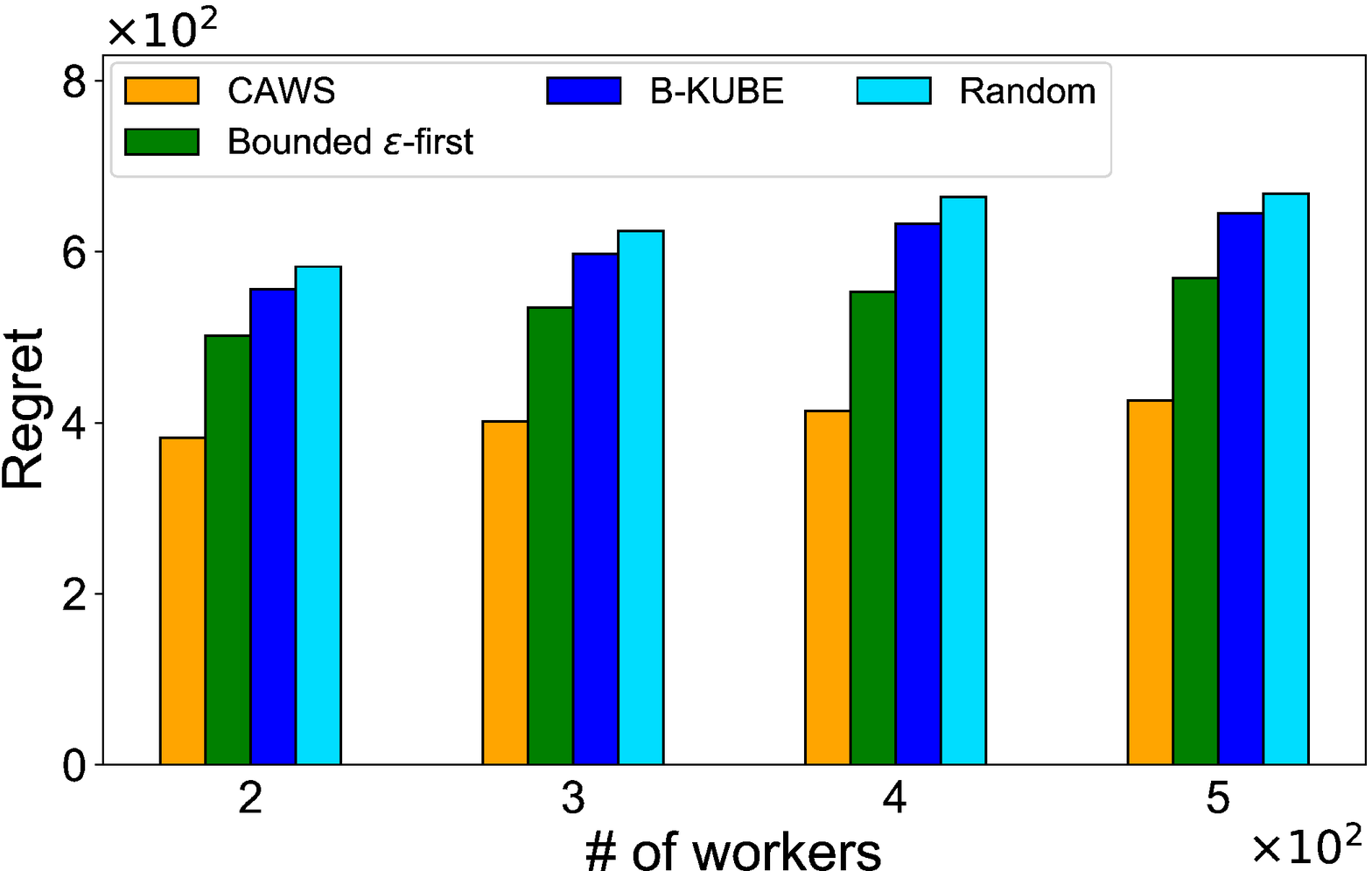}}
      \parbox{.48\columnwidth}{\center\scriptsize(a) Cumulative revenue}
      \parbox{.48\columnwidth}{\center\scriptsize(b) Regret}
    \end{center}
    \caption{Comparisons under different numbers of workers with time-varying context.}
    \label{fig:szworkertdy}
    \end{figure}

\section{Related Work} \label{sec:relwork}
  In the past decades, there have been a vast body of studies on the fundamental problem of worker selection in crowdsensing systems \cite{SongLWMW-TVT14,LiLW-MASS15,PuCXF-INFOCOM16,HanHL-TON18,YangLWW-TMC19}.
  %
  %
  %
  However, most of the existing proposals assume that the workers' sensing abilities are known as prior, while such an assumption may not be the case in practice. Therefore, there have been a few recent studies considering the uncertain worker selection problems where the worker's sensing abilities are unknown. For example, \cite{HanZL-TON16,RangiF-AAMAS18} study the uncertain selection problem such that the workers with unknown sensing abilities are selected sequentially under a limited total budget to perform a given sensing task. In \cite{ZhaoXWXHZ-TMC21}, unknown workers are selected sequentially with their sensing abilities being private information to be preserved. \cite{SongJ-INFOCOM21} adopts the empirical entropy of the data reported by workers to measure the sensing revenue. In \cite{GaoWYXC-ICPADS19}, a multi-task assignment problem is investigated. Therein, unknown workers are selected to maximize the sensing revenue, such that the resulting total cost does not exceed the budget and all the sensing tasks can be completed. The multi-task assignment problem is also studied in \cite{GaoWXC-INFOCOM20}. Each worker first submits its options (i.e., a subset of the tasks), and the crowdsensing platform assigns one of the options to each worker under a given budget, aiming at maximizing the sensing quality. As mentioned in Sec.~\ref{sec:intro}, the arms in the CMAB framework (e.g., the workers in \cite{HanZL-TON16,RangiF-AAMAS18,ZhaoXWXHZ-TMC21,SongJ-INFOCOM21} or the worker-task combinations in \cite{GaoWYXC-ICPADS19,GaoWXC-INFOCOM20}) are exploited and explored individually. Therefore, these conventional CMAB-based algorithms are of low efficiency especially when the number of arms is huge while the budget is limited, as shown in Sec. \ref{ssec:dis} and \ref{sec:exp}.

  Context information is very useful for crowdsensing systems and has been extensively utilized in designing worker selection algorithms \cite{YururLSLML-COMST16,NejadAMM-FGCS19}. In \cite{HassaniHJ-ICPADS15}, a similarity model is designed to calculate the context similarities between tasks and workers, and a worker is said to have higher sensing ability if its context is more similar to the one of the sensing task. The sensing tasks are then assigned to the workers according to their eligibilities so as to improve sensing efficiency. The matching between sensing tasks and workers is also studied in \cite{YucelYB-TMC2021} where the requirements of tasks and the preferences of workers are considered. Recently, \cite{HanYYWYB-TMC21} proposes novel data structures to improve the performance of the task-worker matching. Although the contexts of the workers are utilized to guide the assignment of the sensing tasks, the uncertainties of the workers are not taken into account in these proposals. In \cite{LiuZWTC-INFOCOM17}, a context-based data quality classifier is trained from historical data in an off-line manner, according to which, the workers are selected. Although machine learning methods are applied to train the classifier, it entails a large set of off-line training samples as input and takes into account neither budget constraints nor capacity constraints. In \cite{MullerTSK-TON18}, the dependence of workers' sensing abilities on both the workers' and the given tasks' context information is learned in an on-line manner. Although a budget constraint is considered in \cite{MullerTSK-TON18}, it assumes the task requester has a fixed budget for each selection decision, while our algorithm considers a strict total budget constraint for the whole learning process. Moreover, \cite{MullerTSK-TON18} does not consider the bounded capacities of the workers.
  %

  MAB problem is a typical reinforcement learning problem and has been studied for several decades. So far, several well-known algorithms, e.g., $\epsilon$-greedy algorithm, UCB algorithm, etc, have been proposed~\cite{LaiR-AAM85,AuerCF-ML02}. It is then extended to CMAB problem, to address the uncertainties in combinatorial optimization problems \cite{GaiKJ-TON12,TranCRJ-AAAI12,ChenWY-ICML13,ThanhSRJ-AI14,RangiF-AAMAS18}. Motivated by contextual bandit where the context information of the arms is utilized \cite{LangfordZ-NIPS08,LiCLS-WWW10}, a contextual CMAB framework is proposed in \cite{ChenXL-NIPS18}, which inherits from both contextual bandit and combinatorial bandit. Specifically, it studies the budget-limited worker selection problem within a given time horizon. In each time slot, it allocates a fixed amount of budget to either exploit or explore a group of workers. Therefore, it cannot be applied to our problem where the total budget is limited such that we have to make full use of the budget to discriminate the workers with uncertain sensing abilities. Also, it cannot handle the capacity constraints of the workers, whereas these constraints are the main concerns of our CAWS algorithms.

\section{Conclusion and Future Work} \label{sec:conclusion}
  In this paper, we have studied how to select among a massive number of uncertain workers with bounded sensing capacities under a limited budget, such that the expected cumulative sensing revenue can be maximized with both the budget constraint and the capacity constraints respected. Although the conventional CMAB framework can be applied to address the above problem by exploring and exploiting the workers individually, it is of quite low efficiency when the number of workers is rather huge while the budget is significantly limited. To address the above issue, we have proposed a worker selection algorithm, i.e., CAWS, which makes a trade-off between exploitation and exploration in a context space instead of among the individual workers. We have performed a rigorous theoretical analysis to prove the regret of our CAWS algorithm is upper-bounded by $\mathcal O(B^{\frac{M}{\alpha+M}} \ln B)$ through partitioning the context space with a fine-tuned granularity. We also have conducted extensive experiments using both synthetic and real datasets to verify the considerable advantages of our CAWS algorithm over the existing state-of-the-art algorithms.

  As we have shown in Sec.~\ref{ssec:dis} and Sec.~\ref{ssec:ext}, harnessing context is a very promising method to select among massive workers with unknown and time-varying sensing abilities, given a significantly limited budget. Nevertheless some preliminary results have been presented in this paper, we are on the way of building a rigorous theoretical framework to design and analyze competitive algorithms for the budget-limited worker selection problem with online uncertainties.

\bibliographystyle{IEEEtran}
\bibliography{cknapsack} 

\begin{thebibliography}{10}
\providecommand{\url}[1]{#1}
\csname url@samestyle\endcsname
\providecommand{\newblock}{\relax}
\providecommand{\bibinfo}[2]{#2}
\providecommand{\BIBentrySTDinterwordspacing}{\spaceskip=0pt\relax}
\providecommand{\BIBentryALTinterwordstretchfactor}{4}
\providecommand{\BIBentryALTinterwordspacing}{\spaceskip=\fontdimen2\font plus
\BIBentryALTinterwordstretchfactor\fontdimen3\font minus
  \fontdimen4\font\relax}
\providecommand{\BIBforeignlanguage}[2]{{%
\expandafter\ifx\csname l@#1\endcsname\relax
\typeout{** WARNING: IEEEtran.bst: No hyphenation pattern has been}%
\typeout{** loaded for the language `#1'. Using the pattern for}%
\typeout{** the default language instead.}%
\else
\language=\csname l@#1\endcsname
\fi
#2}}
\providecommand{\BIBdecl}{\relax}
\BIBdecl

\bibitem{KhanXAA-SURV13}
W.~Khan, Y.~Xiang, M.~Aalsalem, and Q.~Arshad, ``{Mobile Phone Sensing Systems:
  A Survey},'' \emph{IEEE Communications Surveys \& Tutorials}, vol.~15, no.~1,
  pp. 402--427, 2013.

\bibitem{CapponiFKFKB-COMST19}
A.~Capponi, C.~Fiandrino, B.~Kantarci, L.~Foschini, D.~Kliazovich, and
  P.~Bouvry, ``{A Survey on Mobile Crowdsensing Systems: Challenges, Solutions,
  and Opportunities},'' \emph{IEEE Communications Surveys \& Tutorials},
  vol.~21, no.~3, pp. 2419--2465, 2019.

\bibitem{CherianLGHW-MDM16}
J.~Cherian, J.~Luo, H.~Guo, S.~Ho, and R.~Wisbrun, ``{ParkGauge: Gauging the
  Occupancy of Parking Garages with Crowdsensed Parking Characteristics},'' in
  \emph{Proc. of the 17th IEEE MDM}, 2016, pp. 92--101.

\bibitem{WangTLP-INFOCOM17}
J.~Wang, N.~Tan, J.~Luo, and S.~Pan, ``{WOLoc: WiFi-only Outdoor Localization
  Using Crowdsensed Hotspot Labels},'' in \emph{Proc. of IEEE INFOCOM}, 2017,
  pp. 1--9.

\bibitem{LiuLZMZ-IMWUT18}
L.~Liu, W.~Liu, Y.~Zheng, H.~Ma, and C.~Zhang, ``{Third-Eye: A
  Mobilephone-Enabled Crowdsensing System for Air Quality Monitoring},''
  \emph{Proc. of the ACM on Interactive, Mobile, Wearable and Utiquitous
  Technologies}, vol.~2, no.~1, p. 1–26, 2018.

\bibitem{JiangZZLSMFWL-TMC21}
Z.~Jiang, H.~Zhu, B.~Zhou, C.~Lu, M.~Sun, X.~Ma, X.~Fan, C.~Wang, and L.~Chen,
  ``{CrowdPatrol: A Mobile Crowdsensing Framework for Traffic Violation Hotspot
  Patrolling},'' \emph{IEEE Trans. on Mobile Computing}, 2021.

\bibitem{SongLWMW-TVT14}
Z.~Song, C.~Liu, J.~Wu, J.~Ma, and W.~Wang, ``{QoI-Aware Multitask-Oriented
  Dynamic Participant Selection With Budget Constraints},'' \emph{IEEE Trans.
  on Vehicular Technology}, vol.~63, no.~9, pp. 4618--4632, 2014.

\bibitem{LiLW-MASS15}
H.~Li, T.~Li, and Y.~Wang, ``{Dynamic Participant Recruitment of Mobile Crowd
  Sensing for Heterogeneous Sensing Tasks},'' in \emph{Proc. of the 12th IEEE
  MASS}, 2015, pp. 136--144.

\bibitem{PuCXF-INFOCOM16}
L.~Pu, X.~Chen, J.~Xu, and X.~Fu, ``{Crowdlet: Optimal Worker Recruitment for
  Self-Organized Mobile Crowdsourcing},'' in \emph{Proc. of IEEE INFOCOM},
  2016, pp. 1--9.

\bibitem{YangLWW-TMC19}
Y.~Yang, W.~Liu, E.~Wang, and J.~Wu, ``{A Prediction-Based User Selection
  Framework for Heterogeneous Mobile CrowdSensing},'' \emph{IEEE Trans. on
  Mobile Computing}, vol.~18, no.~11, pp. 2460--2473, 2019.

\bibitem{HanZL-TON16}
K.~Han, C.~Zhang, and J.~Luo, ``{Taming the Uncertainty: Budget Limited Robust
  Crowdsensing Through Online Learning},'' \emph{IEEE/ACM Trans. on
  Networking}, vol.~24, no.~3, pp. 1462--1475, 2016.

\bibitem{RangiF-AAMAS18}
A.~Rangi and M.~Franceschetti, ``{Multi-Armed Bandit Algorithms for
  Crowdsourcing Systems with Online Estimation of Workers' Ability},'' in
  \emph{Proc. of the 17th AAMAS}, 2018, p. 1345–1352.

\bibitem{ZhaoXWXHZ-TMC21}
H.~Zhao, M.~Xiao, J.~Wu, Y.~Xu, H.~Huang, and S.~Zhang, ``{Differentially
  Private Unknown Worker Recruitment for Mobile Crowdsensing Using Multi-Armed
  Bandits},'' \emph{IEEE Trans. on Mobile Computing}, vol.~20, no.~9, pp.
  2779--2794, 2021.

\bibitem{SongJ-INFOCOM21}
Y.~Song and H.~Jin, ``{Minimizing Entropy for Crowdcourcing with Combinatorial
  Multi-Armed Bandit},'' in \emph{Proc. of IEEE INFOCOM}, 2021, pp. 1--10.

\bibitem{GaoWYXC-ICPADS19}
G.~Gao, J.~Wu, Z.~Yan, M.~Xiao, and G.~Chen, ``{Unknown Worker Recruitment with
  Budget and Covering Constraints for Mobile Crowdsensing},'' in \emph{Proc. of
  the 25th IEEE ICPADS}, 2019, pp. 539--547.

\bibitem{GaoWXC-INFOCOM20}
G.~Gao, J.~Wu, M.~Xiao, and G.~Chen, ``{Combinatorial Multi-Armed Bandit Based
  Unknown Worker Recruitment in Heterogeneous Crowdsensing},'' in \emph{Proc.
  of IEEE INFOCOM}, 2020, pp. 179--188.

\bibitem{KohliKM-EJOR14}
R.~Kohli, R.~Krishnamurti, and P.~Mirchandani, ``{Average Performance of Greedy
  Heuristics for The Integer Knapsack Problem},'' \emph{European Journal of
  Operational Research}, vol. 154, no.~1, pp. 36--45, 2014.

\bibitem{TranCRJ-AAAI12}
L.~Tran-Thanh, A.~Chapman, A.~Rogers, and N.~Jennings, ``{Knapsack Based
  Optimal Policies for Budget–Limited Multi–Armed Bandits},'' in
  \emph{Proc. of the 26th AAAI}, 2012, pp. 1135--1140.

\bibitem{ThanhSRJ-AI14}
L.~Tran-Thanh, S.~Stein, A.~Rogers, and N.~Jennings, ``{Efficient Crowdsourcing
  of Unknown Experts using Bounded Multi-Armed Bandits},'' \emph{Artificial
  Intelligence}, vol. 214, pp. 89--111, 2014.

\bibitem{YangWYJ-TITS19}
S.~Yang, C.~Wang, L.~Yang, and C.~Jiang, ``{iLogBook: Enabling Text-Searchable
  Event Query Using Sparse Vehicle-Mounted GPS Data},'' \emph{IEEE Trans. on
  Intelligent Transportation Systems}, vol.~20, no.~12, pp. 4328--4338, 2019.

\bibitem{LiKWJ-TITS21}
Z.~Li, Y.~Kong, C.~Wang, and C.~Jiang, ``{DDoS Mitigation Based on Space-Time
  Flow Regularities in IoV: A Feature Adaption Reinforcement Learning
  Approach},'' \emph{IEEE Trans. on Intelligent Transportation Systems}, 2021.

\bibitem{MullerTSK-TON18}
S.~Müller, C.~Tekin, M.~Schaar, and A.~Klein, ``{Context-Aware Hierarchical
  Online Learning for Performance Maximization in Mobile Crowdsourcing},''
  \emph{IEEE/ACM Trans. on Networking}, vol.~26, no.~3, pp. 1334--1347, 2018.

\bibitem{yelp}
\BIBentryALTinterwordspacing
``{Yelp Dataset: www.yelp.com/dataset},'' [accessed 1-January-2021]. [Online].
  Available: \url{www.yelp.com/dataset}
\BIBentrySTDinterwordspacing

\bibitem{HanHL-TON18}
K.~Han, H.~Huang, and J.~Luo, ``{Quality-Aware Pricing for Mobile
  Crowdsensing},'' \emph{IEEE/ACM Trans. on Networking}, vol.~26, no.~4, pp.
  1728--1741, 2018.

\bibitem{YururLSLML-COMST16}
O.~Yurur, C.~Liu, Z.~Sheng, V.~Leung, W.~Moreno, and K.~Leung,
  ``{Context-Awareness for Mobile Sensing: A Survey and Future Directions},''
  \emph{IEEE Communications Surveys \& Tutorials}, vol.~18, no.~1, pp. 68--93,
  2016.

\bibitem{NejadAMM-FGCS19}
H.~Vahdat-Nejad, E.~Asani, Z.~Mahmoodian, and M.~Mohseni, ``{Context-Aware
  Computing for Mobile Crowd Sensing: A survey},'' \emph{Future Generation
  Computer Systems}, vol.~99, pp. 321--332, 2019.

\bibitem{HassaniHJ-ICPADS15}
A.~Hassani, P.~Haghighi, and P.~Jayaraman, ``{Context-Aware Recruitment Scheme
  for Opportunistic Mobile Crowdsensing},'' in \emph{Proc. of the 21st IEEE
  ICPADS}, 2015, pp. 266--273.

\bibitem{YucelYB-TMC2021}
F.~Yucel, M.~Yuksel, and E.~Bulut, ``{QoS-Based Budget Constrained Stable Task
  Assignment in Mobile Crowdsensing},'' \emph{IEEE Trans. on Mobile Computing},
  vol.~20, no.~11, pp. 3194--3210, 2021.

\bibitem{HanYYWYB-TMC21}
L.~Han, Z.~Yu, Z.~Yu, L.~Wang, H.~Yin, and B.~Guo, ``{Online Organizing
  Large-scale Heterogeneous Tasks and Multi-skilled Participants in Mobile
  Crowdsensing},'' \emph{IEEE Trans. on Mobile Computing}, 2021.

\bibitem{LiuZWTC-INFOCOM17}
S.~Liu, Z.~Zheng, F.~Wu, S.~Tang, and G.~Chen, ``{Context-Aware Data Quality
  Estimation in Mobile Crowdsensing},'' in \emph{Proc. of IEEE INFOCOM}, 2017,
  pp. 1--9.

\bibitem{LaiR-AAM85}
T.~Lai and H.~Robbins, ``{Asymptotically Efficient Adaptive Allocation
  Rules},'' \emph{Advances in Applied Mathematics}, vol.~6, no.~1, pp. 4--22,
  1985.

\bibitem{AuerCF-ML02}
P.~Auer, N.~Cesa-Bianchi, and P.~Fischer, ``{Finite-time Analysis of the
  Multiarmed Bandit Problem},'' \emph{Machine Learning}, vol.~47, p. 235–256,
  2002.

\bibitem{GaiKJ-TON12}
Y.~Gai, B.~Krishnamachari, and R.~Jain, ``{Combinatorial Network Optimization
  With Unknown Variables: Multi-Armed Bandits With Linear Rewards and
  Individual Observations},'' \emph{IEEE/ACM Trans. on Networking}, vol.~20,
  no.~5, pp. 1466--1478, 2012.

\bibitem{ChenWY-ICML13}
W.~Chen, Y.~Wang, and Y.~Yuan, ``{Combinatorial Multi-Armed Bandit: General
  Framework and Applications},'' in \emph{Proc. of the 30th ICML}, 2013, pp.
  151--159.

\bibitem{LangfordZ-NIPS08}
J.~Langford and T.~Zhang, ``{The Epoch-Greedy Algorithm for Multi-Armed Bandits
  with Side Information},'' in \emph{Proc. of the 22nd NIPS}, 2008, pp.
  817--824.

\bibitem{LiCLS-WWW10}
L.~Li, W.~Chu, J.~Langford, and R.~Schapire, ``{A Contextual-Bandit Approach to
  Personalized News Article Recommendation},'' in \emph{Proc. of the 19th WWW},
  2010, pp. 661--670.

\bibitem{ChenXL-NIPS18}
L.~Chen, J.~Xu, and Z.~Lu, ``{Contextual Combinatorial Multi-armed Bandits with
  Volatile Arms and Submodular Reward},'' in \emph{Proc. of the 32nd NIPS},
  2018, pp. 3247--3256.

\bibitem{DubhashiP-book09}
D.~Dubhashi and A.~Panconesi, \emph{{Concentration of measure for the analysis
  of randomized algorithms}}.\hskip 1em plus 0.5em minus 0.4em\relax Cambridge
  University Press, 2009.

\end{thebibliography}

\begin{appendices}
\section{Proof of Lemma~\ref{le:iterations}}
  \begin{lemma} \label{le:ineq0}
    Suppose our CAWS algorithm proceeds $T$ iterations and let $B(t)$ denotes the residual budget at the beginning of the $t$-th iteration. Initially, $B(1)=B$. For each iteration $t=1,2,\cdots,T$, we have
    \begin{equation} \label{eq:ineq0}
      \frac{c_{min}}{B(t)} \leq \frac{1}{T-t+1} 
    \end{equation}
  \end{lemma}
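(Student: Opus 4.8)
The plan is to prove the equivalent reformulation $B(t) \geq (T-t+1)\,c_{min}$, from which the claimed inequality follows immediately upon dividing (the division being legitimate since, as the bound itself shows, $B(t) \geq c_{min} > 0$). The intuition is a pure accounting argument: at the start of the $t$-th iteration the algorithm still has exactly $T-t+1$ iterations left to run (namely iterations $t, t+1, \dots, T$), and each such iteration consumes at least $c_{min}$ of budget, so the residual budget $B(t)$ must be large enough to cover all of this future expenditure.

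First I would record the per-iteration budget decrement. By Line~\ref{ln:updatebudget} of Algorithm~\ref{alg:contextalgo}, $B(t'+1) = B(t') - c_{i(t')}$ for every $t'$, and since $c_{i(t')} \geq c_{min}$ by the definition $c_{min} = \min_{i\in\mathcal N} c_i$, each iteration reduces the budget by at least $c_{min}$. Telescoping this identity over $t' = t, t+1, \dots, T$ yields
\[
  B(t) - B(T+1) = \sum_{t'=t}^{T} c_{i(t')} \geq (T-t+1)\,c_{min}.
\]

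Next I would argue that $B(T+1) \geq 0$, so that dropping the subtracted term only weakens the bound. This is the one place where the structure of the algorithm is actually used: the while-loop guard in Line~\ref{ln:available} permits iteration $t'$ to execute only when $B(t') \geq \min\{c_i \mid \tau_i(t') \geq 1\}$, and the density-ordered greedy subroutine (Algorithm~\ref{alg:sub}) assigns a positive weight $x_i(t') > 0$ only to workers affordable under the current residual budget (Line~\ref{ln:calx1}). Hence the worker $i(t')$ selected with probability proportional to $x_{i(t')}(t')$ always satisfies $c_{i(t')} \leq B(t')$, so the budget never drops below zero; in particular $B(T+1) \geq 0$. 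Combining this with the telescoped estimate gives $B(t) \geq (T-t+1)\,c_{min}$, and rearranging completes the proof.

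I do not expect a genuine obstacle here. The only subtlety worth flagging is the non-negativity $B(T+1) \geq 0$: it should be justified from the affordability guarantee of the greedy subroutine rather than assumed, since the direction of the final inequality depends on it.
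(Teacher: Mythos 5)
Your proposal is correct and follows essentially the same route as the paper: the paper's one-line proof states directly that $(T-t+1)\,c_{min} \leq c_{i(t)} + c_{i(t+1)} + \cdots + c_{i(T)} \leq B(t)$, which is exactly your telescoping identity combined with the non-negativity $B(T+1) \geq 0$, just written compactly. Your only addition is making the affordability argument for $B(T+1) \geq 0$ explicit, which the paper leaves implicit.
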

  \begin{proof}
    At the beginning of the $t$-th iteration, the residual budget is $B(t)$. Since we select the workers $T$ times in total, for any $1 \leq t \leq T$,
    \begin{equation*}
      (T-t+1) c_{min} \leq c_{i(t)} + c_{i(t+1)} + \cdots + c_{i(T)} \leq B(t)
    \end{equation*}
    based on which, we have the inequality (\ref{eq:ineq0}).
  \end{proof}

  Assume $\lfloor \tilde{\mathbf{x}}^*(t) \rfloor$ and $\lfloor \hat{\mathbf{x}}^*(t) \rfloor$ denote the solutions by applying the rounding-based density-ordered greedy algorithm to the BKP instances $(i, \mu_{Q_i}, c_i, \tau_i(t), B(t))^N_{i=1}$ and $(i, U_i(t), c_i, \tau_i(t), B(t))^N_{i=1}$, respectively. By replacing $\mu_{Q_i}$ with $U_i(t)$, $\lfloor \hat{\mathbf{x}}^*(t) \rfloor$ can be considered as an estimate on $\lfloor \tilde{\mathbf{x}}^*(t) \rfloor$.
  \begin{lemma} \label{le:swap}
    If there is a worker $j$ such that $j \notin \lfloor \tilde{\mathbf{x}}^* \rfloor$ and $j \in \lfloor \hat{\mathbf{x}}^*(t) \rfloor$, then there is at least one worker $j' \in \lfloor \tilde{\mathbf{x}}^* \rfloor$ such that
    \begin{equation} \label{eq:swapineq1}
      \frac{\mu_{Q_j}}{c_j} \leq \frac{\mu_{Q_{j'}}}{c_{j'}}
    \end{equation}
    and
    \begin{equation} \label{eq:swapineq2}
      \frac{1}{c_j}\left( \bar r_{Q_j}(t) + \sqrt{\frac{2\log t}{\lambda_{Q_j}(t)}} \right) \geq \frac{1}{c_{j'}}\left( \bar r_{Q_{j'}}(t) + \sqrt{\frac{2\log t}{\lambda_{Q_{j'}}(t)}} \right)
    \end{equation}
    where $Q_j \neq Q_{j'}$
    Also, the worker $j'$ has non-zero residual capacity to perform sensing tasks.
  \end{lemma}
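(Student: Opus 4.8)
The plan is to run an exchange argument between the two density-ordered greedy passes, exploiting the fact that both the true density $\mu_{Q_i}/c_i$ and the UCB density $U_i(t)/c_i$ depend on a worker only through the \emph{hypercube-level} quantities $\mu_{Q_i}$ and $U_i(t)$ together with the individual cost $c_i$. First I would observe that inequality (\ref{eq:swapineq1}) is automatic for \emph{any} $j'\in\lfloor\tilde{\mathbf{x}}^*\rfloor$: the greedy routine admits workers in non-increasing order of $\mu_{Q_i}/c_i$, and since $j$ is rejected while $j'$ is admitted, every admitted worker has true density at least that of $j$. Thus the real content is to locate an admitted $j'$ whose \emph{UCB} density has fallen to or below that of $j$, i.e.\ one satisfying (\ref{eq:swapineq2}).

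For that I would argue by contradiction. Suppose every $j'\in\lfloor\tilde{\mathbf{x}}^*\rfloor$ satisfied $U_{j'}(t)/c_{j'} > U_j(t)/c_j$. Then in the UCB-ordered pass producing $\lfloor\hat{\mathbf{x}}^*(t)\rfloor$ the entire set $\lfloor\tilde{\mathbf{x}}^*\rfloor$ is processed strictly before $j$. I would then show that, at their residual capacities, these workers already consume the budget available ahead of $j$, so that less than $c_j$ remains when $j$ is reached and $j$ cannot be admitted — contradicting $j\in\lfloor\hat{\mathbf{x}}^*(t)\rfloor$. Hence at least one admitted worker $j'$ must satisfy (\ref{eq:swapineq2}), and by the previous paragraph it simultaneously satisfies (\ref{eq:swapineq1}).

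To secure the residual-capacity claim and $Q_j\neq Q_{j'}$, I would sharpen the witness. A worker with $\tau_{j'}(t)=0$ contributes nothing to the budget spent in the UCB pass, so the workers responsible for crowding $j$ out necessarily have $\tau_{j'}(t)\ge 1$; the witness can be taken among them, giving the non-zero residual capacity. For the hypercube condition I would use that $\mu_{Q_i}$ and $U_i(t)$ are hypercube-level, so within a single hypercube the two densities differ only through $c_i$ and the relative order of same-hypercube workers is \emph{identical} in both passes. Consequently the reordering that lets the rejected $j$ overtake an admitted worker cannot be produced by workers of $Q_j$ alone, and the witness may be chosen in a different hypercube; concretely, if $j'$ lay in $Q_j$ then $\mu_{Q_j}=\mu_{Q_{j'}}$ and $U_j(t)=U_{j'}(t)$, whereupon (\ref{eq:swapineq1}) and (\ref{eq:swapineq2}) would force $c_j=c_{j'}$ and make $j,j'$ interchangeable in both passes, which the differing orderings rule out.

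The step I expect to be the main obstacle is the budget/capacity bookkeeping in the contradiction, because $\lfloor\tilde{\mathbf{x}}^*\rfloor$ is computed on the \emph{full} instance $(\{i,\mu_{Q_i},c_i,\tau_i\}^N_{i=1},B)$ whereas $\lfloor\hat{\mathbf{x}}^*(t)\rfloor$ is computed on the \emph{residual} instance $(\{i,U_i(t),c_i,\tau_i(t)\}^N_{i=1},B(t))$. Translating ``$\lfloor\tilde{\mathbf{x}}^*\rfloor$ nearly fills $B$'' into ``the same workers, at their residual capacities $\tau_{j'}(t)$, overspend $B(t)$ ahead of $j$'' requires matching the consumed budget $B-B(t)$ against the capacity already spent, and is precisely where the monotonicities $\tau_{j'}(t)\le\tau_{j'}$ and $B(t)\le B$ must be invoked carefully to close the contradiction.
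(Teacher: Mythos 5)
Your route is genuinely different from the paper's, and the step you yourself flag as the main obstacle is a real gap, not a mere bookkeeping chore. The paper never compares the full-instance solution $\lfloor\tilde{\mathbf x}^*\rfloor$ against the residual-instance solution $\lfloor\hat{\mathbf x}^*(t)\rfloor$ directly. It interposes a third object: $\lfloor\tilde{\mathbf x}^*(t)\rfloor$, the rounding-based density-ordered greedy run with the \emph{true} densities $\mu_{Q_i}$ but on the \emph{same} residual instance $(\{i,\mu_{Q_i},c_i,\tau_i(t)\}^N_{i=1},B(t))$ as $\lfloor\hat{\mathbf x}^*(t)\rfloor$. The exchange argument then compares two greedy orderings over identical budget and identical capacities, where the accounting is exact: since $j\notin\lfloor\tilde{\mathbf x}^*(t)\rfloor$, the $\mu$-density prefix preceding $j$ already exhausts $B(t)$, so if every worker of that prefix also preceded $j$ in the UCB order, $j$ could not enter $\lfloor\hat{\mathbf x}^*(t)\rfloor$. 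All cross-instance difficulty is confined to two containments, $j\notin\lfloor\tilde{\mathbf x}^*\rfloor\Rightarrow j\notin\lfloor\tilde{\mathbf x}^*(t)\rfloor$ and $\lfloor\tilde{\mathbf x}^*(t)\rfloor\subseteq\lfloor\tilde{\mathbf x}^*\rfloor$, and the residual-capacity claim comes for free because membership in $\lfloor\tilde{\mathbf x}^*(t)\rfloor$ presupposes $\tau_{j'}(t)\geq 1$, whereas your plan must argue it separately.

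The concrete gap is that the tools you name, $\tau_{j'}(t)\leq\tau_{j'}$ and $B(t)\leq B$, cannot close your contradiction. Run the bookkeeping: if every $j'\in\lfloor\tilde{\mathbf x}^*\rfloor$ precedes $j$ in UCB order and is filled to its residual capacity, the budget left when $j$ is reached is at most $B(t)-\sum_{j'\in\lfloor\tilde{\mathbf x}^*\rfloor}c_{j'}\tau_{j'}(t)$, and the one-sided monotonicities bound this only by roughly $c_{\tilde k}\leq c_{max}$ (the rounding slack at the split worker), so a cheap $j$ with $c_j\leq c_{\tilde k}$ can still be admitted and no contradiction results; worse, monotonicity alone is even compatible with capacities having been depleted without spending budget, in which case the prefix need not come close to exhausting $B(t)$. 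What actually closes the argument is the exact coupling between the two instances enforced by the algorithm's updates, $B-B(t)=\sum_i c_i\bigl(\tau_i-\tau_i(t)\bigr)$, which yields $B(t)-\sum_{j'\in\lfloor\tilde{\mathbf x}^*\rfloor}c_{j'}\tau_{j'}(t)\leq B-\sum_{j'\in\lfloor\tilde{\mathbf x}^*\rfloor}c_{j'}\tau_{j'}$ and hence a nonpositive remainder whenever the split worker belongs to $\lfloor\tilde{\mathbf x}^*\rfloor$ (the case where it does not still leaves a sub-$c_{\tilde k}$ slack you would have to patch). This identity, or equivalently the paper's intermediate solution $\lfloor\tilde{\mathbf x}^*(t)\rfloor$ which encapsulates it, is the missing idea; without it your plan stalls exactly where you predicted. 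Two secondary remarks: your claim that (\ref{eq:swapineq1}) is automatic for any admitted $j'$ is correct, but only because $\lfloor\tilde{\mathbf x}^*\rfloor$ is the rounded FBKP \emph{prefix} greedy of (\ref{eq:fbkpsolu}), whose admitted set is a density prefix; it would fail for the skip-and-continue greedy of \textbf{Algorithm}~\ref{alg:sub}. And your $Q_j\neq Q_{j'}$ argument, deriving $c_j=c_{j'}$ and then interchangeability, is essentially the paper's (which obtains $c_j\geq c_{j'}$ and $c_j\leq c_{j'}$ from $\mu_{Q_j}=\mu_{Q_{j'}}$ and $U_j(t)=U_{j'}(t)$), and is if anything stated more carefully by you.
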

  \begin{proof}
    If a worker $j \notin \lfloor \tilde{\mathbf{x}}^* \rfloor$, then the worker $j \notin \lfloor \tilde{\mathbf{x}}^*(t) \rfloor$, since $B(t) \leq B$. Moreover, according to the procedures of the rounding-based density-ordered greedy algorithm, if the worker $j \in \lfloor \hat{\mathbf{x}}^*(t) \rfloor$, there exists at least one work $j' \in \lfloor \tilde{\mathbf{x}}^*(t) \rfloor$ such that $\frac{\mu_{Q_j}}{c_j} \leq \frac{\mu_{Q_{j'}}}{c_{j'}}$ and $\frac{U_j(t)}{c_j} \geq \frac{U_{j'}(t)}{c_{j'}}$. Also, $Q_{j} \neq Q_{j'}$; otherwise, we would have both $c_j \geq c_{j'}$ and $c_j \leq c_{j'}$ hold since $\mu_{Q_j} = \mu_{Q_{j'}}$ and $U_j(t) = U_{j'}(t)$. Also, $j' \in \lfloor \tilde{\mathbf{x}}^*(t) \rfloor$ implies that the worker $j'$ has non-zero residual capacity to perform additional tasks. Furthermore, since $\lfloor \tilde{\mathbf{x}}^*(t) \rfloor \subseteq \lfloor \tilde{\mathbf x}^* \rfloor$, $j' \in \lfloor \tilde{\mathbf x}^* \rfloor$ and $j'$ can perform more tasks.
  \end{proof}

  \begin{lemma} \label{le:upbdj}
    Assume our CAWS algorithm proceeds $T$ iterations. For $\forall Q \in \Omega$, we have
    \begin{align} \label{eq:upbdj}
      \mathbb P( i(t) \in \mathcal{N}^{-}_Q \mid T) \leq& \mathbb P(i(t) \in \lfloor \hat{\mathbf{x}}^*(t) \rfloor, i(t) \in \mathcal{N}^{-}_Q \mid T ) +  \left({c_{max}}/{c_{min}}\right)^2/{(T-t+1)}
    \end{align}
  \end{lemma}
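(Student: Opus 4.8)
\emph{Proof plan.} I would prove the inequality by the law of total probability, splitting the event $\{i(t)\in\mathcal N^-_Q\}$ according to whether the worker selected in iteration $t$ also belongs to the UCB-based rounded greedy solution $\lfloor\hat{\mathbf x}^*(t)\rfloor$:
\begin{equation*}
\mathbb P(i(t)\in\mathcal N^-_Q\mid T) = \mathbb P(i(t)\in\lfloor\hat{\mathbf x}^*(t)\rfloor,\, i(t)\in\mathcal N^-_Q\mid T) + \mathbb P(i(t)\notin\lfloor\hat{\mathbf x}^*(t)\rfloor,\, i(t)\in\mathcal N^-_Q\mid T).
\end{equation*}
The first summand is exactly the quantity kept on the right-hand side of (\ref{eq:upbdj}) (and will later be bounded through \textbf{Lemma}~\ref{le:swap} together with UCB concentration), so the whole claim reduces to showing that the ``off-solution'' summand is at most $(c_{max}/c_{min})^2/(T-t+1)$.

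To bound that summand I would pass to the per-iteration selection rule. In a non-initialization iteration worker $i$ is chosen with probability $x_i(t)/\sum_{i'}x_{i'}(t)$ (Line~\ref{ln:worker}), so for any history the off-solution probability equals $\sum_{i\in\mathcal N^-_Q,\, i\notin\lfloor\hat{\mathbf x}^*(t)\rfloor} x_i(t) \big/ \sum_{i'}x_{i'}(t)$, which I upper-bound by $\sum_{i\notin\lfloor\hat{\mathbf x}^*(t)\rfloor} x_i(t)\big/\sum_{i'}x_{i'}(t)$. The key structural fact is that the density-ordered subroutine (\textbf{Algorithm}~\ref{alg:sub}) coincides with the rounding-based greedy solution $\lfloor\hat{\mathbf x}^*(t)\rfloor$ on every worker up to and including the split worker, and differs only by continuing to pack the residual budget with strictly lower-density workers beyond the split point. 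That residual is smaller than the split worker's cost, hence smaller than $c_{max}$, while every worker costs at least $c_{min}$; therefore the numerator satisfies $\sum_{i\notin\lfloor\hat{\mathbf x}^*(t)\rfloor} x_i(t) < c_{max}/c_{min}$.

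For the denominator I would argue that the virtual allocation $\{x_i(t)\}$ nearly exhausts the residual budget $B(t)$ with items of cost at most $c_{max}$, giving $\sum_{i'}x_{i'}(t)\ge B(t)/c_{max}$ up to an $O(1)$ slack; \textbf{Lemma}~\ref{le:ineq0}, i.e.\ $c_{min}(T-t+1)\le B(t)$, then turns this into $\sum_{i'}x_{i'}(t)\ge (T-t+1)c_{min}/c_{max}$. Dividing the numerator estimate $c_{max}/c_{min}$ by this lower bound produces precisely $(c_{max}/c_{min})^2/(T-t+1)$. Because the estimate depends only on $t$ and $T$ and holds for every trajectory compatible with the realized value of $T$, it persists after conditioning on $T$, which finishes the proof.

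The main obstacle is the denominator step: converting the budget-based count $\sum_{i'}x_{i'}(t)$ into the remaining-iteration count $T-t+1$ while controlling the additive $O(c_{max})$ losses coming from the unspent budget and from the split-worker rounding. This is exactly what \textbf{Lemma}~\ref{le:ineq0} is engineered to provide, and the deliberately generous factor $(c_{max}/c_{min})^2$ leaves enough room to absorb these constants. A minor point to dispatch separately is the initialization phase $t\le d^M$, where $i(t)$ is drawn uniformly rather than proportionally to $x_i(t)$; those at most $d^M$ iterations are accounted for elsewhere in the proof of \textbf{Lemma}~\ref{le:iterations}, so the per-iteration estimate above only needs to hold in the main phase.
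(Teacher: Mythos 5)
Your proposal is correct and follows essentially the same route as the paper's proof: the same total-probability split on membership in $\lfloor\hat{\mathbf x}^*(t)\rfloor$, the same numerator bound $\sum_{i\notin\lfloor\hat{\mathbf x}^*(t)\rfloor}x_i(t)\leq c_{\hat k(t)}/c_{min}\leq c_{max}/c_{min}$ from the residual budget after the split worker, the same denominator bound $\sum_{i'}x_{i'}(t)\geq B(t)/c_{max}$, and the same invocation of \textbf{Lemma}~\ref{le:ineq0} to convert $B(t)$ into $T-t+1$. Your side remarks (the $O(1)$ slack in the denominator and the separate treatment of the initialization iterations $t\leq d^M$, which the paper absorbs in \textbf{Lemma}~\ref{le:upbdy}) are consistent with how the paper handles these points.
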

  \begin{proof}
    Since
    \begin{align} \label{eq:decom}
      \mathbb P( i(t) \in \mathcal{N}^{-}_Q \mid T) &= \mathbb P(i(t) \in \mathcal{N}^{-}_Q, i(t) \in \lfloor \hat{\mathbf{x}}^*(t) \rfloor \mid T) + \mathbb P( i(t) \in \mathcal{N}^{-}_Q, i(t) \notin \lfloor \hat{\mathbf{x}}^*(t) \rfloor \mid T)
    \end{align}
    we can prove this lemma by deriving the upper-bound of $\mathbb P( i(t) \in \mathcal{N}^{-}_Q, i(t) \notin \lfloor \hat{\mathbf{x}}^*(t) \rfloor \mid T)$

    Recall that $\lfloor \hat{\mathbf{x}}^*(t) \rfloor$ denotes the solution of the BKP instance $(i, U_i(t), c_i, \tau_i(t), B(t))^N_{i=1}$ by the rounding-based density-ordered greedy algorithm. Let $\hat k(t)$ denote the split worker. Then, after selecting the worker $\hat k$, the residual budget is less than or equal to $c_{\hat k(t)}$; therefore,
    \begin{equation}
      \sum_{i \notin \lfloor \hat{\mathbf{x}}^*(t) \rfloor} x_i(t) \leq \frac{c_{\hat k(t)}}{c_{min}} \leq \frac{c_{max}}{c_{min}}
    \end{equation}
    Furthermore, considering the selection outputted by our density-ordered greedy subroutine can be bounded as $\sum_{i\in \mathcal N} x_i(t) \geq \frac{B(t)}{c_{max}}$, we have
    \begin{equation}
      \frac{\sum_{i \notin \lfloor \hat{\mathbf{x}}^*(t) \rfloor} x_i(t)}{\sum_{i\in \mathcal N} x_i(t)} \leq \frac{c_{\hat k(t)}}{c_{min}} \leq \frac{c_{max}}{c_{min}} \cdot \frac{c_{max}}{B(t)}
    \end{equation}
    By substituting the inequality (\ref{eq:ineq0}) in \textbf{Lemma}~\ref{le:ineq0} into the above inequality, we have
    \begin{equation}
      \frac{\sum_{i \notin \lfloor \hat{\mathbf{x}}^*(t) \rfloor} x_i(t)}{\sum_{i\in \mathcal N} x_i(t)} \leq \left( \frac{c_{max}}{c_{min}} \right)^2 \cdot \frac{1}{T-t+1}
    \end{equation}
    Then, the upper-bound of $\mathbb P( i(t) \in \mathcal{N}^{-}_Q, i(t) \notin \lfloor \hat{\mathbf{x}}^*(t) \rfloor \mid T)$ can be derived as follows
    \begin{align}
      & \mathbb P( i(t) \in \mathcal{N}^{-}_Q, i(t) \notin \lfloor \hat{\mathbf{x}}^*(t) \rfloor \mid T) \nonumber\\
      \leq& \mathbb P(i(t) \notin \lfloor \hat{\mathbf{x}}^*(t) \rfloor \mid T) \nonumber\\
      =& \sum_{\{x_i(t)\}^N_{i=1}} \mathbb P(i(t) \notin \lfloor \hat{\mathbf{x}}^*(t) \rfloor \mid  \{x_i(t)\}^N_{i=1}, T) \cdot \mathbb P(\{x_i(t)\}^N_{i=1}) \nonumber\\
      \leq& \sum_{\{x_i(t)\}^N_{i=1}}  \left( \frac{c_{max}}{c_{min}} \right)^2 \cdot \frac{1}{T-t+1} \cdot \mathbb P(\{x_i(t)\}^N_{i=1}) \nonumber\\
      =& \left({c_{max}}/{c_{min}}\right)^2/{(T-t+1)} 
    \end{align}
    By substituting which into (\ref{eq:decom}), we complete the proof.
  \end{proof}

  \begin{lemma} \label{le:upbdy}
    For $\forall Q\in \Omega$, let $Y_Q(t) = \sum^t_{t'=1} \mathbb I(i(t')\in \mathcal N^-_Q)$ denote the number of times the workers in $\mathcal N^-_Q$ is selected by our CAWS algorithm up to the $t$-th iteration. We then have
    %
    \begin{eqnarray} \label{eq:upbdy1}
      \mathbb E_{\{i(t)\}^T_{t=1}}[Y_Q(T) \mid T]  \leq \xi \ln T + \frac{\pi^2}{3} + 1
    \end{eqnarray}
  \end{lemma}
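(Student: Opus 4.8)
The plan is to follow the classical UCB1 regret argument of Auer et al., adapted to the density-ordered/context-hypercube setting, treating $Y_Q(T)$ as the ``number of pulls of a suboptimal arm.'' First I would write $\mathbb E_{\{i(t)\}^T_{t=1}}[Y_Q(T)\mid T]=\sum_{t=1}^T \mathbb P(i(t)\in\mathcal N^-_Q\mid T)$ and immediately apply \textbf{Lemma}~\ref{le:upbdj} to split each summand into the ``estimated-selection'' probability $\mathbb P(i(t)\in\lfloor\hat{\mathbf x}^*(t)\rfloor,\,i(t)\in\mathcal N^-_Q\mid T)$ plus the tail term $(c_{max}/c_{min})^2/(T-t+1)$. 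Summing the tail over $t$ gives $(c_{max}/c_{min})^2\sum_{t=1}^T\frac{1}{T-t+1}\le (c_{max}/c_{min})^2\ln T$ (up to the harmonic-series constant), which supplies exactly the $(c_{max}/c_{min})^2\ln T$ piece of $\xi\ln T$.

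The core work is bounding $\sum_{t=1}^T \mathbb P(i(t)\in\lfloor\hat{\mathbf x}^*(t)\rfloor,\,i(t)\in\mathcal N^-_Q\mid T)$. Here I would invoke \textbf{Lemma}~\ref{le:swap}: whenever a bad worker $j\in\mathcal N^-_Q$ enters the estimated solution $\lfloor\hat{\mathbf x}^*(t)\rfloor$, there is a good worker $j'\in\lfloor\tilde{\mathbf x}^*\rfloor$ with the true densities correctly ordered, $\mu_{Q_j}/c_j\le\mu_{Q_{j'}}/c_{j'}$, yet the UCB densities reversed, as in (\ref{eq:swapineq2}). Exactly as in UCB1, such a reversal forces at least one of three events: the empirical mean $\bar r_{Q_j}$ exceeds $\mu_{Q_j}$ by more than its confidence radius $\sqrt{2\log t/\lambda_{Q_j}(t-1)}$; the empirical mean $\bar r_{Q_{j'}}$ falls below $\mu_{Q_{j'}}$ by more than its own radius; or the gap is as yet unresolved, $\frac{2}{c_j}\sqrt{2\log t/\lambda_{Q}(t-1)}>\mu_{Q_{j'}}/c_{j'}-\mu_{Q_j}/c_j$.

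I would dispatch the three events separately. The gap-resolution event cannot occur once $\lambda_Q(t-1)\ge \ell:=\bigl\lceil 8\log T/(c_{min}^2\delta^2_{min})\bigr\rceil$, because $c_j\ge c_{min}$ and $\mu_{Q_{j'}}/c_{j'}-\mu_{Q_j}/c_j\ge\delta_{min}$ by the definition of $\delta_{min}$; since every selection of a worker in $\mathcal N^-_Q$ increments $\lambda_Q$, at most $\ell$ such selections can occur while $\lambda_Q(t-1)<\ell$, contributing the $\frac{8}{c_{min}^2\delta^2_{min}}\ln T + 1$ term (the $+1$ coming from the ceiling). For the two deviation events I would apply Hoeffding's inequality at each fixed value of the sample count, obtaining a $t^{-4}$ bound, and then union-bound over the (unknown) sample counts $s=\lambda_{Q}(t-1)$ and $s'=\lambda_{Q_{j'}}(t-1)$ in $\{1,\dots,t\}$ and over iterations $t$; the resulting double sum $\sum_{t}\sum_{s,s'}2t^{-4}\le\sum_t 2t^{-2}$ is bounded by $\pi^2/3$. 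Adding the three contributions and the cost-ratio tail yields $\frac{8}{c_{min}^2\delta^2_{min}}\ln T+(c_{max}/c_{min})^2\ln T+\frac{\pi^2}{3}+1=\xi\ln T+\frac{\pi^2}{3}+1$.

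The main obstacle I anticipate is the bookkeeping peculiar to the hypercube structure: learning about $Q$ is driven by the \emph{hypercube} counter $\lambda_Q$, which is incremented by \emph{every} selection in $Q$ (good or bad) rather than by $Y_Q$ alone, so the ``first $\ell$ pulls are free'' step must be phrased in terms of $\lambda_Q$ reaching $\ell$ while still charging only bad selections to $Y_Q$. A secondary delicate point is that $\bar r_Q$ averages rewards of heterogeneous workers inside $Q$ whose means differ from $\mu_Q$ by up to $\Delta$ (\textbf{Lemma}~\ref{le:errincube}); I would absorb this $\Delta$-bias into the confidence analysis via the H\"older bound so that the concentration is effectively centered at $\mu_Q$, keeping $\delta_{min}$ as the operative gap.
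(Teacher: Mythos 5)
Your proposal follows essentially the same route as the paper's proof: the same initial split via \textbf{Lemma}~\ref{le:upbdj} with the harmonic tail summing to $(c_{max}/c_{min})^2\ln T$, the same reduction via \textbf{Lemma}~\ref{le:swap} to a UCB-density reversal between $\mathcal N^-_Q$ and $\mathcal N^+_{Q'}$, the same three-event decomposition with $\ell=\left\lceil 8\ln T/(c^2_{min}\delta^2_{min})\right\rceil$ eliminating the gap event, Chernoff--Hoeffding at rate $t^{-4}$, and the triple union bound yielding the $\pi^2/3$ term. If anything, your closing remark about the $\Delta$-bias of $\bar r_Q$ is more careful than the paper, which applies the concentration bound as if $\bar r_Q$ were centered exactly at $\mu_Q$ without comment.
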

  %
  \begin{proof}
    According to \textbf{Lemma}~\ref{le:upbdj}, $\mathbb E_{\{i(t)\}^T_{t=1}}[Y_Q(T) \mid T]$ can be written as
  \begin{align}
    & \mathbb E_{\{i(t)\}^T_{t=1}}[Y_Q(T) \mid T] \nonumber\\
    =& 1 + \sum^T_{t=d^M+1} \mathbb P(i(t)\in \mathcal N^-_Q \mid T) \nonumber\\
    \leq& 1 + \hspace{-2ex}\sum^T_{t=d^M+1} \hspace{-1ex}\left( \mathbb P(i(t) \in \lfloor \hat{\mathbf{x}}^*(t) \rfloor, i(t) \in \mathcal{N}^{-}_Q \mid T ) + \frac{\left(\frac{c_{max}}{c_{min}}\right)^2}{T-t+1} \right)  \nonumber\\
    \leq& 1 + \hspace{-2ex}\sum^T_{t=d^M+1} \hspace{-2ex}\mathbb P(i(t) \in \lfloor \hat{\mathbf{x}}^*(t) \rfloor, i(t) \in \mathcal{N}^{-}_Q \mid T ) + \left(\frac{c_{max}}{c_{min}}\right)^2 \ln T  \nonumber\\
    \leq& 1 + \hspace{-2ex}\sum^T_{t=d^M+1} \hspace{-2ex}\mathbb P(i(t) \in \lfloor \hat{\mathbf{x}}^*(t) \rfloor, i(t) \notin \lfloor \tilde{\mathbf{x}}^* \rfloor \mid T )   + \left(\frac{c_{max}}{c_{min}}\right)^2 \ln T  
  \end{align}

  We then derive the bound of the sum of the first two items by considering \textbf{Lemma}~\ref{le:swap} as follows.
  \begin{align}
    & 1 + \sum^T_{t=d^M+1} \mathbb P(i(t) \in \lfloor \hat{\mathbf{x}}^*(t) \rfloor, i(t) \notin \lfloor \tilde{\mathbf{x}}^* \rfloor \mid T ) \nonumber\\
    \leq& 1 + \sum^T_{t=d^M+1} \mathbb P \left( \frac{\bar r_Q(t-1) + \sqrt{\frac{2\log t}{\lambda_Q(t-1)}}}{c_{min}(\mathcal N^-_Q)} \geq \frac{\bar r_{Q'}(t-1) + \sqrt{\frac{2\log t}{\lambda_{Q'}(t-1)}}}{c_{max}(\mathcal N^+_{Q'})} ~\Bigg|~ T \right) \nonumber\\
    \leq& \ell + \sum^T_{t=d^M+1} \mathbb P \left( {\frac{\bar r_Q(t-1) + \sqrt{\frac{2\log t}{\lambda_Q(t-1)}}}{c_{min}(\mathcal N^-_Q)} \geq \frac{\bar r_{Q'}(t-1) + \sqrt{\frac{2\log t}{\lambda_{Q'}(t-1)}}}{c_{max}(\mathcal N^+_{Q'})},  \lambda_Q(t-1) \geq \ell} ~\Bigg|~ T \right) \nonumber\\
    \leq& \ell + \sum^T_{t=d^M+1} \mathbb P \left( {{ \max_{\ell \leq s_Q < t} \frac{\bar r_Q(t-1) + \sqrt{\frac{2\log t}{s_Q}}}{c_{min}(\mathcal N^-_Q)} \geq \min_{1 \leq s_{Q'} < t} \frac{\bar r_{Q'}(t-1) + \sqrt{\frac{2\log t}{s_{Q'}}}}{c_{max}(\mathcal N^+_{Q'})}} ~\Bigg|~ T} \right) \nonumber\\
    \leq& \ell + \sum^T_{t=1} \sum^{t-1}_{s_{Q'}=1} \sum^{t-1}_{s_{Q}=\ell} \mathbb P \left( \frac{\bar r_Q(t-1) + \sqrt{\frac{2\log t}{s_Q}}}{c_{min}(\mathcal N^-_Q)} \geq \frac{\bar r_{Q'}(t-1) + \sqrt{\frac{2\log t}{s_{Q'}}}}{c_{max}(\mathcal N^+_{Q'})} ~\Bigg|~ T \right) 
  \vspace{2ex}
  \end{align}
  If it holds that $\frac{\bar r_Q(t-1) + \sqrt{\frac{2\log t}{s_Q}}}{c_{min}(\mathcal N^-_Q)} \geq \frac{\bar r_{Q'}(t-1) + \sqrt{\frac{2\log t}{s_{Q'}}}}{c_{max}(\mathcal N^+_{Q'})}$, then at least one of the following three event must happen
  \begin{align}
    & \mathsf{Event1}: ~ \frac{\bar r_Q(t-1)}{c_{min}(\mathcal N^-_Q)} - \frac{\sqrt{\frac{2 \log t}{s_Q}}}{c_{min}(\mathcal N^-_Q)} \geq \frac{\mu_Q}{c_{min}(\mathcal N^-_Q)} \label{eq:event1}\\
    & \mathsf{Event2}: ~ \frac{\bar r_{Q'}(t-1)}{c_{max}(\mathcal N^+_{Q'})} - \frac{\sqrt{\frac{2 \log t}{s_{Q'}}}}{c_{max}(\mathcal N^+_{Q'})} \leq \frac{\mu_{Q'}}{c_{max}(\mathcal N^+_{Q'})} \label{eq:event2}\\
    & \mathsf{Event3}: ~ \frac{\mu_{Q'}}{c_{max}(\mathcal N^+_{Q'})} - \frac{\mu_{Q}}{c_{min}(\mathcal N^-_Q)} \leq \frac{2\sqrt{\frac{2\log t}{s_Q}}}{c_{min}(\mathcal N^-_Q)} \label{eq:event3}
  \end{align}

  By applying the Chernoff-Hoeffding inequalities \cite{DubhashiP-book09}, we have
  \begin{align*}
    \mathbb P(\mathsf{Event1}) = \mathbb P \left( \bar r_Q(t-1)- \sqrt{\frac{2 \log t}{s_Q}} \geq \mu_Q \right) \leq t^{-4}
  \end{align*}
  Similarly,
  \begin{align*}
    \mathbb P(\mathsf{Event2}) = \mathbb P \left( \bar r_{Q'}(t-1) - \sqrt{\frac{2 \log t}{s_{Q'}}} \leq \mu_{Q'} \right) \leq t^{-4}
  \end{align*}
  When $\ell = \left\lceil \frac{8\ln T}{c^2_{min} \delta^2_{min}} \right\rceil$, for any $s_Q = \ell, \ell+1, \cdots, t-1$, we have
  \begin{equation*}
    \frac{\mu_{Q'}}{c_{max}(\mathcal N^+_{Q'})} - \frac{\mu_{Q}}{c_{min}(\mathcal N^-_Q)} > \frac{2\sqrt{\frac{2\log t}{s_Q}}}{c_{min}(\mathcal N^-_Q)}
  \end{equation*}
  and thus $\mathbb P(\mathsf{Event3}) = 0$. Combining the above inequalities, we have
  \begin{align}
    & \mathbb E_{\{i(t)\}^T_{t=1}}[Y_Q(T) \mid T] \nonumber\\
    \leq& 1 + \hspace{-1ex} \sum^T_{t=d^M} \hspace{-1ex} \mathbb P(i(t) \in \lfloor \hat{\mathbf{x}}^*(t) \rfloor, i(t) \notin \lfloor \tilde{\mathbf{x}}^* \rfloor \mid T )   + \left(\frac{c_{max}}{c_{min}}\right)^2 \ln T  \nonumber\\
    \leq& \ell + \sum^T_{t=1} \sum^{t-1}_{s_{Q'}=1} \sum^{t-1}_{s_{Q}=\ell} \mathbb P \left(  \frac{\bar r_Q(t-1) + \sqrt{\frac{2\log t}{s_Q}}}{c_{min}(\mathcal N^-_Q)} \geq \frac{\bar r_{Q'}(t-1) + \sqrt{\frac{2\log t}{s_{Q'}}}}{c_{max}(\mathcal N^+_{Q'})} ~\Bigg|~ T \right) \nonumber\\
    \leq& \ell + \sum^T_{t=1} \sum^{t-1}_{s_{Q'}=1} \sum^{t-1}_{s_{Q}=\ell} \left( \mathbb P(\mathsf{Event1}) + \mathbb P(\mathsf{Event2}) + \mathbb P(\mathsf{Event3}) \right) \nonumber\\
    \leq& \left\lceil \frac{8\ln T}{c^2_{min} \delta^2_{min}} \right\rceil + \sum^T_{t=1} \sum^{t-1}_{s_{Q'}=1} \sum^{t-1}_{s_{Q}=\ell} 2t^{-4}  + \left( \frac{c_{max}}{c_{min}} \right)^2 \ln T + \frac{\pi^2}{3} + 1 \nonumber\\
    \leq& \xi \ln T + \frac{\pi^2}{3} + 1 \nonumber
  \end{align}
  where $\xi$ is defined in (25).
  \end{proof}

  Now, we are ready to prove the inequalities (\ref{eq:bdt}) and (\ref{eq:reg22}). Our CAWS algorithm proceeds until we have no more residual budget to select any workers such that
  \begin{equation}
    \mathbb P \left( \sum^T_{t=1} c_{i(t)} \geq B - c_{max} \right)=1
  \end{equation}
  therefore, we have
  \begin{align}
    & B-c_{max} \nonumber\\
    \leq& \mathbb E_{T, \{i(t)\}^T_{t=1}}\left[\sum^T_{t=1} c_{i(t)}\right] \nonumber\\
    =& \mathbb E_{T}\left[\sum^T_{t=1} \sum_{Q \in \Omega} \sum_{j \in \mathcal N_Q} c_j \mathbb P(i(t)=j \mid T) \right] \nonumber\\
    =& \mathbb E_{T}\left[\sum^T_{t=1} \left( c_{i^*} + \sum_{Q \in \Omega} \sum_{j \in \mathcal N_Q} (c_j - c_{i^*}) \mathbb P(i(t)=j \mid T) \right) \right] \nonumber\\
    \leq& \mathbb E_{T}\Bigg[ \sum^T_{t=1} \sum_{Q \in \Omega} \Bigg( \sum_{j \in \mathcal N^-_Q} (c_j - c_{i^*}) \mathbb P(i(t)=j \mid T) + \sum_{j \in \mathcal N^+_Q} (c_j - c_{i^*}) \mathbb P(i(t)=j \mid T) \Bigg) \Bigg] + \mathbb E_{T}[T] c_{i^*}  \nonumber\\
    \leq& \mathbb E_{T}\Bigg[ \sum^T_{t=1} \sum_{Q \in \Omega} \Bigg( \sum_{j \in \mathcal N^-_Q} (c_{max}(\mathcal N^-_Q) - c_{i^*}) \mathbb P(i(t)=j \mid T) + \sum_{j \in \lfloor \tilde{\mathbf x}^* \rfloor} (c_j - c_{i^*}) \mathbb P(i(t)=j \mid T)\Bigg) \Bigg] + \mathbb E_{T}[T] c_{i^*}  \nonumber\\
    \leq& \mathbb E_{T}[T] c_{i^*} + \mathbb E_T \left[ \sum_{j \in \lfloor \tilde{\mathbf x}^* \rfloor} (c_j - c_{i^*}) \mathbb E_{\{i(t)\}^T_{i=1}}[x_j \mid T]  \right]  \nonumber\\
    &+ \mathbb E_{T}\left[ \sum_{Q: c_{max}(\mathcal N^-_Q) > c_{i^*}} \left( c_{max}(\mathcal N^-_Q \right) - c_{i^*}) \mathbb E_{\{i(t)\}^T_{i=1}}[Y_Q(T) \mid T] \right]
  \end{align}
  and thus
  \begin{align}
    &\mathbb E_T[T] \nonumber\\
    \geq& \frac{B-c_{max}}{c_{i^*}} - \mathbb E_T \left[ \sum_{j \in \lfloor \tilde{\mathbf x}^* \rfloor} \frac{c_j - c_{i^*}}{c_{i^*}} \mathbb E[x_j \mid T]  \right]  \nonumber\\
    & - \mathbb E_{T} \left[ \sum_{Q: c_{max}(\mathcal N^-_Q) > c_{i^*}} \frac{c_{max}(\mathcal N^-_Q) - c_{i^*}}{c_{i^*}} \mathbb E_{\{i(t)\}^T_{i=1}}[Y_Q(T) \mid T] \right]
  \end{align}
  The validity of (\ref{eq:bdt}) can be proved by substituting (\ref{eq:upbdy1}) (see \textbf{Lemma}~\ref{le:upbdy}) into the second item on the right side of the above inequality such that
  \begin{align}
    & \mathbb E_{T}\left[ \sum_{Q: c_{max}(\mathcal N^-_Q) > c_{i^*}} \hspace{-3ex} \frac{c_{max}(\mathcal N^-_Q) - c_{i^*}}{c_{i^*}} \mathbb E_{\{i(t)\}^T_{i=1}}[Y_Q(T) \mid T] \right] \nonumber\\
    \leq& \sum_{Q : c_{max}(\mathcal N^-_Q) > c_{i^*}} \hspace{-3ex}  \frac{c_{max}(\mathcal N^-_Q) - c_{i^*}}{c_{i^*}} \left( \xi \cdot \mathbb E_T [\ln T] + \frac{\pi^2}{3} + 1 \right) \nonumber\\
    \leq& \sum_{Q : c_{max}(\mathcal N^-_Q) > c_{i^*}} \hspace{-3ex}  \frac{c_{max}(\mathcal N^-_Q) - c_{i^*}}{c_{i^*}} \cdot h(\ln B)
  \end{align}
  where the second inequality holds since $T \leq \frac{B}{c_{min}}$.

  We finally prove that the inequality (\ref{eq:reg22}) hold by the derivation as follows
  \begin{eqnarray} 
    && \mathbb E_T \left[ T\mu_{Q_{i^*}} - \sum_{Q \in \Omega} \sum_{i \in \mathcal N_{Q}} \mu_Q \mathbb E_{\{i(t)\}^T_{t=1}}[x_i\mid T] \right] \nonumber\\
    &=& \mathbb E_T \left[  \sum^T_{t=1} \sum_{Q \in \Omega} (\mu_{Q_{i^*}} - \mu_Q) \mathbb P( i(t) \in \mathcal N_Q \mid T)  \right] \nonumber\\
    &=& \mathbb E_T \bigg[ \sum^T_{t=1} \sum_{Q \in \Omega} (\mu_{Q_{i^*}} - \mu_Q) \mathbb P( i(t) \in \mathcal N^-_Q \mid T) - \sum^T_{t=1} \sum_{Q \in \Omega} (\mu_{Q_{i^*}} - \mu_Q) \mathbb P( i(t) \in \mathcal N^+_Q \mid T)  \bigg] \nonumber\\
    &=& \mathbb E_T \bigg[ \sum^T_{t=1} \sum_{Q \in \Omega} (\mu_{Q_{i^*}} - \mu_Q) \mathbb P( i(t) \in \mathcal N^-_Q  \mid T)  - \sum^T_{t=1} \sum_{j \in \lfloor \tilde{\mathbf x}^* \rfloor} (\mu_{Q_{i^*}}-\mu_{Q_j}) \mathbb P( i(t) =j \mid T )  \bigg] \nonumber\\
    &\leq& \mathbb E_T \bigg[ \sum^T_{t=1} \sum_{Q: \mu_{Q_{i^*}} > \mu_Q} (\mu_{Q_{i^*}} - \mu_Q) \mathbb P( i(t) \in \mathcal N^-_Q \mid T)  - \sum^T_{t=1} \sum_{j \in \lfloor \tilde{\mathbf x}^* \rfloor} (\mu_{Q_{i^*}}-\mu_{Q_j}) \mathbb P( i(t) =j \mid T )  \bigg] \nonumber\\
    &=& \mathbb E_T \left[ \sum_{Q: \mu_{Q_{i^*}} > \mu_Q} (\mu_{Q_{i^*}} - \mu_Q) \mathbb E_{\{i(t)\}^T_{t=1}}[Y_Q(T) \mid T] \right] -  \mathbb E_T \left[ \sum_{j \in \lfloor \tilde{\mathbf x}^* \rfloor} (\mu_{Q_{i^*}}-\mu_{Q_j}) \mathbb E_{\{i(t)\}^T_{t=1}}[x_j \mid T] \right]  \nonumber\\
    &\leq& \sum_{Q: \mu_{Q_{i^*}} > \mu_Q} (\mu_{Q_{i^*}} - \mu_Q)  h(\ln B)  -  \mathbb E_T \left[ \sum_{j \in \lfloor \tilde{\mathbf x}^* \rfloor} (\mu_{Q_{i^*}}-\mu_{Q_j}) \mathbb E_{\{i(t)\}^T_{t=1}}[x_j \mid T] \right]  
  \end{eqnarray}
  where we have last inequality by considering \textbf{Lemma}~\ref{le:upbdy} again.

\end{appendices}

\end{document}